\newcommand*{\textlabel}[2]{%
  \edef\@currentlabel{#1}
  \phantomsection
  #1\label{#2}
}
\newcounter{OS}
\def\semicolon{;}
\def\applytolist#1{
    \expandafter\def\csname multi#1\endcsname##1{
        \def\multiack{##1}\ifx\multiack\semicolon
            \def\next{\relax}
        \else
            \csname #1\endcsname{##1}
            \def\next{\csname multi#1\endcsname}
        \fi
        \next}
    \csname multi#1\endcsname}
\def\calc#1{\expandafter\def\csname c#1\endcsname{{\mathcal #1}}}
\def\bbc#1{\expandafter\def\csname bb#1\endcsname{{\mathbb #1}}}
\def\bfc#1{\expandafter\def\csname bf#1\endcsname{{\mathbf #1}}}
\def\sfc#1{\expandafter\def\csname s#1\endcsname{{\sf #1}}}
\def\rfc#1{\expandafter\def\csname r#1\endcsname{{\mathrm #1}}}
\def\frfc#1{\expandafter\def\csname fr#1\endcsname{{\mathfrak #1}}}
\def\scfc#1{\expandafter\def\csname sc#1\endcsname{{\mathscr #1}}}
\DeclareMathOperator{\id}{id}
\DeclareMathOperator{\supp}{supp}
\DeclareMathOperator{\End}{End}
\DeclareMathOperator{\CP}{\bbC\bbP^1}
\DeclareMathOperator{\euclidean}{\rE_\mathrm{e}(2)}
\DeclareMathOperator{\PSLtwoC}{\mathrm{PSL}_2(\bbC)}
\def\h{\bar{h}}
\def\lat{{\mathrm{lat}}}
\def\z{\bar{z}}
\def\p{\bar{p}}
\def\ze{\zeta}
\def\zee{\bar{\zeta}}
\def\la{\lambda}
\def\si{\sigma}
\def\uz{\underline{z}}
\def\uze{\underline{\zeta}}
\def\uh{\underline{h}}
\def\Ld{\bar{L}}
\def\D{\bar{D}}
\def\al{\alpha}
\def\be{\beta}
\def\ep{\epsilon}
\def\ga{\gamma}
\def\vac{{\bf 1}}
\def\tw{{{I\hspace{-.1em}I}_{1,1}}}
\DeclareMathOperator{\om}{{\omega}}
\def\bb1{\mathbbm{1}}
\def\<{\langle}
\def\>{\rangle}
\newtheorem{theorem}{Theorem}[section]
\newtheorem{corollary}[theorem]{Corollary}
\newtheorem{proposition}[theorem]{Proposition}
\newtheorem{lemma}[theorem]{Lemma}
\theoremstyle{remark}
\newtheorem{remark}[theorem]{Remark}
\title{Osterwalder-Schrader axioms for unitary full vertex operator algebras}
\author[1]{Maria Stella Adamo\thanks{{\tt maria.stella.adamo@fau.de}}}
\author[2]{Yuto Moriwaki \thanks{{\tt moriwaki.yuto@gmail.com}}}
\author[3]{Yoh Tanimoto\thanks{{\tt hoyt@mat.uniroma2.it}}}
\affil[1]{Department Mathematik, Friedrich-Alexander-Universit\"at Erlangen-N\"urnberg \authorcr
Cauerstrasse 11, 91058 Erlangen, Germany}
\affil[2]{Interdisciplinary Theoretical and Mathematical Science Program (iTHEMS) \authorcr
Wako, Saitama 351-0198, Japan}
\affil[3]{Dipartimento di Matematica, Universit\`a di Roma Tor Vergata,\authorcr
   Via della Ricerca Scientifica 1, I-00133 Roma, Italy}
\date{}
\begin{document}
\maketitle
\begin{abstract}
 Full Vertex Operator Algebras (full VOA) are extensions of two commuting Vertex Operator Algebras,
 introduced to formulate compact two-dimensional conformal field theory.
 We define unitarity, polynomial energy bounds and polynomial spectral density for full VOA.
 Under these conditions and local $C_1$-cofiniteness of the simple full VOA, 
 we show that the correlation functions of quasi-primary fields define tempered distributions
 and satisfy a conformal version of the Osterwalder-Schrader axioms,
 including the linear growth condition.
 
 As an example, we show that a family of full extensions of the Heisenberg VOA satisfies all these assumptions.
\end{abstract}

\section{Introduction}
Two-dimensional conformal field theory (2d CFT) \cite{BPZ84} has been a very fruitful playground of mathematics (infinite-dimensional Lie groups, modular form, 
subfactors, knot invariants, quantum groups...) and physics (critical phenomena, string theory, integrable system, fixed point of renormalization group...).
Thanks to its large conformal symmetry in the two-dimensional space, it can be put into powerful frameworks
such as vertex operator algebras (VOAs) \cite{Borcherds86,FLM88VertexOperatorAlgebras} or conformal nets \cite{CKLW18} and interesting structural results have been obtained.
Even more categorical/functorial frameworks such as Segal's axioms have been also proposed \cite{Segal04, Huang97Two-dimensionalConformalGeometry}.

On the other hand, 2d CFTs should be placed in a wider context of quantum field theory (QFT) \cite{GJ87}, including
massive models and higher space(time) dimensions, because many interesting relations between CFTs and massive models
are expected (cf.\! renormalization, perturbation...).
Frameworks specific to 2d CFTs are not suitable to discuss such relations.
Fortunately, there are more general frameworks for QFT: the G{\aa}rding-Wightman axioms \cite{SW00}, the Osterwalder-Schrader axioms \cite{OS73, OS75},
and the Araki-Haag-Kaster axioms \cite{Haag96, Araki99}, where many examples of QFT (mostly massive in low dimensions)
have been constructed \cite{Summers12}.
It is therefore an important question whether 2d CFTs fit into such general
frameworks. In recent years, better understanding has been obtained for chiral components of 2d CFTs
(see e.g.\! \cite{CKLW18, RTT22}), yet the chiral components contain only partial information of the complete,
so-called full CFTs. In addition, algebraic and functorial approaches are often concerned with
Euclidean geometry, while analytic approaches such as conformal nets on $S^1$ inherits the local structure
of the Lorentzian geometry. 
As recently pointed out \cite[Introduction]{FSWY23AlgebraicStructures},
the relation between Euclidean and Lorentzian approaches has remained to be understood in full detail.
For a closely related question,
a precisely statement has not been known \cite[Section 1.2]{KR10AlgebraicStructures}.

In this work, we fill in this gap: we consider an algebraic framework for 2d full CFT
and define unitarity. Under technical (but often satisfied in examples) conditions,
we prove that the correlation functions of quasi-primary fields satisfy a conformal extension of the Osterwalder-Schrader axioms.
This shows that unitary 2d CFTs are indeed special cases of Euclidian QFT and opens the possibility to study their relations in a single framework.

Quantum fields as operator-valued distributions on a Hilbert space are formulated as the G{\aa}rding-Wightman axioms \cite{SW00}.
By the Wightman reconstruction theorem, they are equivalent to the Wightman axioms of correlation functions satisfying
the so-called Wightman axioms.
Osterwalder and Schrader axiomatized some properties of the analytic continuations of Wightman correlation functions
to Euclidean points and showed that the original Wightman correlation functions can be reconstructed from
their Euclidean counterparts \cite{OS73, OS75}. We note that there are versions of the Osterwalder-Schrader (OS) axioms
regarding the regularity condition. It turned out that a regularity condition that gives an equivalence
between the OS axioms and Wightman axioms is difficult to check in examples (see e.g.\! \cite[\S II.4]{Simon74}),
while there is a sufficient condition for the reconstruction, called the linear growth condition (see \cite[Section 9]{KQR21DistributionsII}
for a very readable account of it). As our purpose is to put 2d CFT in a general framework for QFT,
we will prove a variant of OS axioms that include conformal invariance and linear growth.

A mathematical formulation of 2d Euclidean CFT was made by Huang and Kong \cite{HK07FullFieldAlgebras}.
They introduced the notion of full field algebra and investigated rational CFTs. 
In this paper, we use the notion of full VOA \cite{Moriwaki21}, which introduced to investigate irrational CFTs based on the bootstrap hypothesis in physics \cite{Polyakov74Non-Hamiltonian}.
It has already been shown that from a full VOA, under local $C_1$-cofiniteness, correlation functions can be defined and are real analytic
on $\bbR^2$ and symmetric under permutations \cite{Moriwaki22Vertex, Moriwaki24SwissCheese}, and many such irrational CFTs have been constructed \cite{Moriwaki21Code,Moriwaki22WZW}.

In this paper, we introduce the notion of unitarity for full VOA, as a natural generalization of \cite{DL14Unitary} in order to link algebraic and analytical approaches.
We show that unitarity implies reflection positivity of \cite{OS73}
and clustering of the correlation functions follow essentially from the uniqueness of the vacuum, which turn out to be equivalent to the simplicity of unitary full VOA.

The most technical part of the OS axioms is the linear growth condition.
For this, we assume a certain bound on quasi-primary fields in the full VOA analogous to polynomial energy bounds in \cite{CKLW18}
and bounds on the density of the spectrum. These conditions are often easy to check in examples.
We show that polynomial energy bounds and polynomial spectral density imply the linear growth condition,
and in particular, that the correlation functions define tempered distributions.
We also show that the conformal invariance of correlation functions in a sense similar to \cite{LM75} follows from the Virasoro symmetry of a full VOA.
In this way, we fully prove the conformal version of the OS axioms that enables the reconstruction of Wightman fields.

We expect that our assumptions are satisfied by virtually all compact unitary 2d CFTs.
In order to verify that our definition of unitarity is the right one and our bounds are not too strict,
we exhibit the case of extensions of the Heisenberg algebra, also known as the Narain CFTs \cite{Narain86NewHeterotic},
which are constructed as a deformation family of full VOAs in \cite{Moriwaki21}.
We show that those VOAs satisfy unitarity, polynomial energy bounds and polynomial spectral density, showing that
all quasi-primary fields there correspond to Wightman fields.
Note that previously we have directly constructed Wightman fields for 2d extensions of the $\rU(1)$-current (the Heisenberg algebra with rank $1$) \cite{AGT23Pointed}, and comparing these is a future work.

Our results not only clarify the relations between various approaches to 2d CFT, but put it in a larger context of
general QFT that can host further developments. For example, it is conjectured that certain massive integrable models
can be obtained by deforming CFTs \cite{Zamolodchikov89-1}. One can study such conjectures by putting both models
in a common ground. Although some attempts have been made in that direction in specific cases \cite{JT23Towards},
the Euclidean approach should give much more powerful tools.
Another interesting question concerns an extension of conformal nets of operator algebras to Riemann surfaces.
Our work makes it evident that such an extension must be done in the Euclidean geometry,
extending \cite{Schlingemann99Euclidean}. For this, it would be crucial to have a functional integral measure \cite{GJ87}
that gives a CFT, cf.\! \cite{GKR24ProbabilisticLiouville}.

This paper is organized as follows.
In Section \ref{preliminaries} we present our setting.
After briefly reviewing vector-valued formal series in Section \ref{notations},
we recall the definition of full VOA in Section \ref{full}.
Here we introduce our main assumptions: unitarity, polynomial energy bounds, polynomial spectral density and local $C_1$-cofiniteness.
Under local $C_1$-cofiniteness, one can define correlation functions $S^{\pmb{a}}_n$ on the Riemann sphere $\CP$. We recall that they are symmetric.
After recalling the action of $\PSLtwoC$ on $\CP$ in Section \ref{riemann},
we show that the correlation functions satisfy global conformal invariance in Section \ref{global}.
We then recall the OS axioms in Section \ref{OS}.
The proof of OS axioms is given in Section \ref{proofOS}:
In Section \ref{invariance} we define linear functionals $S^{\pmb{a}}_n$ (denoted by the same symbol)
on compactly supported functions vanishing in a neighbourhood of the set of coinciding points
and prove that it is invariant under $\PSLtwoC$ locally.
In Section \ref{LG} we show that $S^{\pmb{a}}_n$ are bounded by a Schwartz norm under
polynomial energy bounds and polynomial spectral density, and they satisfy the linear growth condition.
In particular, $S^{\pmb{a}}_n$ are tempered distributions.
We show that $\{S^{\pmb{a}}_n\}$ satisfy reflection positivity under unitarity in Section \ref{RP}
clustering under uniqueness of vacuum in Section \ref{clustering}.
We exhibit some examples satisfying all our assumptions in Section \ref{examples}.
They are full VOAs that extend Heisenberg algebras as chiral components, known as Narain CFTs, parametrized by
a quotient of the orthogonal group.
We conclude the paper with some outlook in Section \ref{outlook}.

\section{Preliminaries}\label{preliminaries}
\subsection{Notations}\label{notations}
We assume that the base field of vector spaces is $\bbC$ unless otherwise stated. 
Throughout this paper, $z, \z, w, \bar w$ are independent formal variables,
while $\ze, \omega$ are complex numbers and $\zee, \bar \omega$ are their complex conjugate.

We will use the notation $\uz$ for the pair $(z,\z)$ and $|z|^2$ for $z\z$.
For a vector space $V$,
we denote by $V[[z^\bbR,\z^\bbR]]$ the set of formal sums 
$$\sum_{r,s \in \bbR} a_{r,s}z^{r} \z^{s},$$
and
by $V[[z,\z,|z|^\bbR]]$
the subspace of $V[[z^\bbR,\z^\bbR]]$
such that
\begin{itemize}
\item
$a_{r,s}=0$ unless $r-s \in \bbZ$.
\end{itemize}
We also denote by
$V((z,\z,|z|^\bbR))$ the subspace of $V[[z,\z,|z|^\bbR]]$ spanned by the series satisfying
\begin{itemize}
\item
There exists $N \in \bbR$ such that
$a_{r,s}=0$ unless $r,s \geq N$;
\item
For any $H\in \bbR$,
$$\{(r,s) \;|\; a_{r,s}\neq 0 \text{ and }r+s \leq H \}$$
is a finite set.
\end{itemize}

We use the roman $\ri$ for the imaginary unit, while the italic $i$ for a general index.
Similarly, the roman $\re$ stands for Napier's number, while the italic $e$ stands for other things.

\subsection{Full vertex operator algebras, unitarity and polynomial energy bounds}\label{full}

\subsubsection{Full vertex operator algebra}

Let $F=\bigoplus_{h,\h\in \bbR}F_{h,\h}$ be an $\bbR^2$-graded vector space
and $L(0),\Ld(0):F\rightarrow F$ linear maps defined by
$L(0)|_{F_{h,\h}}=h \id_{F_{h,\h}}$ and 
$\Ld(0)|_{F_{h,\h}}=\h \id_{F_{h,\h}}$
for any $h,\h\in \bbR$.
We assume that:
\begin{enumerate}
\item[FO1)]
$F_{h,\h}=0$ unless $h-\h \in \bbZ$;
\item[FO2)]
$F_{h,\h}=0$ unless $h\geq 0$ and $\h \geq 0$;
\item[FO3)]
For any $H\in \bbR$,
$\bigoplus_{h+\h \leq H}F_{h,\h}$ is finite-dimensional.
\end{enumerate}
Set 
\begin{align*}
F^\vee =\bigoplus_{h,\h\in\bbR} F_{h,\h}^*,
\end{align*}
where $F_{h,\h}^*$ is the dual vector space.

A  \textbf{full vertex operator} on $F$ is a linear map
\begin{align*}
Y(\bullet, z,\z):F \rightarrow \mathrm{End}(F)[[z,\z,|z|^\bbR]],\; a\mapsto Y(a,z,\z)=\sum_{r,s \in \bbR}a(r,s)z^{-r-1}\z^{-s-1}
\end{align*}
such that:
\begin{align}
\begin{split}
[L(0),Y(a,\uz)]&= \frac{d}{dz}Y(a,\uz) + Y(L(0)a,\uz),\\
[\Ld(0),Y(a,\uz)]&= \frac{d}{d\z}Y(a,\uz) + Y(\Ld(0)a,\uz).
\label{eq_L0_cov}
\end{split}
\end{align}
Then, by (FO1), (FO2) and (FO3), $Y(a,\uz)b \in F((z,\z,|z|^\bbR))$ (see \cite[Proposition 1.5]{Moriwaki21}).
It is possible to substitute the vector $a \in F$ by
a formal series $\sum_{r', s'} a'_{r',s'} (z')^{r'} (z')^{s'}$ with
\textit{another set of formal variables} $z', \z', |z'|^\bbR$.
Then the result is a formal series in $z,\z,|z|^\bbR, z',\z',|z'|^\bbR$.

By \eqref{eq_L0_cov} (for more detail, see \cite[Lemma 1.6]{Moriwaki21}), for $u \in F_{h_0,\h_0}^\vee$ and $a_i \in F_{h_i,\h_i}$ we have
\begin{align}
u(Y(a_1,\uz_1)Y(a_2,\uz_2)a_3) \in z_2^{h_0-h_1-h_2-h_3}\z_2^{\h_0-\h_1-\h_2-\h_3}\bbC\Bigl(\Bigl(\frac{z_2}{z_1},\frac{\z_2}{\z_1},\Bigl|\frac{z_2}{z_1}\Bigr|^\bbR\Bigr)\Bigr),\label{eq_conv_rad2}\\
u(Y(Y(a_1,\uz_0)a_2,\uz_2)a_3) \in z_2^{h_0-h_1-h_2-h_3}\z_2^{\h_0-\h_1-\h_2-\h_3}\bbC\Bigl(\Bigl(\frac{z_0}{z_2},\frac{\z_0}{\z_2},\Bigl|\frac{z_0}{z_2}\Bigr|^\bbR\Bigr)\Bigr),
\label{eq_conv_rad}
\end{align}
where the left-hand side of \eqref{eq_conv_rad} is a formal series in $z_0,\z_0,|z_0|^\bbR, z_2,\z_2,|z_2|^\bbR$
but contains only terms of the form $z_0^n z_2^{-n}, \z_0^m \z_2^{-m}, z^s\z_0^s \z_2^{-s}\z_2^{-s}$ up to the factor
in front.

A \textbf{full vertex algebra} is an $\bbR^2$-graded $\bbC$-vector space
$F=\bigoplus_{h,\h \in \bbR^2} F_{h,\h}$ equipped with a
full vertex operator $Y(\bullet,\uz):F \rightarrow \mathrm{End}(F)[[z^\pm,\z^\pm,|z|^\bbR]]$
and an element $\vac \in F_{0,0}$ satisfying the following conditions:
\begin{enumerate}
\item[FV1)]
For any $a \in F$, $Y(a,\uz)\vac \in F[[z,\z]]$ and $\displaystyle{\lim_{\uz \to 0}Y(a,\uz)\vac = a(-1,-1)\vac=a}$.
\item[FV2)]
$Y(\vac,\uz)=\mathrm{id} \in \End F$;
\item[FV3)]
For any $a_i \in F_{h_i,\h_i}$ and $u \in F_{h_0,\h_0}^*$, \eqref{eq_conv_rad2} and \eqref{eq_conv_rad} are absolutely convergent in $\{|\zeta_1|>|\zeta_2|\}$ and $\{|\zeta_0|<|\zeta_2|\}$, respectively, and there exists a real analytic function $\mu: Y_2(\bbC)\rightarrow \bbC$ such that:
\begin{align*}
u(Y(a,\uze_1)Y(b,\uze_2)c) &= \mu(\zeta_1,\zeta_2)|_{|\zeta_1|>|\zeta_2|}, \\
u(Y(Y(a,\uze_0)b,\uze_2)c) &= \mu(\zeta_0+\zeta_2,\zeta_2)|_{|\zeta_2|>|\zeta_0|},\\
u(Y(b,\uze_2)Y(a,\uze_1)c)&=\mu(\zeta_1,\zeta_2)|_{|\zeta_2|>|\zeta_1|}
\end{align*}
where $Y_2(\bbC)=\{(\zeta_1,\zeta_2)\in \bbC^2\mid \zeta_1\neq \zeta_2,\zeta_1\neq 0,\zeta_2\neq 0\}$.
\end{enumerate}

Let $F$ be a full vertex algebra
and $D$ and $\D$ denote the endomorphism of $F$
defined by $Da=a(-2,-1)\vac$ and $\D a=a(-1,-2)\vac$ for $a\in F$,
i.e., $$Y(a,z)\vac=a+Daz+\D a\z+\dots.$$

Then, we have (see \cite[Proposition 3.7, Lemma 3.11, Lemma 3.13]{Moriwaki21}):
\begin{proposition}
\label{prop_translation}
For $a \in F$, the following properties hold:
\begin{enumerate}
\item
$Y(Da,\uz)=\frac{d}{dz} Y(a,\uz)$ and $Y(\D a,\uz)=\frac{d}{d\z} Y(a,\uz)$;
\item
$D\vac=\D\vac=0$;
\item
$[D,\D]=0$;
\item
$Y(a,\uz)b=\exp(zD+\z\D)Y(b,-\uz)a$;
\item
$Y(\D a,\uz)=[\D,Y(a,\uz)]$ and $Y(Da,\uz)=[D,Y(a,\uz)]$.
\item
If $\D a=0$, then for any $n\in \bbZ$ and $b\in F$,
\begin{align*}
[a(n,-1),Y(b,\uz)]&= \sum_{j \geq 0} \binom{n}{j} Y(a(j,-1)b,\uz)z^{n-j},\\
Y(a(n,-1)b,\uz)&= 
\sum_{j \geq 0} \binom{n}{j}(-1)^j a(n-j,-1)z^{j}Y(b,\uz) \\
&\qquad -Y(b,\uz)\sum_{j \geq 0} \binom{n}{j}(-1)^{j+n} a(j,-1)z^{n-j}.
\end{align*}
\item
If $\D a =0$ and $D b=0$, then $[Y(a,\uz),Y(b,\uz)]=0$.
\end{enumerate}
\end{proposition}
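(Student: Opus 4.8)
The plan is to extract from (FV3) the two standard structural consequences of the axioms --- ``associativity'' and ``commutativity'' --- first as identities of the single real-analytic function $\mu$, then as identities of vertex operators, and to obtain the seven assertions by the manipulations familiar from ordinary VOAs, carried out with the extra antiholomorphic variable $\z$. Concretely, recognising both $u(Y(a,\uze_1)Y(b,\uze_2)c)$ (for $|\zeta_1|>|\zeta_2|$) and $u(Y(Y(a,\uze_0)b,\uze_2)c)$ (for $|\zeta_2|>|\zeta_0|$, with $\zeta_1$ set to $\zeta_0+\zeta_2$) as expansions of the same $\mu$ of (FV3), and likewise with the third ordering $u(Y(b,\uze_2)Y(a,\uze_1)c)$, gives the associativity relation $Y(Y(a,\uze_0)b,\uze_2)=Y(a,\underline{\zeta_0+\zeta_2})Y(b,\uze_2)$ and the commutativity relation as identities of formal series in the appropriate regimes. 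Part (2) is then immediate: by (FV2) the only nonzero mode of $\vac$ is $\vac(-1,-1)=\id$, so $D\vac=\vac(-2,-1)\vac=0$ and $\D\vac=0$.

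For part (1) I would specialise the associativity relation to $b=\vac$: since $Y(\vac,\uze_2)=\id$ the corresponding $\mu$ depends only on its first slot, so $Y(Y(a,\uze_0)\vac,\uze_2)=Y(a,\underline{\zeta_0+\zeta_2})$. Writing $Y(a,\uze_0)\vac=\sum_{m,n\ge 0}c_{m,n}(a)\,\zeta_0^m\zee_0^n$ --- a power series by (FV1) --- substituting into the left side, expanding the right side about $\zeta_0=0$, and matching coefficients of $\zeta_0^m\zee_0^n$ yields $Y(c_{m,n}(a),\uz)=\frac{1}{m!\,n!}\partial_z^m\partial_{\z}^n Y(a,\uz)$; the cases $(m,n)=(1,0),(0,1)$ are part (1) once one reads off $c_{1,0}(a)=a(-2,-1)\vac=Da$ and $c_{0,1}(a)=a(-1,-2)\vac=\D a$ from (FV1). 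Iterating part (1), applying to $\vac$, and using injectivity of $a\mapsto Y(a,\uz)$ (by (FV1), $Y(a,\uz)\vac\to a$) gives both $[D,\D]=0$ (part (3), since $Y(D\D a,\uz)=\partial_z\partial_{\z}Y(a,\uz)=\partial_{\z}\partial_z Y(a,\uz)=Y(\D D a,\uz)$) and the creation formula $Y(a,\uz)\vac=\exp(zD+\z\D)a$. For part (5) I would differentiate the associativity relation for general $b$ in $\zeta_0$ and $\zeta_2$, use part (1) to rewrite each derivative as a $Y(D\,\cdot\,)$, and subtract; using associativity once more, the terms reassemble into $Y\big(DY(a,\uze_0)b-Y(Da,\uze_0)b-Y(a,\uze_0)Db,\uze_2\big)=0$, so by injectivity $[D,Y(a,\uz)]=Y(Da,\uz)$, and similarly for $\D$ via $\partial_{\zee}$. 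Part (4) is then the standard skew-symmetry argument: take $c=\vac$ in (FV3), use the creation formula to write both orderings as expansions of $\mu$, move the exponential past $Y(b,\uze_2)$ by part (5) in the form $e^{-\zeta_1 D-\zee_1\D}\,Y(b,\uze_2)\,e^{\zeta_1 D+\zee_1\D}=Y(b,\underline{\zeta_2-\zeta_1})$, and continue analytically to $\zeta_2=0$ (licit because $u(Y(a,\uze_1)e^{\zeta_2 D+\zee_2\D}b)$ is a power series in $\zeta_2,\zee_2$), obtaining $Y(a,\uz)b=\exp(zD+\z\D)Y(b,-\uz)a$.

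Finally (6)--(7). If $\D a=0$, then $\partial_{\z}Y(a,\uz)=Y(\D a,\uz)=0$ by part (1), so (taking $a$ homogeneous and using (FO1)) $Y(a,\uz)=\sum_{n\in\bbZ}a(n,-1)z^{-n-1}$ is an ordinary Laurent field in $z$ alone, and the $\mu$ attached to $a,b,c,u$ is, as a function of $\zeta_1$, meromorphic on $\bbC\setminus\{0,\zeta_2\}$ with expansions $\sum_r u(a(r,-1)Y(b,\uze_2)c)\,\zeta_1^{-r-1}$ near $\infty$, $\sum_r u(Y(b,\uze_2)a(r,-1)c)\,\zeta_1^{-r-1}$ near $0$, and $\sum_r u(Y(a(r,-1)b,\uze_2)c)\,(\zeta_1-\zeta_2)^{-r-1}$ near $\zeta_2$. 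The first identity of (6) is the equality of $\oint\zeta_1^n\mu\,d\zeta_1$ taken over a large and a small circle about $0$ --- which computes $u([a(n,-1),Y(b,\uze_2)]c)$ --- with the residue at $\zeta_1=\zeta_2$, read off after expanding $\zeta_1^n=(\zeta_2+(\zeta_1-\zeta_2))^n$. Deforming instead the small contour about $\zeta_1=\zeta_2$ (which computes $u(Y(a(n,-1)b,\uze_2)c)$) into a large circle minus a small circle about $0$ gives the second (``iterate'') identity of (6). For (7), if moreover $Db=0$, then skew-symmetry forces $a(j,-1)b=0$ for all $j\ge 0$ --- the right-hand side $\exp(zD+\z\D)Y(b,-\uz)a$ has no negative powers of $z$ --- so (6) gives $[a(n,-1),Y(b,\uz)]=0$ for every $n$, i.e.\ $[Y(a,\uz),Y(b,\uz)]=0$.

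I expect the real work to lie not in any single step but in the repeated passage between the formal series and the genuine real-analytic function $\mu$: tracking the convergence domains $\{|\zeta_1|>|\zeta_2|\}$, $\{|\zeta_2|>|\zeta_1|\}$, $\{|\zeta_2|>|\zeta_0|\}$, checking that the various contour deformations cross only the singularities of $\mu$ at $\zeta_1\in\{0,\zeta_2\}$, and verifying that the ``specialise at $\zeta_2=0$'' steps are legitimate because the relevant expressions are honest power series there. Once this bookkeeping --- the two-variable analogue of the standard VOA argument --- is in place, parts (1)--(5) are formal manipulations and the residue computations for (6)--(7) are routine.
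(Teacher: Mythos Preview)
Your proposal is correct and follows the standard full-VOA adaptation of the classical VOA arguments. Note, however, that the present paper does not itself prove this proposition: it simply records the result with a citation to \cite[Proposition~3.7, Lemma~3.11, Lemma~3.13]{Moriwaki21}, so there is no in-paper proof to compare against; your sketch is essentially the argument carried out in that reference.
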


Proposition \ref{prop_translation} implies $\ker D$ is a vertex algebra and $\ker \D$ is also a vertex algebra with the formal variable $\z$.

A \textbf{full vertex operator algebra} is a pair of a full vertex algebra and distinguished vectors
$\nu \in F_{2,0}$ and $\bar\nu\in F_{0,2}$ such that
\begin{enumerate}[{FVOA}1{)}]
\item
$\D \nu=0$ and $D \bar\nu=0$;
\item
There exist scalars $c, \bar{c} \in \bbC$ such that
$\nu(3,-1)\nu=\frac{c}{2} \vac$,
$\bar\nu(-1,3)\bar\nu=\frac{\bar{c}}{2} \vac$ and
$\nu(k,-1)\nu=\bar\nu(-1,k)\bar\nu=0$
for any $k=2$ or $k\in \bbZ_{\geq 4}$.
\item
$\nu(0,-1)=D$ and $\bar\nu(-1,0)=\D$;
\item
$\nu(1,-1)|_{F_{h,\h}}=h$ and
$\bar\nu(-1,1)|_{F_{h,\h}}=\h$ for any $h,\h \in \bbR$.
\item
For any $H >0$, $\bigoplus_{h+\h<H} F_{h,\h}$ is finite-dimensional.
\end{enumerate}

We remark that $\{ \nu(n,-1)\}_{n\in \bbZ}$ and $\{ \bar\nu(-1,n)\}_{n\in \bbZ}$
satisfy the commutation relation of Virasoro algebra by Proposition \ref{prop_translation}.
Then, we have \cite[Proposition 3.18 and Proposition 3.19]{Moriwaki21}:
\begin{proposition}\label{ker_hom}
Let $(F,\nu,\bar\nu)$ be a full vertex operator algebra.
Then, $(\ker \D,\nu)$ and $(\ker D,\bar\nu)$ are vertex operator algebras and the linear map
\begin{align*}
\ker \D \otimes \ker D \rightarrow F, a\otimes b \mapsto a(-1,-1)b
\end{align*}
is a full vertex operator algebra homomorphism.
Moreover, $F$ is a $\ker \D \otimes \ker D$-module and the vertex operator $Y(\bullet,\uz)$ is an intertwining operator of $\ker \D \otimes \ker D$-modules.
\end{proposition}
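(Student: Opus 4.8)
The plan is to verify each of the three claims in turn, building on Proposition \ref{prop_translation}. For the first claim, that $(\ker\D,\nu)$ is a vertex operator algebra, I would start from the observation already recorded after Proposition \ref{prop_translation} that $\ker D$ and $\ker\D$ are vertex algebras: indeed, since $\D$ and $D$ commute (Proposition \ref{prop_translation}(3)) and $Y(\D a,\uz)=[\D,Y(a,\uz)]$ (Proposition \ref{prop_translation}(5)), the subspace $\ker\D$ is closed under all modes $a(n,m)$ applied to vectors in $\ker\D$, and $Y(a,\uz)b\in F((z,\z,|z|^\bbR))$ restricts to a series in $z$ alone (no genuine $\z$-dependence, since $\D$ annihilates everything in sight), so that $\ker\D$ carries an ordinary vertex algebra structure in the variable $z$. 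To upgrade this to a VOA one must check the Virasoro axioms: by (FVOA1) we have $\D\nu=0$, so $\nu\in\ker\D$; by (FVOA2) the modes $\nu(n,-1)$ close the Virasoro relations with central charge $c$ (this is the remark preceding the proposition); by (FVOA3), $\nu(0,-1)=D$ acts as the translation operator, matching $Y(Da,\uz)=\frac{d}{dz}Y(a,\uz)$ from Proposition \ref{prop_translation}(1); by (FVOA4), $\nu(1,-1)=L(0)$ acts semisimply with the correct grading; and by (FVOA5) the graded pieces are finite-dimensional. One small point to address: the grading of $\ker\D$ should be by $h$ alone, and FO2 plus FVOA4 guarantee $\ker\D$ is $\bbZ_{\geq0}$-graded by $L(0)$ with finite-dimensional homogeneous subspaces (since $\h=0$ forces $h\in\bbZ_{\geq0}$ by FO1). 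The case of $(\ker D,\bar\nu)$ is symmetric under exchanging $z\leftrightarrow\z$, $D\leftrightarrow\D$, $\nu\leftrightarrow\bar\nu$, $c\leftrightarrow\bar c$.

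For the second claim, that $\iota\colon\ker\D\otimes\ker D\to F$, $a\otimes b\mapsto a(-1,-1)b$, is a full VOA homomorphism, I would first check it is well-defined and respects the gradings: if $a\in(\ker\D)_h$ and $b\in(\ker D)_{\bar h}$ then $a(-1,-1)b\in F_{h,\bar h}$ by the $L(0)$- and $\Ld(0)$-covariance \eqref{eq_L0_cov}. It sends $\vac\otimes\vac$ to $\vac$ by (FV1). The substantive part is intertwining of vertex operators, i.e. that
\begin{align*}
Y(a(-1,-1)b,\uz)\bigl(a'(-1,-1)b'\bigr)=\iota\bigl(Y^{\ker\D}(a,z)a'\otimes Y^{\ker D}(b,\z)b'\bigr)
\end{align*}
up to the appropriate identification. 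Here the key tool is Proposition \ref{prop_translation}(6)-(7): since $\D a=0$ and (after using skew-symmetry, Proposition \ref{prop_translation}(4)) one can also bring $b$ into a form where $Db=0$, the modes $a(n,-1)$ and $b(-1,m)$ commute past each other, so $Y(a(-1,-1)b,\uz)$ factorizes as a product of a series in $z$ built from the $a(n,-1)$ and a series in $\z$ built from the $b(-1,m)$. Concretely, I would apply the second identity of Proposition \ref{prop_translation}(6) to rewrite $Y(a(-1,-1)b,\uz)$ in terms of $a(n,-1)$'s and $Y(b,\uz)$, then iterate to strip off $b$ and identify the two factors with the chiral vertex operators. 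Proposition \ref{prop_translation}(7) is what guarantees the two factors genuinely commute, so that the product is unambiguous. Establishing that $F$ becomes a $\ker\D\otimes\ker D$-module and that $Y(\bullet,\uz)$ is an intertwining operator for this module structure is then essentially the same computation read at the level of the module action rather than the algebra, combined with the associativity/commutativity axiom (FV3) which furnishes the required operator product expansion.

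The main obstacle, I expect, is the bookkeeping in the factorization step: one must be careful that the ordinary vertex operators $Y^{\ker\D}$ and $Y^{\ker D}$ extracted from the restriction of the full vertex operator $Y$ to the chiral subalgebras really do recover the full $Y(a(-1,-1)b,\uz)$ when recombined, and not merely agree on a dense subspace or modulo lower-order terms. This is where (FV3) — the existence of the single real-analytic function $\mu$ controlling all three orderings — does the heavy lifting: it ensures the formal manipulations with Proposition \ref{prop_translation}(6)-(7) are justified by genuine convergence on the appropriate domains, and that the iterated vertex operators $Y(Y(a,\uz_0)a',\uz)$ match $Y(a,\uz)Y(a',\uz')$ analytically continued. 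The finite-dimensionality hypotheses FO3 and FVOA5 are needed throughout to ensure all the relevant series lie in $F((z,\z,|z|^\bbR))$ so that the substitutions in Proposition \ref{prop_translation} are legitimate. Once the intertwining property is in hand, the remaining assertions (module structure, $Y$ as intertwiner) follow formally.
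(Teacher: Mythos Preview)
The paper does not give its own proof of this proposition: it is quoted directly from \cite[Propositions 3.18 and 3.19]{Moriwaki21}, so there is no in-paper argument to compare against. Your outline is a plausible reconstruction of how that proof goes, and the reliance on Proposition~\ref{prop_translation}(6)--(7) for the factorization $Y(a(-1,-1)b,\uz)$ into commuting chiral and antichiral pieces is the right mechanism.

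One genuine gap in your sketch: you assert that on $\ker\D$ the $\Ld(0)$-weight $\h$ vanishes (``since $\h=0$ forces $h\in\bbZ_{\ge0}$ by FO1''), but you never justify $\h=0$. The condition $\D a=\Ld(-1)a=0$ does not by itself force $\Ld(0)a=0$; you need an $\mathfrak{sl}_2$-argument. From FO2 and FVOA5 the operator $\Ld(1)$ is locally nilpotent (it lowers $\h$ and the spectrum is bounded below with finite multiplicities), so any $a\in F_{h,\h}$ with $\Ld(-1)a=0$ generates a finite-dimensional lowest-weight $\mathfrak{sl}_2$-module, forcing $\h\le 0$; combined with FO2 this gives $\h=0$. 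Without this step the claim that $\ker\D$ is $\bbZ_{\ge0}$-graded by $L(0)$, and hence that $(\ker\D,\nu)$ is a VOA in the usual sense, is not established. (Compare Proposition~\ref{prop_unitary_spec}(3), where the analogous statement is proved under unitarity via a positivity argument; in the non-unitary setting you must use the $\mathfrak{sl}_2$-structure instead.) Once this is filled in, the rest of your plan is sound.
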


Let $V$ be a vertex operator algebra and $M$ a $V$-module.
For any $n \in \bbZ_{>0}$, set 
\begin{align*}
C_n(M)= \{a(-n)m\mid m\in M \text{ and }a \in\bigoplus_{k\geq 1}V_k \}.
\end{align*}
A $V$-module $M$ is called \textbf{$C_n$-cofinite} if $M/C_n(M)$ is a finite-dimensional vector space.

Since $(L(-1)a)(-n)=na(-n-1)$ for any $a \in V$ and $n\in \bbZ_{>0}$,
$C_{n+1}(M) \subset C_{n}(M)$. Hence, if $M$ is $C_{n+1}$-cofinite,
then $M$ is $C_{n}$-cofinite.
Note that any vertex operator algebra is of itself $C_1$-cofinite.

\subsubsection{Unitarity of full VOA}

An invariant bilinear form of a vertex operator algebra was introduced by \cite{Li94SymmetricInvariant}.
A unitary vertex operator algebra was introduced by Dong and Lin \cite{DL14Unitary} and \cite{CKLW18}
(cf.\! \cite[Section 8.3]{Gui19Unitarity2}).

Let $F$ be a full vertex operator algebra.
A \textbf{bilinear form} $(\bullet,\bullet):F\otimes F \rightarrow \bbC$ is called \textbf{invariant} if
\begin{align}
(u,Y(a,\uz)v) = (Y(\exp(L(1)z+\Ld(1)\z) (-1)^{L(0)-\Ld(0)}z^{-2L(0)}\z^{-2\Ld(0)} a,\uz^{-1})u,v)
\label{eq_bilinear_inv}
\end{align}
for any $a,u,v \in F$. Here, since $L(0)-\Ld(0) \in 2\bbZ$ on $F$, $ (-1)^{L(0)-\Ld(0)}$ is well-defined.

The following proposition \cite[Proposition 3.2 and Corollary 3.2]{Moriwaki20Consistency} is an analogue of the result in the chiral case \cite{Li94SymmetricInvariant}:
\begin{proposition}
\label{prop_inv_bilinear}
Let $F$ be a simple full vertex operator algebra. Assume
\begin{enumerate}
\item
$F_{0,0}=\bbC$;
\item
$F_{h,\h}=0$ if $h<0$ or $\h <0$;
\item
$L(1)F_{1,0}=0$ and $\Ld(1)F_{0,1}=0$;
\item
$L(-1)F_{0,n}=0$ and $\Ld(-1)F_{n,0}=0$ for any $n \geq 0$.
\end{enumerate}
Then, there exists a unique (up to constant) non-degenerate invariant bilinear form on $F$.
\end{proposition}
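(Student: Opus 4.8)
The plan is to reduce the existence and uniqueness of the invariant bilinear form on the full VOA $F$ to the corresponding statement for the chiral vertex operator algebra $V := \ker\D \otimes \ker D$, for which the analogue is known by \cite{Li94SymmetricInvariant} (in the tensor product form, the factors being the vertex operator algebras $\ker\D$ and $\ker D$ of Proposition \ref{ker_hom}). The first step is to check that the hypotheses (1)--(4) on $F$ descend to the hypotheses of Li's theorem on $\ker\D$ and $\ker D$: condition $F_{0,0}=\bbC$ together with the grading assumption (FO1)--(FO2) forces $(\ker\D)_0 = (\ker D)_0 = \bbC$; simplicity of $F$ needs to be propagated to simplicity of the chiral components (or at least to the non-degeneracy statement one actually needs), using that $F$ is a $V$-module on which $Y$ is an intertwining operator; and conditions (3)--(4) are exactly the $L(1)$- and $L(-1)$-type vanishing hypotheses that guarantee Li's form exists. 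So the first half of the proof is bookkeeping: translate Proposition \ref{prop_inv_bilinear}'s hypotheses through Proposition \ref{ker_hom}.

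For \emph{existence}, I would follow Li's construction adapted to two commuting Virasoro actions. One defines a candidate pairing on each homogeneous component $F_{h,\h}$ by transporting the evaluation pairing $F_{h,\h}^* \times F_{h,\h} \to \bbC$ along a chosen identification, or more invariantly, one fixes the value $(\vac,\vac)=1$ and extends by demanding \eqref{eq_bilinear_inv}; the content is that \eqref{eq_bilinear_inv} consistently determines a bilinear form, i.e. that the adjoint vertex operator $Y(\exp(L(1)z+\Ld(1)\z)(-1)^{L(0)-\Ld(0)}z^{-2L(0)}\z^{-2\Ld(0)}a, \uz^{-1})$ really does define, via the state--field correspondence applied to $\vac$, a linear functional on $F$ that is well-behaved on each graded piece. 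Here one uses that $L(1),\Ld(1)$ lower the conformal weights, that $L(0)-\Ld(0)\in2\bbZ$ so the sign is well-defined (as already noted in the excerpt), and the convergence/associativity axiom (FV3) to check that the two iterates in \eqref{eq_bilinear_inv} match. That the resulting form is symmetric follows from applying the invariance identity twice (using the skew-symmetry $Y(a,\uz)b=\exp(zD+\z\D)Y(b,-\uz)a$ from Proposition \ref{prop_translation}(4)); invariance under both Virasoro algebras is built in by construction.

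For \emph{uniqueness up to scalar}, the standard argument is: given two invariant forms, the radical of any invariant form is an ideal of $F$ (invariance plus the full VOA axioms show that $\{v : (v,\bullet)=0\}$ is stable under all modes $a(r,s)$ and contains no nonzero graded piece meeting $\vac$), so by simplicity of $F$ a nonzero invariant form is non-degenerate; then the composition of one form with the inverse of another is an $F$-module endomorphism of $F$ commuting with $Y$, hence a scalar because $F_{0,0}=\bbC$ and $F$ is simple (a Schur-type argument). The \textbf{main obstacle} I anticipate is the existence half — specifically, verifying that \eqref{eq_bilinear_inv} is self-consistent, i.e. that the ``adjoint'' assignment $a \mapsto Y(\theta(z,\z)a,\uz^{-1})$ with $\theta(z,\z)=\exp(L(1)z+\Ld(1)\z)(-1)^{L(0)-\Ld(0)}z^{-2L(0)}\z^{-2\Ld(0)}$ is an anti-homomorphism in the appropriate sense so that no contradiction arises when \eqref{eq_bilinear_inv} is used on both sides of a product $Y(a,\uz_1)Y(b,\uz_2)$. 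In the chiral case this is Li's key lemma on the "adjoint vertex operator''; here one must run the same computation with the two-variable $\theta$, controlling the $|z|^\bbR$-type exponents allowed by $F((z,\z,|z|^\bbR))$, and this is exactly the step I would expect to occupy the bulk of the verification. Given Proposition \ref{ker_hom} one may instead try to bypass this by building the form as (essentially) a tensor product of the chiral invariant forms on $\ker\D$ and $\ker D$ and then checking it extends to all of $F$ using that $F$ is generated by $\ker\D\otimes\ker D$ under the vertex operators; I would pursue whichever of these two routes makes the associativity bookkeeping shorter.
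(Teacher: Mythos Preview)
The paper does not prove this proposition; it is quoted from \cite[Proposition 3.2 and Corollary 3.2]{Moriwaki20Consistency}. From the way the result is invoked later (see the proof of Proposition \ref{prop_unitary_vacuum}), the argument in the cited reference is a direct two-variable adaptation of Li's classification: one shows that invariant bilinear forms on $F$ are in natural bijection with $\mathrm{Hom}_\bbC\bigl(F_{0,0}/(L(1)F_{1,0}+\Ld(1)F_{0,1}),\bbC\bigr)$; hypotheses (1) and (3) make this space one-dimensional, giving existence and uniqueness up to scalar, and simplicity forces non-degeneracy via the radical-is-an-ideal argument you describe.

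Your ``first route'' (run Li's construction with the two-variable involution $\theta(z,\z)$ and check the adjoint vertex operator is an anti-homomorphism) is exactly this, and your uniqueness argument is correct. What is off is the framing and the ``second route''. Reducing to the chiral algebras $\ker\D$ and $\ker D$ via Proposition \ref{ker_hom} does not work as stated: the map $\ker\D\otimes\ker D\to F$ is only a full VOA homomorphism, not in general surjective, so $F$ is not generated by this subalgebra and an invariant form on the tensor product has no canonical extension to $F$. Likewise, simplicity of $F$ does not propagate downward to simplicity of $\ker\D$ or $\ker D$ (and you do not need it to). Hypothesis (4) is used not on the chiral pieces but directly on $F$: it is what makes the contragredient $F^\vee$ into an $F$-module with the same grading, so that the adjoint vertex operator really lands where it should. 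So drop the reduction-to-chiral scaffolding and commit to the direct two-variable Li argument; the obstacle you flag (consistency of the adjoint assignment) is indeed the substantive computation, and it goes through by the same Borcherds-identity bookkeeping as in the chiral case, carried out separately in $z$ and $\z$.
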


Let $(F, Y,\vac,\nu,\bar\nu)$ be a full vertex operator algebra. An anti-linear automorphism $\phi$ of $F$ is an
anti-linear map $\phi:F \rightarrow F$ such that $\phi(\vac) = \vac, \phi(\nu)=\nu, \phi(\bar\nu)=\bar\nu$ and 
$\phi(a(r,s)b) = \phi(a)(r,s)\phi(b)$ for any $a,b\in F$ and $r,s \in \bbR$.

Let $(F, Y,\vac,\nu,\bar\nu)$ be a full vertex operator algebra with an invariant bilinear form $(\bullet,\bullet)$
and $\phi:F \rightarrow F$ be an anti-linear involution, i.e.\! an anti-linear automorphism of order 2. The pair $(F,\phi)$ is called \textbf{unitary} if the sesquilinear form $\<\bullet,\bullet \> = (\phi(\bullet),\bullet)$ is positive-definite.
For a unitary full vertex operator algebra, we will normalize the invariant form $(\bullet, \bullet)$ on $F$ by $(\vac,\vac)=1$.

By invariance of $(\bullet, \bullet)$, for any $a, u, v \in F$ we have
\begin{align}
\<u,Y(\phi(a),\uz)v\>=\<Y(\exp(L(1)z+\Ld(1)\z) (-1)^{L(0)-\Ld(0)}z^{-2L(0)}\z^{-2\Ld(0)} a,\uz^{-1})u,v\>,
\label{eq_Hermite_formal}
\end{align}
where $z$ and $\z$ are regarded as formal variables, and in particular,
$\<z \bullet, \bullet\> = z\<\bullet, \bullet\>$ even though the scalar product is
anti-linear on the left.

Set $F_\bbR = \{a\in F\mid \phi(a)=a\}$, which is a real subalgebra of $F$. Then, the restriction of $\<\bullet,\bullet\>$ on $F_\bbR$ is a real-valued invariant bilinear form of the (real) full VOA $F_\bbR$.
It is easy to show that $\<\bullet, \bullet\>$ is positive-definite if and only if
the restriction of $(\bullet, \bullet)$ to $F_\bbR$ is positive-definite.

For $h,\h \in \bbR$, set
\begin{align*}
\mathrm{QF}_{h,\h}= \{a\in F_{h,\h}\mid L(1)a=\Ld(1)a=0\}.
\end{align*}
A vector in $\mathrm{QF}_{h,\h}$ is called a \textbf{quasi-primary} vector (of conformal weight $(h,\h)$).
We call a vector $a \in F_{h,\h}$ \textbf{Hermite} if $\phi(a)=(-1)^{h-\h} a$.
Here, $h - \bar h \in \bbZ$ by assumption, hence $(-1)^{h - \bar h} = \pm 1$,
depending on whether $h - \bar h$ is even or odd. Let $\mathrm{QHF}_{h,\h}$ be the real vector space of quasi-primary Hermite vectors of conformal weight $(h,\h)$.

If $a$ is a quasi-primary vector, then the invariance property reduces to the following:
\begin{align}\label{eq:hermite}
 (u,Y(a,\uz)v)= (-1)^{h-\bar h}z^{-2h}\z^{-2\bar h}(Y(a,\uz^{-1})u,v).
\end{align}

For a full VOA satisfying the hypotheses of Proposition \ref{prop_inv_bilinear}, we can take a set of vectors $a \in \mathrm{QF}_{h,\h}$ for
some $h, \h \in \bbR$ that generate $F$ (in the sense of Lemma \ref{lemma_quasiprimary}).
As $\phi$ is an (antilinear) automorphism, it commutes with all $L(m), \Ld(m)$,
and hence $\phi$ maps $\mathrm{QF}_{h,\h}$ to $\mathrm{QF}_{h,\h}$.
We can modify the generating set of quasi-primary vectors with the following way:
for $a$ in that set,
\begin{itemize}
 \item if $h - \bar h$ is even, then we take $a + \phi(a)$ and $\ri(a - \phi(a))$. They are both in $\mathrm{QF}_{h,\h}$.
 \item if $h - \bar h$ is even, then we take $\ri(a + \phi(a))$ and $a - \phi(a)$. They are both in $\mathrm{QF}_{h,\h}$.
\end{itemize}
Let us call this modified set of homogeneous quasi-primary vectors $\cQ$.
It is clear that $\cQ$ still generates $F$. Furthermore, if $a \in \cQ \cap \mathrm{QF}_{h,\h}$, then we have
\begin{align}\label{eq:hermite-sp}
 \<u,Y(a,\uz)v\>=  \<Y(z^{-2h}\z^{-2\bar h} a,\uz^{-1})u,v\> =  z^{-2h}\z^{-2\bar h} \<Y(a,\uz^{-1})u,v\>.
\end{align}
Note that the equation \eqref{eq:hermite-sp} holds as formal power series,
however, one need to be careful when one evaluates the formal variable $z$ by some complex number $\ze \in \bbC^\times$,
since $\<\bullet, \bullet\>$ is anti-linear on the first vector (see \eqref{eq:conjugatephi} below).

\begin{lemma}
\label{lemma_quasiprimary}
Let $F$ be a full VOA satisfying the assumptions in Proposition \ref{prop_inv_bilinear}. Then,
\begin{align*}
F = \bigoplus_{h,\h\in \bbR} \bbC[L(-1),\Ld(-1)]\mathrm{QHF}_{h,\h}.
\end{align*}
\end{lemma}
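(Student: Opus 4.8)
The plan is to regard $F$ as a module for the two commuting copies of $\mathfrak{sl}_2$ generated by $\{L(-1),L(0),L(1)\}$ and $\{\Ld(-1),\Ld(0),\Ld(1)\}$, where $L(n)=\nu(n+1,-1)$ and $\Ld(n)=\bar\nu(-1,n+1)$; recall that $L(-1)=D$ and $\Ld(-1)=\D$ commute by Proposition~\ref{prop_translation}(3), that $[L(n),\Ld(m)]=0$ by Proposition~\ref{prop_translation}(7) applied to $\nu,\bar\nu$, that $Y(\nu,\uz)$ is independent of $\z$ and equals $\sum_n L(n)z^{-n-2}$ since $\D\nu=0$ (symmetrically for $\bar\nu$), and that $h,\h\ge0$. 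Everything is driven by the positive-definite invariant scalar product $\<\bullet,\bullet\>=(\phi(\bullet),\bullet)$, available since $F$ is unitary with the invariant bilinear form of Proposition~\ref{prop_inv_bilinear}. The first step is to record the adjunction relations $\<L(-1)u,v\>=\<u,L(1)v\>$ and $\<\Ld(-1)u,v\>=\<u,\Ld(1)v\>$: applying \eqref{eq:hermite} to the quasi-primary vectors $\nu$ and $\bar\nu$ gives $(u,L(m)v)=(L(-m)u,v)$ and $(u,\Ld(m)v)=(\Ld(-m)u,v)$, and since $\phi$ is an automorphism fixing $\nu,\bar\nu$ it commutes with every $L(n),\Ld(n)$, transferring these to $\<\bullet,\bullet\>$. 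Combining with the $\mathfrak{sl}_2$ bracket gives $\|L(-1)v\|^2=\|L(1)v\|^2+2h\|v\|^2$ for $v\in F_{h,\h}$, so $L(-1)$ is injective on $F_{h,\h}$ for $h>0$ while $L(-1)F_{0,\h}=0$; symmetrically for $\Ld(-1)$.

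\emph{Spanning.} Fix weights $(h_0,\h_0)$ and set $W=L(-1)F_{h_0-1,\h_0}+\Ld(-1)F_{h_0,\h_0-1}\subseteq F_{h_0,\h_0}$ (terms with a negative weight being $0$). Using the adjunction relations and the non-degeneracy of $\<\bullet,\bullet\>$ on each finite-dimensional $F_{h,\h}$ one checks that $v\perp L(-1)F_{h_0-1,\h_0}\iff L(1)v=0$ and $v\perp\Ld(-1)F_{h_0,\h_0-1}\iff\Ld(1)v=0$, hence $W^{\perp}=\mathrm{QF}_{h_0,\h_0}$; since $\<\bullet,\bullet\>$ is positive-definite on the finite-dimensional $F_{h_0,\h_0}$, this yields the orthogonal decomposition $F_{h_0,\h_0}=\mathrm{QF}_{h_0,\h_0}\oplus W$. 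Iterating it (the recursion terminates because $F_{h,\h}=0$ for $h<0$ or $\h<0$) gives $F_{h_0,\h_0}=\sum_{m,n\ge0}L(-1)^m\Ld(-1)^n\,\mathrm{QF}_{h_0-m,\h_0-n}$, hence $F=\sum_{h,\h}\bbC[L(-1),\Ld(-1)]\mathrm{QF}_{h,\h}$.

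\emph{Directness.} I would then show the subspaces $L(-1)^m\Ld(-1)^n\mathrm{QF}_{h,\h}$ sitting inside a fixed $F_{h_0,\h_0}$ are pairwise orthogonal. For $v\in\mathrm{QF}_{h,\h}$ and $v'\in\mathrm{QF}_{h',\h'}$ (say $m\le m'$), rewrite the pairing via the adjunction as $\<\Ld(-1)^n v,\,L(1)^m L(-1)^{m'}\Ld(-1)^{n'}v'\>$; using $[L(1),\Ld(-1)]=0$ and the standard identity $L(1)L(-1)^k w=k(2h'+k-1)L(-1)^{k-1}w$ valid for an $L(1)$-lowest-weight vector $w$ of $L(0)$-weight $h'$, one finds the pairing vanishes unless $m=m'$; the same argument on the $\Ld$-side forces $n=n'$, and for $(m,n)=(m',n')$ it equals a strictly positive scalar times $\<v,v'\>$ (the scalars being products of factors $j(2h+j-1)$ and $j(2\h+j-1)$, positive precisely when the summand does not vanish). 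So the sum from the spanning step is orthogonal, hence direct, which gives $F_{h_0,\h_0}=\bigoplus_{m,n\ge0}L(-1)^m\Ld(-1)^n\mathrm{QF}_{h_0-m,\h_0-n}$ (the vanishing summands, which only occur for $h=0<m$ or $\h=0<n$, are dropped), and summing over $(h_0,\h_0)$ gives $F=\bigoplus_{h,\h}\bbC[L(-1),\Ld(-1)]\mathrm{QF}_{h,\h}$. Finally $\phi$ preserves each $\mathrm{QF}_{h,\h}$ and $(-1)^{h-\h}\phi$ is an antilinear involution on it, so $\mathrm{QF}_{h,\h}=\mathrm{QHF}_{h,\h}\oplus\ri\,\mathrm{QHF}_{h,\h}$ over $\bbR$; hence $\bbC[L(-1),\Ld(-1)]\mathrm{QF}_{h,\h}=\bbC[L(-1),\Ld(-1)]\mathrm{QHF}_{h,\h}$, completing the proof.

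I expect the genuine work to be the orthogonality computation in the directness step: one must handle the two commuting $\mathfrak{sl}_2$-actions simultaneously and track the degenerate weight-zero cases, where $L(-1)$ or $\Ld(-1)$ fails to be injective but the offending summand vanishes. Once the adjunction relations are in place, the rest is bookkeeping.
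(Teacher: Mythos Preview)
Your argument is correct, and the final $\mathrm{QHF}$ step matches the paper's exactly. The route to the $\mathrm{QF}$ decomposition, however, differs. The paper simply invokes \cite[Proposition~3.5]{Moriwaki20Consistency}, which establishes $F=\bigoplus_{h,\h}\bbC[L(-1),\Ld(-1)]\mathrm{QF}_{h,\h}$ under the hypotheses of Proposition~\ref{prop_inv_bilinear} alone, using only the non-degenerate invariant bilinear form. You instead reprove this from scratch, and your argument leans essentially on positive-definiteness of $\<\bullet,\bullet\>$: the spanning step needs $F_{h_0,\h_0}=W^\perp\oplus W$, which fails for a merely non-degenerate form. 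So your proof is self-contained and transparent in the unitary setting (which is the ambient context of the section and the only place the lemma is used), but strictly speaking you have added unitarity to the stated hypotheses, whereas the paper's proof via the cited reference does not. Your remark that the orthogonality computation is ``the genuine work'' is a bit overstated: once the adjunction relations and the $\mathfrak{sl}_2$ lowest-weight identity $L(1)L(-1)^k w=k(2h'+k-1)L(-1)^{k-1}w$ are in hand, the directness is a two-line induction, and in fact the paper (via the reference) gets it without any inner-product orthogonality at all.
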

\begin{proof}
Under the assumptions in Proposition \ref{prop_inv_bilinear}, by
\cite[Proposition 3.5]{Moriwaki20Consistency},
\begin{align*}
F = \bigoplus_{h,\h\in \bbR} \bbC[L(-1),\Ld(-1)]\mathrm{QF}_{h,\h}.
\end{align*} 
For any $a\in \mathrm{QF}_{h,\h}$,
\begin{itemize}
 \item if $h - \bar h$ is even, then we take $a + \phi(a)$ and $\ri (a - \phi(a))$. They are both in $\mathrm{QHF}_{h,\h}$.
 \item if $h - \bar h$ is odd, then we take $\ri (a + \phi(a))$ and $a - \phi(a)$. They are both in $\mathrm{QHF}_{h,\h}$.
\end{itemize}
Hence, the assertion holds.
\end{proof}

\begin{proposition}
\label{prop_unitary_spec}
Let $F$ be a unitary full VOA. Then, the following conditions hold:
\begin{enumerate}
\item
$F_{h,\h}=0$ if $h<0$ or $\h<0$;
\item
$L(1)F_{1,0}=0$ and $\Ld(1)F_{0,1}=0$;
\item
$\ker L(-1)= \bigoplus_{n\geq 0}F_{0,n}$ and $\ker \Ld(-1)= \bigoplus_{n\geq 0}F_{n,0}$.
\end{enumerate}
\end{proposition}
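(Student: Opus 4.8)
The plan is to derive all three conditions from positive-definiteness of $\langle\bullet,\bullet\rangle$ together with the invariance formula \eqref{eq_Hermite_formal} (or its quasi-primary specialization \eqref{eq:hermite}). The first observation I would make is that, because $\phi$ commutes with $L(0),\Ld(0)$ and the form $(\bullet,\bullet)$ is invariant, the sesquilinear form $\langle\bullet,\bullet\rangle$ is block-diagonal with respect to the $\bbR^2$-grading: $\langle F_{h,\h}, F_{h',\h'}\rangle = 0$ unless $(h,\h)=(h',\h')$. Indeed, apply \eqref{eq_Hermite_formal} with $a=\vac$ to get $\langle u, v\rangle = \langle u, v\rangle$ (trivial), but the point is rather that for $u \in F_{h,\h}$, $v\in F_{h',\h'}$ one uses the $L(0)$- and $\Ld(0)$-covariance: $h\langle u,v\rangle = \langle L(0)u, v\rangle = \langle u, L(0) v\rangle = h'\langle u,v\rangle$, using that $(L(0)u,v)=(u,L(0)v)$, which follows from \eqref{eq_bilinear_inv} applied to $a = \nu$ and comparing coefficients (equivalently, $L(0)$ is symmetric for the invariant form). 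So $\langle u,v\rangle \neq 0$ forces $h=h'$ and similarly $\h=\h'$. Hence each $F_{h,\h}$ carries its own positive-definite inner product.

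For (1): suppose $0 \neq a \in F_{h,\h}$ with $h<0$ (the case $\h<0$ is symmetric). I would look at $\langle a, a\rangle > 0$ and derive a contradiction by exhibiting it as something forced to vanish or to be nonpositive. The cleanest route: use \eqref{eq_Hermite_formal} with a quasi-primary decomposition. By Lemma~\ref{lemma_quasiprimary} applied to a suitable span — actually we do not yet know the hypotheses of Proposition~\ref{prop_inv_bilinear} hold, so instead I would argue directly. Write $a = L(-1)^j \Ld(-1)^k b$ is not available either; so the argument must be intrinsic. The standard trick (as in \cite{CKLW18} for the chiral case): consider the vector $\nu(1,-1)$ acting, and use that $\nu(n,-1)$ for $n\geq 2$ lowers the weight; by (FO3) the weights are bounded below on finite-energy subspaces but could a priori be negative. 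One shows $L(0)\geq 0$ by writing, for $a\in F_{h,\h}$ quasi-primary-or-not, $h\langle a,a\rangle = \langle L(0)a,a\rangle$ and relating $L(0)$ to $\sum_{n>0} L(-n)L(n)$-type positive expressions via the adjoint relation $L(n)^\dagger = L(-n)$ that follows from \eqref{eq_Hermite_formal} with $a=\nu$. Concretely, $L(n)^\dagger=L(-n)$ gives $\langle L(-1)u,L(-1)u\rangle = \langle u, L(1)L(-1)u\rangle = \langle u,(L(-1)L(1)+2L(0))u\rangle$; iterating down from the lowest-weight vectors in a given $F_{h,\h}$ (which exist by (FO3)) forces $h\geq 0$, and $\h\geq 0$ similarly using $\Ld$.

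For (2): let $a\in F_{1,0}$. Then $L(1)a\in F_{0,0}=\bbC\vac$ — wait, we cannot assume $F_{0,0}=\bbC$ here, so instead $L(1)a\in F_{0,0}$. I would compute $\langle L(1)a, L(1)a\rangle = \langle a, L(-1)L(1)a\rangle$ and use the Virasoro relation $[L(1),L(-1)]=2L(0)$ together with $L(-1)L(1)a = L(1)L(-1)a - 2L(0)a$; but $L(-1)a\in F_{2,0}$ and $L(1)L(-1)a \in F_{1,0}$, and $L(0)a = a$ since $a\in F_{1,0}$. A cleaner approach: $\langle L(1)a,L(1)a\rangle = \langle a, L(-1)L(1)a\rangle$, and separately note $L(1)a\in F_{0,0}$ so $L(0)(L(1)a)=0$; then using $L(1)^\dagger = L(-1)$ once more, $\langle L(1)a, L(1)a\rangle = \langle L(-1)L(1)a, a\rangle$. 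By positivity it suffices to show $L(-1)L(1)a = $ something orthogonal to $a$, or to show $\|L(1)a\|^2 \le 0$. In fact $\langle L(1)a, L(1)a \rangle = \langle a, L(-1)L(1) a\rangle$; since $L(1)a \in F_{0,0}$ and $F_{0,0}\cap\ker L(-1)$: note $L(-1)F_{0,0}\subseteq F_{1,0}$, and $L(-1)$ restricted to $F_{0,0}$ — any $v\in F_{0,0}$ has $L(1)L(-1)v = 2L(0)v + L(-1)L(1)v = 0 + 0$ since $L(1)v\in F_{-1,0}=0$ by part (1) and $L(0)v=0$; hence $\langle L(-1)v, L(-1)v\rangle = \langle v, L(1)L(-1)v\rangle = 0$, so $L(-1)v=0$ by positive-definiteness. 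Therefore $F_{0,0}\subseteq\ker L(-1)$, and then $\langle L(1)a, L(1)a\rangle = \langle a, L(-1)(L(1)a)\rangle = \langle a, 0\rangle = 0$, giving $L(1)a=0$. The $\Ld$ statement is symmetric.

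For (3): the inclusions $\bigoplus_{n\geq 0}F_{0,n}\subseteq\ker L(-1)$ and $\bigoplus_{n\geq 0}F_{n,0}\subseteq\ker\Ld(-1)$ follow from Proposition~\ref{prop_translation}(2) and $L(-1)=\nu(0,-1)=D$: since $D\vac=0$ and more generally, for $a\in\ker\Ld(-1)$... hmm, actually the inclusion I want: if $a\in F_{0,n}$ then $L(-1)a = Da \in F_{1,n}$, and I claim $Da=0$. I would show this by $\langle Da, Da\rangle = \langle a, L(1)Da\rangle$; using $[L(1),L(-1)]=2L(0)$ and $L(0)a=0$ (as $a\in F_{0,n}$), $L(1)Da = L(1)L(-1)a = L(-1)L(1)a + 2L(0)a = L(-1)L(1)a$, and $L(1)a\in F_{-1,n}=0$ by part (1). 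So $\langle Da,Da\rangle = 0$, hence $Da=0$, i.e. $F_{0,n}\subseteq\ker L(-1)$; induct/sum over $n$. For the reverse inclusion $\ker L(-1)\subseteq\bigoplus_n F_{0,n}$: if $a\in F_{h,\h}\cap\ker L(-1)$ with $h>0$, then $0=\langle L(-1)a, L(-1)a\rangle = \langle a, L(1)L(-1)a\rangle = \langle a, (L(-1)L(1)+2L(0))a\rangle = \langle a, L(-1)L(1)a\rangle + 2h\langle a,a\rangle$. One then needs $\langle a, L(-1)L(1)a\rangle = \langle L(1)a, L(1)a\rangle \ge 0$, so $2h\langle a,a\rangle \le 0$, forcing $a=0$ since $h>0$. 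Hence $\ker L(-1)$ lives in weight $h=0$, i.e. in $\bigoplus_n F_{0,n}$; the $\Ld$ statement is symmetric.

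The main obstacle I anticipate is establishing the adjoint relations $L(n)^\dagger = L(-n)$, $\Ld(n)^\dagger = \Ld(-n)$ rigorously from \eqref{eq_Hermite_formal} — this requires specializing the formal-variable invariance formula to $a=\nu$ (resp.\ $\bar\nu$), extracting the right coefficients of $z,\z$, and being careful about the antilinearity of $\langle\bullet,\bullet\rangle$ in the first slot; and, logically prior to everything, proving $L(0)\ge 0$ (equivalently condition (1)) from these adjoint relations, since conditions (2) and (3) then bootstrap off (1) cleanly. Once $L(n)^\dagger=L(-n)$ and $\h,h\ge 0$ are in hand, the rest is short Virasoro-commutator bookkeeping.
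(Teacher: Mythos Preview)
Your proposal is correct and takes essentially the same approach as the paper: both rely on the adjoint relation $L(n)^\dagger=L(-n)$ (which the paper uses tacitly) and the resulting norm identities $\|L(-1)v\|^2 = \langle v, L(1)L(-1)v\rangle = 2h\|v\|^2 + \|L(1)v\|^2$. A few minor differences worth noting: the paper proves (3) before (2), deducing (2) from the inclusion $F_{0,0}\subset\ker L(-1)$ once that is part of (3); you instead establish that inclusion as a preliminary step inside your proof of (2), which amounts to the same computation. For the reverse inclusion $\ker L(-1)\subset\bigoplus_n F_{0,n}$, the paper asserts this holds ``for any full VOA'' without further argument, whereas you derive it directly from unitarity via $0=\|L(-1)a\|^2\geq 2h\|a\|^2$ --- your route here is more self-contained. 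Finally, note that (1) is already the standing axiom (FO2) in this paper, so the unitarity argument you and the paper both give is really showing that (FO2) would follow from unitarity even if dropped; your momentary detour through Lemma~\ref{lemma_quasiprimary} is unnecessary (and circular, since that lemma presupposes the conclusions of this proposition), but you correctly abandon it for the direct lowest-weight argument, which matches the paper's quasi-primary computation \eqref{eq_L0_pos}.
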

\begin{proof}
For any $v\in \mathrm{QF}_{h,\h}$, we have
\begin{align}
h||v|| = \<v,L(0)v\>= 1/2\<v,[L(1),L(-1)]v\>= 1/2 ||L(-1)v|| \geq 0, \label{eq_L0_pos}
\end{align}
which implies (1) (see \cite[Proposition 1.7]{Gui19Unitarity1}).
$\ker L(-1) \subset \bigoplus_{n\geq 0}F_{0,n}$ holds for any full VOA,
and the equality holds by \eqref{eq_L0_pos}.
For $v\in F_{1,0}$,
$0 \leq \<L(1)v,L(1)v\> = \<v,L(-1)L(1)v\>=0$ by (3). Hence, (2) holds.
\end{proof}

Recall that an \textbf{ideal} of a full VOA $F$ is a subspace $I \subset F$ such that
\begin{align*}
a(r,s)v \in I \text{ for any }a \in F, v\in I\text{ and }r,s\in \bbR.
\end{align*}
Since $L(0)$ and $\Ld(0)$ act on $I$, $I$ is an $\bbR^2$-graded subspace.
If $I$ is an ideal of $F$, then, by Proposition \ref{prop_translation} (4), $v(r,s)a \in I$ for any $a\in F$, $v\in I$ and $r,s\in \bbR$.
Thus, a left ideal is automatically two-sided ideal (see \cite[Section 3]{Moriwaki21}).
A full VOA $F$ is called \textbf{simple} if $F$ does not have any proper ideal.
\begin{proposition}
\label{prop_unitary_vacuum}
Let $F$ be a unitary full VOA. Then, $F$ is simple if and only if $F_{0,0}=\bbC\vac$.
\end{proposition}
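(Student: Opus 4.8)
The plan is to prove the two implications separately. For the easy direction, suppose $F$ is simple. The subspace $\bbC\vac$ is contained in $F_{0,0}$, so it suffices to show $F_{0,0}\subseteq\bbC\vac$. I would argue that $\bigoplus_{h+\h>0}F_{h,\h}$ together with a complement of $\bbC\vac$ inside $F_{0,0}$ spans an ideal; more precisely, by Proposition \ref{prop_unitary_spec}(1) we have $F=\bbC\vac\oplus F_{>0}$ where $F_{>0}=\bigoplus_{(h,\h)\neq(0,0)}F_{h,\h}$, and one checks that for $a\in F_{0,0}$ with $(\vac,a)=0$ (using the normalized invariant form, which is non-degenerate by Proposition \ref{prop_inv_bilinear} since a unitary full VOA satisfies those hypotheses — assumptions (1),(2) follow if $F_{0,0}=\bbC\vac$, so here one has to be slightly careful about the logical order; instead use positive-definiteness of $\<\bullet,\bullet\>$ directly). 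The cleanest route: if $a\in F_{0,0}$ is orthogonal to $\vac$ with respect to $\<\bullet,\bullet\>$, I claim the $F$-submodule generated by $a$ is a proper ideal. Since $L(0)a=\Ld(0)a=0$ and by Proposition \ref{prop_unitary_spec}(3) combined with its barred analogue $a\in\ker L(-1)\cap\ker\Ld(-1)$, one gets $Da=\Ld a$... wait, $Da=a(-2,-1)\vac$; more useful is that $a$ being of weight $(0,0)$ forces $Y(a,\uz)\vac=a$ constant (all higher terms would have positive weight components, impossible), hence $Da=\D a=0$. By Proposition \ref{prop_translation}(6)--(7) applied with $\D a=0$ and $Da=0$, $a(n,-1)$ and $a(-1,n)$ have controlled commutators, and in fact $Y(a,\uz)=a(-1,-1)$ is grading-preserving; the ideal it generates lies in $\bigoplus_{(h,\h)}F_{h,\h}$ but misses $\vac$ because $(\vac,a(-1,-1)\vac)=(\vac,a)=\<\vac\text{-part}\>$, hmm. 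Let me restructure: it is standard that in a simple VOA $F_{0,0}$ is one-dimensional because any element of $F_{0,0}$ acts as a scalar multiple of the identity on the simple module $F$; I would verify the analogous statement here, that $a(-1,-1)\in\End(F)$ commutes with all $Y(b,\uz)$ (using Proposition \ref{prop_translation}(7), since $Da=\D a=0$) and preserves the grading, so by a Schur-type argument on the simple full VOA it is a scalar, forcing $a\in\bbC\vac$.

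For the converse, suppose $F_{0,0}=\bbC\vac$, and let $I\subseteq F$ be a nonzero ideal; I want $I=F$. Since $I$ is $\bbR^2$-graded and nonzero, pick $0\neq a\in I\cap F_{h,\h}$ for some $(h,\h)$. Using the invariant bilinear form and unitarity, I would take the ``creation/annihilation'' perspective: the idea is to show $\vac\in I$, from which $I=Y(F,\uz)\vac\subseteq I$ gives everything. To get $\vac\in I$, I expect to use the nondegeneracy of $(\bullet,\bullet)$ on $F$ (valid here since $F_{0,0}=\bbC\vac$ and Proposition \ref{prop_unitary_spec}(1)--(3) supply the remaining hypotheses of Proposition \ref{prop_inv_bilinear}), together with positive-definiteness: one constructs from $a$, via suitable modes $a(r,s)$ and the mode of $\phi(a)$ dictated by \eqref{eq_bilinear_inv}, an element of $I\cap F_{0,0}$ that is a positive multiple of $\vac$. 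Concretely, by invariance and positivity, $\<a,a\>=(\phi(a),a)>0$, and $(\phi(a),a)$ is an appropriate matrix coefficient of some product of modes applied to $\vac$; matching weights, the relevant mode $\phi(a)(r,s)a$ for the unique $(r,s)$ landing in $F_{0,0}$ equals $\<a,a\>\vac$ up to a nonzero constant, and this element lies in $I$ since $a\in I$ and $I$ is two-sided (Proposition \ref{prop_translation}(4)). Hence $\vac\in I$ and $I=F$.

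The main obstacle I anticipate is the bookkeeping in the converse direction: identifying precisely which mode $\phi(a)(r,s)$ produces a nonzero multiple of $\vac$ when applied to $a$, and checking that the constant is indeed nonzero (not merely that some mode gives an element of $F_{0,0}=\bbC\vac$). This requires care with the conformal-weight shifts $r-1, s-1$ relative to $(h,\h)$ and with the $\exp(L(1)z+\Ld(1)\z)(-1)^{L(0)-\Ld(0)}z^{-2L(0)}\z^{-2\Ld(0)}$ factor in \eqref{eq_bilinear_inv}; since for quasi-primary $a$ this factor simplifies via \eqref{eq:hermite}, I would first reduce to the case where $a$ is quasi-primary using Lemma \ref{lemma_quasiprimary} (write a general element of $I$ in terms of $L(-1),\Ld(-1)$ acting on quasi-primary Hermite vectors, and note that a nonzero ideal must contain a nonzero quasi-primary vector, since $L(-1),\Ld(-1)$ are injective on the relevant graded pieces and $I$ is graded and $L(\pm1),\Ld(\pm1)$-invariant). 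With $a$ quasi-primary, \eqref{eq:hermite} makes the mode computation transparent: $(u,Y(a,\uz)v)=(-1)^{h-\bar h}z^{-2h}\z^{-2\bar h}(Y(a,\uz^{-1})u,v)$, and setting $u=v=\vac$, $a\mapsto a$, one reads off $(a,a(r,s)\vac)$ in terms of $(a(r',s')\vac,\vac)$, pinning down the mode. A secondary (minor) obstacle is the logical ordering flagged above: Proposition \ref{prop_inv_bilinear} is quoted for use, but its hypotheses (1),(2) are exactly $F_{0,0}=\bbC\vac$ and the positivity of the spectrum, so in the "simple $\Rightarrow F_{0,0}=\bbC$" direction one cannot invoke nondegeneracy of $(\bullet,\bullet)$ yet; there I rely only on positive-definiteness of $\<\bullet,\bullet\>$ and the Schur argument, which is self-contained.
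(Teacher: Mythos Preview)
Your proposal is correct in both directions, but the arguments differ from the paper's.

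For ``simple $\Rightarrow F_{0,0}=\bbC\vac$'': your Schur-type argument is the standard VOA route and works as stated. Once you observe $Da=\D a=0$ for $a\in F_{0,0}$ (via Proposition~\ref{prop_unitary_spec}(3)), $Y(a,\uz)=a(-1,-1)$ is constant, hence all modes $a(j,-1)$ with $j\ge0$ vanish, and Proposition~\ref{prop_translation}(6) gives $[a(-1,-1),Y(b,\uz)]=0$. The $\lambda$-eigenspace is then a nonzero ideal, hence everything. The paper instead observes that $F_{0,0}$ is a commutative associative $\bbC$-algebra under $a\otimes b\mapsto a(-1,-1)b$, and uses the bijection between invariant bilinear forms and $F_{0,0}^*$ to show any proper algebra-ideal $J\subset F_{0,0}$ lies in the radical of some invariant form, which must vanish by simplicity; hence $F_{0,0}$ is a simple commutative $\bbC$-algebra. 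Your argument is more elementary; the paper's is more structural.

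For ``$F_{0,0}=\bbC\vac\Rightarrow$ simple'': your approach and the paper's are essentially contrapositive to each other. The paper takes a proper ideal $I$, notes $I\cap F_{0,0}=0$ (else $\vac\in I$ forces $I=F$), and then shows via invariance that $(a,v)=\sum_{n,m}\frac{(-1)^{h-\h}}{n!m!}\bigl(\vac,(L(1)^n\Ld(1)^m a)(\cdots)v\bigr)=0$ for all $a$, since each summand pairs $\vac$ with an element of $I\cap F_{0,0}=0$; nondegeneracy (from positive-definiteness of $\<\cdot,\cdot\>$, not Proposition~\ref{prop_inv_bilinear}) then gives $v=0$. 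You instead pick $0\neq a\in I$ and exhibit $\phi(a)(2h-1,2\h-1)a\in I\cap F_{0,0}$ as a nonzero multiple of $\vac$. Your ``main obstacle'' dissolves: for $a$ quasi-primary the invariance \eqref{eq:hermite} gives $\phi(a)(2h-1,2\h-1)a=(-1)^{h-\h}\<a,a\>\vac\neq0$ directly, and your reduction to the quasi-primary case by iterating $L(1),\Ld(1)$ (which preserve $I$) is fine. Note you should \emph{not} invoke Proposition~\ref{prop_inv_bilinear} here, since it assumes simplicity; nondegeneracy is unnecessary for your argument anyway, as you only use $\<a,a\>>0$.
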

\begin{proof}
Recall that there is a bijection between invariant bilinear forms on $F$ and
\begin{align}
\mathrm{Hom}_\bbC(F_{0,0} / L(1)F_{1,0}+\Ld(1)F_{0,1},\bbC) \label{eq_bilinears_all}
\end{align}
by \cite[Proposition 3.2]{Moriwaki20Consistency}. By Proposition \ref{prop_unitary_spec}, \eqref{eq_bilinears_all} is just $F_{0,0}^*$
and $F_{0,0}=\ker L(-1) \cap \ker \Ld(-1)$.
Since $\Ld(-1)$ is a VOA by Proposition \ref{ker_hom},
$F_{0,0}$ is a unital commutative associative $\bbC$-algebra by
\begin{align*}
F_{0,0}\otimes F_{0,0} \rightarrow F_{0,0},\quad a\otimes b \mapsto a(-1,-1)b.
\end{align*}

First, assume that $F$ is simple.
Let $J \subset F_{0,0}$ be an ideal of $\bbC$-algebra such that $J \neq F_{0,0}$.
Take a non-zero dual vector $p:F_{0,0} \rightarrow \bbC$ such that $J \subset \ker (p)$ and let $(-,-)_p:F\otimes F \rightarrow \bbC$ be the unique associated invariant bilinear form in \cite[Proposition 3.2]{Moriwaki20Consistency},
which satisfies $(\vac, a)_p= p(a)$ for any $a \in F_{0,0}$. 
Then, by the invariance,  for $a\in F_{0,0}$ and $v \in J$, 
\begin{align*}
(a,v)_p = (\vac, a(-1,-1)v)_p=p(a(-1,-1)v) =0.
\end{align*}
Hence, the radical of the bilinear form $I_p = \{v \in F\mid (a,v)_p=0 \text{ for any }a\in F \}$ contains $J$.
By the invariance, $I_p \subsetneq F$ is an ideal of a full VOA. Hence, $I_p=0$ by the assumption. Hence, $J =0$ and $F_{0,0}$ is a simple commutative $\bbC$-algebra,
thus, one-dimensional.

Next, assume that $F_{0,0}=\bbC\vac$.
Let $I \subset F$ be an ideal such that $I \neq F$.
Let $a\in F_{h,\h}$ and $v\in I \cap F_{h,\h}$ with $(h,\h)\neq (0,0)$.
Then, by $I = \bigoplus_{\substack{h,\h\geq 0\\ (h,\h)\neq (0,0)}} I \cap F_{h,\h}$,
\begin{align*}
(a,v) = (a(-1,-1)\vac,v) = \sum_{n,m\geq 0} \frac{(-1)^{h-\h}}{n!m!}(\vac, (L(1)^n\Ld(1)^m a) (h-n-1,\h-m-1) v) =0.
\end{align*}
Hence, $v=0$, and thus, $I=0$, i.e., $F$ is simple. 
\end{proof}

\subsubsection{Correlation functions of full vertex operator algebra}
Let us denote $\zeta_{[n]}$ the $n$-tuple $(\zeta_1, \cdots, \zeta_n)$.
Similarly, $a_{[n]} \in F^n$.
The variables $z,\z$ in the vertex operator are formal variables, and the arguments of the correlation function $C_n(u,a_{[n]};\zeta_{[n]})$ are complex numbers. Often it is important to distinguish between them, so in this paper we will use $\zeta$ when we use complex numbers and $z,\z$ when we think they are formal. Denote by $\zee$ the complex conjugate of $\ze \in \bbC$ and set $\ze_{i,j} = \ze_i-\ze_j$, throughout of this paper.

In the definition of full vertex operator algebra, we assumed that the compositions of two vertex operators are convergent to the same real analytic function regardless of the orders and parentheses. In general, it is nontrivial whether the composition of $n$ vertex operators converges. If it converges, it is a physical quantity called a {\it correlation function}, which is a real analytic function on the configuration space
\begin{align*}
X_n(\bbC)=\{(\zeta_1,\dots,\zeta_n)\in\bbC^n \mid \zeta_i \neq \zeta_j \text{ for }i\neq j\}.
\end{align*}
We have shown the existence and several important properties of the correlation functions under the assumption of \textit{local $C_1$-cofiniteness}. In this section we briefly review the results of \cite{Moriwaki24SwissCheese} necessary for this paper.
(The existence of the correlation function was shown by Huang-Kong under the assumption that a VOA is rational $C_2$-cofinite \cite{HK07FullFieldAlgebras}.)

Let $V,W$ be vertex operator algebras.
A full vertex operator algebra is called \textbf{locally $C_1$-cofinite} over $V$ and $W$
if there are $C_1$-cofinite $V$-modules $M_i$
and $C_1$-cofinite $W$-modules $\overline{M}_i$ indexed by some countable set $I$ such that:
\begin{itemize}
\item
$V$ is a subalgebra of $\ker L(-1)$ and $W$ is a subalgebra of $\ker \Ld(-1)$;
\item
$F$ is isomorphic to $\bigoplus_{i \in I} M_i\otimes \overline{M}_i$ as a $V\otimes W$-module;
\item
For any $i,j \in I$, there exists finite subset $I(i,j) \subset I$ such that:
\begin{align}
Y(\bullet,\uz)\bullet \in \bigoplus_{i,j \in I} \bigoplus_{k \in I(i,j)} I\binom{M_k}{M_iM_j} \otimes I\binom{{\overline{M}}_k}{{\overline{M}}_i{\overline{M}}_j},
\label{eq_int_C1_decomp}
\end{align}
where $I\binom{M_k}{M_iM_j}$ and $I\binom{{\overline{M}}_k}{{\overline{M}}_i{\overline{M}}_j}$ are the space of intertwining operators of $V$ and $W$, respectively.
\end{itemize}

Let $\frS_n$ be the symmetric group. For $\si \in \frS_n$, set
\begin{align*}
U_n^\si= \{(\zeta_1,\dots,\zeta_n)\in X_n(\bbC) \mid |\zeta_{\si (1)}|>|\zeta_{\si (2)}|>\cdots >|\zeta_{\si (n)}| \},
\end{align*}
which is an open domain in $X_n(\bbC)$.
Denote $U_n^{\iota}$ by $U_n$ for short, where $\iota \in \frS_n$ is the unit element.
Then, we have \cite[Theorem 3.11 and Corollary 3.8]{Moriwaki24SwissCheese}:
\begin{theorem}
\label{thm_A}
Let $(F,Y,\vac,\nu,\bar\nu)$ be a full vertex operator algebra and assume that $F$ is locally $C_1$-cofinite over some vertex operator algebras.
Then, for any $u\in F^\vee$ and $a_{[n]} = (a_1,\cdots, a_n)\in F^{n}$, the following power series
\begin{align}
\langle u, Y(a_1,\uze_1)Y(a_2,\uze_2)\dots Y(a_r,\uze_r)\vac \rangle,
\label{eq_full_cor_n}
\end{align}
which is given by replacing formal variables $z, \bar z$ by complex numbers $\zeta, \bar \zeta$
in the formal series $\langle u, Y(a_1,\uz_1)Y(a_2,\uz_2)\dots Y(a_r,\uz_r)\vac \rangle$,
is absolutely and locally uniformly convergent in $U_n$.
Moreover, there is a unique family of linear maps for $n\geq 1$
\begin{align}
C_n:F^\vee\otimes F^{n} \rightarrow C^{\om} (X_n(\bbC)), \label{eq_Cor_n_def}
\end{align}
where $C^{\om} (X_n(\bbC))$ is a space of real analytic functions on $X_n(\bbC)$,
such that:
\begin{align*}
C_n(u,a_{[n]};\zeta_{[n]})\Bigl|_{U_{n}^\si}=\langle u, Y(a_{\si (1)},\uze_{\si (1)})Y(a_{\si (2)},\uze_{\si (2)})\dots Y(a_{\si( n)},\uze_{\si (n)})\vac \rangle
\end{align*}
for any $u\in F^\vee$, $a_{[n]} \in F^{n}$ and $\sigma \in \frS_n$ as real analytic functions.
\end{theorem}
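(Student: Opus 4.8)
\textbf{Proof proposal for Theorem \ref{thm_A}.}

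The plan is to build the correlation functions $C_n$ by analytic continuation, gluing together the various orderings of the iterated vertex operator series. First I would establish convergence of \eqref{eq_full_cor_n} on $U_n$. Using the local $C_1$-cofiniteness hypothesis and the decomposition \eqref{eq_int_C1_decomp}, each composition $\langle u, Y(a_1,\uze_1)\cdots Y(a_r,\uze_r)\vac\rangle$ becomes a finite sum of products of matrix elements of intertwining operators of the chiral VOAs $V$ and $W$ acting on the $C_1$-cofinite modules $M_i$ and $\overline M_i$. The key input is that $C_1$-cofiniteness of a module guarantees, by the standard Nahm/Huang-type argument (differential equations from $C_1$-cofiniteness), that each such iterated chiral intertwiner converges on the region $|\zeta_1| > |\zeta_2| > \cdots > |\zeta_n| > 0$; taking the (finite) sum over the index data in $I(i,j)$ preserves convergence, and the holomorphic times antiholomorphic structure gives a real analytic function on $U_n$. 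One has to check that the grading bounds (FO1)--(FO3), or rather their module analogues, make the sum over $I$ effectively finite degree-by-degree so that convergence is not spoiled; this is where I would invoke \cite[Corollary 3.8]{Moriwaki24SwissCheese} directly.

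Next I would handle the analytic continuation across the walls between the chambers $U_n^\sigma$. The crucial fact, already built into the axioms (FV3) for the two-point compositions, is that swapping two adjacent vertex operators $Y(a_i,\uze_i)Y(a_{i+1},\uze_{i+1})$ and re-expanding corresponds to the \emph{same} real analytic function $\mu$ on the overlap region, now continued past $|\zeta_i| = |\zeta_{i+1}|$. Since $X_n(\bbC)$ is connected and the closure of $\bigcup_\sigma U_n^\sigma$ covers a dense open subset whose complement has real codimension at least one (the loci $|\zeta_i| = |\zeta_j|$ with $\zeta_i \neq \zeta_j$), one shows the local pieces patch: concretely, I would argue that the associativity and commutativity encoded in (FV3), applied iteratively to adjacent transpositions, shows that the continuation of $\langle u, Y(a_{\sigma(1)},\uze_{\sigma(1)})\cdots\vac\rangle$ from $U_n^\sigma$ agrees with that of $\langle u, Y(a_{\sigma'(1)},\uze_{\sigma'(1)})\cdots\vac\rangle$ from $U_n^{\sigma'}$ whenever $\sigma' = \sigma\cdot(j\ j{+}1)$, hence for all $\sigma$ by generation of $\frS_n$ by adjacent transpositions. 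This produces a single-valued real analytic function on a connected dense open subset of $X_n(\bbC)$; extending it to all of $X_n(\bbC)$ requires showing there are no further singularities on the walls $|\zeta_i| = |\zeta_j|$, which follows because each wall is, locally, in the interior of the domain of the continued chiral intertwiner (the only genuine singularities of intertwining operators are at coinciding points $\zeta_i = \zeta_j$ and at $\zeta_i = 0$, and $0$ can be moved by an $L(-1)$-translation). Uniqueness of $C_n$ is then immediate: two real analytic functions on the connected $X_n(\bbC)$ agreeing on the nonempty open set $U_n$ coincide.

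The main obstacle I expect is the patching step across the circular walls $|\zeta_i| = |\zeta_j|$ rather than the generic transversal walls: the natural expansion regions for iterated vertex operators are the nested-modulus chambers $U_n^\sigma$, and the transition between, say, $|\zeta_1| > |\zeta_2|$ and $|\zeta_2| > |\zeta_1|$ is governed by the braiding/commutativity of intertwining operators, which is precisely the content that local $C_1$-cofiniteness is designed to control. Making this rigorous for $n$ operators — as opposed to the $n=3$ case of (FV3) — needs an induction: assuming $C_{n-1}$ exists with all its symmetry properties, one writes the $n$-point function near a point of $U_n^\sigma$ by first contracting the two innermost operators via the operator product expansion $Y(Y(a_i,\uze_0)a_j,\uze_j)$ (legitimate by (FV3)-type convergence inside a small disc) and then invoking the inductive $C_{n-1}$; comparing the results obtained by contracting different adjacent pairs gives the needed agreement. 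I would lean on \cite[Theorem 3.11]{Moriwaki24SwissCheese} for the technical heart of this induction and present here only the logical skeleton, since the convergence estimates for products of chiral intertwiners on nested annuli are exactly the results proved there.
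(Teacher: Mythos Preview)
The paper does not prove Theorem \ref{thm_A} at all: it is stated as a quotation of \cite[Theorem 3.11 and Corollary 3.8]{Moriwaki24SwissCheese}, with no argument given in the present paper. Your proposal is therefore not to be compared against a proof in this paper but against the cited reference, and you already acknowledge that the technical heart lies there. The outline you give --- reduce via \eqref{eq_int_C1_decomp} to finite sums of products of chiral intertwiners, invoke $C_1$-cofiniteness to obtain Fuchsian differential equations and hence convergence on $U_n$, then patch across chambers $U_n^\sigma$ by iterating the two-operator commutativity/associativity of (FV3) along adjacent transpositions, with uniqueness by real-analyticity on connected $X_n(\bbC)$ --- is the correct architecture and matches what \cite{Moriwaki24SwissCheese} does.

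One point to tighten: your patching argument as written only crosses walls between \emph{adjacent} chambers $U_n^\sigma$ and $U_n^{\sigma\cdot(j\,j{+}1)}$, but the union $\bigcup_\sigma U_n^\sigma$ is not all of $X_n(\bbC)$ (it misses all loci where some $|\zeta_i|=|\zeta_j|$), so you still need to extend across those real-codimension-one walls even at points where no two moduli coincide pairwise but, say, three coincide simultaneously. Your remark that ``the only genuine singularities are at coinciding points'' is the right intuition, but making it precise requires the differential-equation/regular-singular-point analysis rather than just (FV3), and this is exactly where the operadic machinery of \cite{Moriwaki24SwissCheese} (allowing arbitrary parenthesizations, not just the linear orderings $U_n^\sigma$) earns its keep. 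Since you already defer to that reference for the induction, this is not a gap so much as a place where your sketch understates how much work the citation is doing.
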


We have shown in \cite[Theorem 3.11]{Moriwaki24SwissCheese} that for any parentheses and orders of compositions of the vertex operators, the formal power series converges on some explicitly given open domain in $X_n(\bbC)$, all of which are series expansions in different domains of a single real analytic function $C_n$ \eqref{eq_Cor_n_def}.
The following proposition is one of such examples
and will be used to show the cluster decomposition property in Section \ref
{clustering} (see \cite[Theorem 3.11 and Proposition 1.8]{Moriwaki24SwissCheese}):
\begin{proposition}
\label{prop_cluster_tree}
For any $m,n>0$, let
\begin{align*}
{U}_{m,n}
&=\left\{(\ze_1,\dots,\ze_{m+n})\in X_{m+n}(\bbC) \middle\vert
\begin{array}{l}
|\ze_{i+1,m}|<|\ze_{i,m}|, |\ze_{j+1,m+n}|<|\ze_{j,m+n}|, \\
|\ze_{1,m}|+|\ze_{m+1,m+n}|<|\ze_{m,m+n}|\\
\text{ for }1 \leq i \leq m-2, m+1 \leq j\leq m+n-2
\end{array}\right \}. 
\end{align*}
Assume that $F$ is locally $C_1$-cofinite. Then, the following formal power series
\begin{align*}
\Big\langle u,\exp(L(-1)z_{m+n}+\Ld(-1)\z_{m+n})
Y\Bigl(Y(a_1,\uz_{1,m})Y(a_2,\uz_{2,m})\dots Y(a_{m-1},\uz_{m-1,m})a_m,\uz_{m,m+n}\Bigr)\quad&\\
Y(a_{m+1},\uz_{m+1,m+n})\dots Y(a_{m+n-1},\uz_{m+n-1,m+n})a_{m+n} \Big\rangle&
\end{align*}
is absolutely and locally uniformly convergent in $U_{m,n}$ for any $u\in F^\vee$ and $a_1,\dots,a_{m+n} \in F$
and coincides with the expansion of $C_{m+n}(u,a_{[m+n]};\zeta_{[m+n]})$ on $U$ after substituting $z_{i,j}$ (resp. $\z_{i,j}$) with $\ze_i-\ze_j$ (resp. $\zee_i-\zee_j$).
\end{proposition}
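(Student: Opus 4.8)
The plan is to deduce the statement from Theorem \ref{thm_A} together with the iterate-vs-composition convergence results of \cite[Theorem 3.11]{Moriwaki24SwissCheese}, following the general strategy used there: identify the domain $U_{m,n}$ as a region where a particular nesting of compositions and iterates converges, and show that the formal series written in the statement is exactly that nesting. First I would unwind the notation: the inner block $Y(a_1,\uz_{1,m})\cdots Y(a_{m-1},\uz_{m-1,m})a_m$ is a composition of $m-1$ vertex operators applied to $a_m$, with all arguments $z_{i,m} = z_i - z_m$ measured relative to $z_m$. By the iterated version of (FV3) — i.e.\ by repeated use of the associativity/commutativity encoded in \eqref{eq_conv_rad2}, \eqref{eq_conv_rad} and the local $C_1$-cofiniteness decomposition \eqref{eq_int_C1_decomp} — this block converges, as an $F$-valued (or rather $\overline{F}$-valued in a suitable completion) real analytic function, on the nested region $|\ze_{1,m}| > |\ze_{2,m}| > \cdots > |\ze_{m-1,m}|$, and the limit is a vector-valued function whose pairing against $F^\vee$ reproduces the correlation function restricted to the appropriate subdomain of $U_m$. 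The condition $|\ze_{1,m}| + |\ze_{m+1,m+n}| < |\ze_{m,m+n}|$ is precisely what guarantees that, after forming this inner vector, the ``radius'' of the $z_{m,m+n}$-expansion of $Y(-,\uz_{m,m+n})$ dominates; this is the standard convergence-of-iterate condition from (FV3), now applied with the first slot filled by a convergent series rather than a single vector.

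Next I would assemble the outer layer. Having the convergent inner vector $b(\uze_{[m]}) := Y(a_1,\uze_{1,m})\cdots a_m \in \overline{F}$ depending real-analytically on $\ze_1,\dots,\ze_m$ in the stated cone, one inserts it into $Y(b(\uze_{[m]}),\uz_{m,m+n}) Y(a_{m+1},\uz_{m+1,m+n})\cdots Y(a_{m+n-1},\uz_{m+n-1,m+n}) a_{m+n}$. This is again a composition of $n$ vertex operators (the first with a series-valued first argument, which is permitted as noted in the paragraph after \eqref{eq_L0_cov}) applied to $a_{m+n}$, hence converges on $|\ze_{m,m+n}| > |\ze_{m+1,m+n}| > \cdots > |\ze_{m+n-1,m+n}|$ by the same iterated-(FV3) argument. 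Finally, the prefactor $\exp(L(-1)z_{m+n} + \Ld(-1)\z_{m+n})$ is the translation operator: by Proposition \ref{prop_translation}(1)–(4), conjugating the correlation function by $\exp(zD+\z\D) = \exp(zL(-1)+\z\Ld(-1))$ shifts all arguments by $z_{m+n}$, converting the ``relative to $z_{m+n}$'' series into one with absolute arguments $\ze_i$. Intersecting all the convergence conditions gives exactly the region $U_{m,n}$, and since every intermediate limit is real analytic and agrees, on the nonempty open overlap $U_{m,n} \cap U_{m+n}$, with the corresponding expansion of the single real analytic function $C_{m+n}$ (each step being a rearrangement of one and the same convergent multiple series, by Theorem \ref{thm_A} and the uniqueness clause there), the claimed identity holds on all of $U_{m,n}$ by the identity theorem for real analytic functions, provided $U_{m,n}$ is connected — which it is, being a product-like cone.

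The main obstacle I anticipate is not any single estimate but the bookkeeping needed to legitimately treat the inner block $b(\uze_{[m]})$ as an argument of a further vertex operator and to track in which completion of $F$ it lives: one must check that substituting a convergent $\overline{F}$-valued series into the first slot of $Y(-,\uz_{m,m+n})$ still produces a convergent double series, and that the order of summation may be interchanged so that the whole expression equals a single absolutely convergent multiple series in all the variables $z_{i,j}, \z_{i,j}$. This is exactly the content of the nested convergence machinery of \cite[Theorem 3.11]{Moriwaki24SwissCheese}, so in practice the proof reduces to verifying that the specific cone $U_{m,n}$ lies inside the domain to which that theorem applies for the particular parenthesization/ordering appearing here, and then invoking the uniqueness of the analytic continuation $C_{m+n}$. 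I would therefore structure the write-up as: (i) recall the relevant case of \cite[Theorem 3.11]{Moriwaki24SwissCheese}; (ii) identify the parenthesization; (iii) verify the cone inclusion by elementary inequalities; (iv) apply Proposition \ref{prop_translation} for the translation prefactor; (v) conclude by analytic continuation from the overlap with $U_{m+n}$.
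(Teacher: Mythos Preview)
Your proposal is correct and matches the paper's treatment: the paper does not give an independent proof of this proposition but simply cites it as a special case of \cite[Theorem 3.11 and Proposition 1.8]{Moriwaki24SwissCheese}, which is exactly the machinery you invoke and whose cone condition you correctly identify with $U_{m,n}$.
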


We also use the following results \cite[Theorem 3.11]{Moriwaki24SwissCheese}:
\begin{theorem}
\label{thm_B}
Under the assumption of Theorem \ref{thm_A}, the family of linear maps $C_n$ satisfies
\begin{description}
\item[(Symmetry)]
For any permutation $\sigma \in \frS_n$, $u \in F^\vee$ and $a_1,\dots,a_n \in F$,
\begin{align*}
C_n(u,a_n,\dots,a_n;\ze_1,\dots,\ze_n)=C_n(u,a_{\sigma(1)},\dots,a_{\sigma(n)};\ze_{\sigma(1)},\dots,\ze_{\sigma(n)}).
\end{align*}
\item[(Vacuum)]
For any $u \in F^\vee$ and $a_1,\dots,a_n \in F$,
\begin{align*}
C_{n+1}(u,a_1,\dots,a_n,\vac;\ze_1,\dots,\ze_n,\ze_{n+1})=C_n(u,a_1,\dots,a_n;\ze_1,\dots,\ze_n).
\end{align*}
\item[(Infinitesimal conformal covariance)]
\begin{align*}
C_n(u,L(-1)_i a_{[n]};\ze_{[n]}) &= \frac{d}{dz_i} C_n(u, a_{[n]};\ze_{[n]})\\
C_n(u,\Ld(-1)_i a_{[n]};\ze_{[n]}) &= \frac{d}{d\z_i} C_n(u, a_{[n]};\ze_{[n]})\\
C_n(L(-1)^* u,a_{[n]};\ze_{[n]}) &= \sum_{i=1}^n C_n(u, L(-1)_i a_{[n]};\ze_{[n]})\\
C_n(\Ld(-1)^* u,a_{[n]};\ze_{[n]}) &= \sum_{i=1}^n C_n(u, \Ld(-1)_i a_{[n]};\ze_{[n]})\\
C_n(L(0)^* u,a_{[n]};\ze_{[n]}) &= \sum_{i=1}^n C_n(u, (\ze_i\frac{d}{dz_i}+L(0)_i) a_{[n]};\ze_{[n]})\\
C_n(\Ld(0)^* u,a_{[n]};\ze_{[n]}) &= \sum_{i=1}^n C_n(u, (\zee_i\frac{d}{d\zee_i}+\Ld(0)_i) a_{[n]};\ze_{[n]})\\
C_n(L(1)^* u,a_{[n]};\ze_{[n]}) &= \sum_{i=1}^n C_n(u, (\ze_i^2\frac{d}{dz_i}+\ze_iL(0)_i+L(1)_i) a_{[n]};\ze_{[n]})\\
C_n(\Ld(1)^* u,a_{[n]};\ze_{[n]}) &= \sum_{i=1}^n C_n(u, (\zee_i^2\frac{d}{d\zee_i}+\zee_i\Ld(0)_i+\Ld(1)_i) a_{[n]};\ze_{[n]})
\end{align*}
\end{description}
\end{theorem}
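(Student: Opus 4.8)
The plan is to reduce each of the three assertions to a computation on the single open domain $U_n$ (or $U_n^\si$), on which Theorem \ref{thm_A} identifies $C_n$ with an explicit composition of vertex operators applied to $\vac$, and then to propagate the resulting identity to all of $X_n(\bbC)$ by real-analyticity. The last step is justified because $X_n(\bbC)$ is connected --- it is $\bbC^n$ with a finite union of complex hyperplanes removed --- so two real analytic functions that agree on the nonempty open set $U_n$ agree everywhere; and the absolute, locally uniform convergence of Theorem \ref{thm_A} lets us differentiate the defining series termwise, interchange derivatives with the sum, and substitute the complex numbers $\ze_i$ into formal-variable identities such as Proposition \ref{prop_translation} and \eqref{eq_L0_cov} term by term.

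\emph{Symmetry} is essentially already contained in Theorem \ref{thm_A}: both $C_n(u,a_{[n]};\ze_{[n]})$ and $C_n(u,a_{\si(1)},\dots,a_{\si(n)};\ze_{\si(1)},\dots,\ze_{\si(n)})$ restrict on $U_n^\si$ to the same series $\langle u,Y(a_{\si(1)},\uze_{\si(1)})\cdots Y(a_{\si(n)},\uze_{\si(n)})\vac\rangle$, hence agree on all of $X_n(\bbC)$. For \emph{Vacuum}, on $U_{n+1}$ the axiom (FV2), $Y(\vac,\uz)=\id$, deletes the last factor, so there $C_{n+1}(u,a_{[n]},\vac;\ze_{[n+1]})$ equals $\langle u,Y(a_1,\uze_1)\cdots Y(a_n,\uze_n)\vac\rangle=C_n(u,a_{[n]};\ze_{[n]})$; the right-hand side, viewed as a function on $X_{n+1}(\bbC)$ via the projection forgetting $\ze_{n+1}$, is real analytic, so equality on the open set $U_{n+1}$ forces equality on $X_{n+1}(\bbC)$.

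For the \emph{infinitesimal conformal covariance} identities I would, on $U_n$, transport the relevant Virasoro mode across the ordered product $Y(a_1,\uze_1)\cdots Y(a_n,\uze_n)\vac$. The two derivative identities follow from $L(-1)=D$, $\Ld(-1)=\D$ (by (FVOA3)) together with $Y(Da,\uz)=\frac{d}{dz}Y(a,\uz)$ and $Y(\D a,\uz)=\frac{d}{d\z}Y(a,\uz)$ (Proposition \ref{prop_translation}(1)): inserting $L(-1)a_i$ (resp.\ $\Ld(-1)a_i$) into the $i$-th slot differentiates the $i$-th factor in $\ze_i$ (resp.\ $\zee_i$). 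The two identities with $L(-1)^*$, $\Ld(-1)^*$ come from commuting $D$ (resp.\ $\D$) to the right via $[D,Y(a,\uz)]=Y(Da,\uz)$ (Proposition \ref{prop_translation}(5)); the term reaching $\vac$ vanishes because $D\vac=\D\vac=0$ (Proposition \ref{prop_translation}(2)), leaving $\sum_i C_n(u,L(-1)_i a_{[n]};\ze_{[n]})$. For the four identities with $L(0),L(1)$ and $\Ld(0),\Ld(1)$ the same scheme applies once one has the commutators of these modes with $Y(a,\uz)$: for $L(0),L(1)$ these are read off from Proposition \ref{prop_translation}(6) applied to the conformal vector $\nu$ (legitimate since $\D\nu=0$ by (FVOA1), with $\nu(1,-1)=L(0)$, $\nu(2,-1)=L(1)$; for $L(0)$ this is \eqref{eq_L0_cov}), and the barred versions from the $z\leftrightarrow\z$ mirror applied to $\bar\nu$ (using $D\bar\nu=0$; for $\Ld(0)$ this is again \eqref{eq_L0_cov}). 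Since $L(0)\vac=\Ld(0)\vac=0$ as $\vac\in F_{0,0}$, while $L(1)\vac\in F_{-1,0}$ and $\Ld(1)\vac\in F_{0,-1}$ vanish by (FO2), commuting $L(m)$ (resp.\ $\Ld(m)$) to the right again annihilates the boundary term on $\vac$ and the remaining slot-by-slot contributions assemble into the claimed expressions $\ze_i\frac{d}{dz_i}+\cdots$. Each identity then extends from $U_n$ to $X_n(\bbC)$ by real-analyticity.

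There is no single hard step: the substance of the theorem lies in Theorem \ref{thm_A} (existence, uniqueness, convergence of $C_n$) and in Proposition \ref{prop_translation} (the algebraic identities for $D,\D$ and the Virasoro modes). The only real care needed --- and thus the main, rather mild, obstacle --- is the bookkeeping: confirming that each formal-variable identity survives the substitution $z_i\mapsto\ze_i$ (guaranteed by the locally uniform convergence), reading $\frac{d}{dz_i}$ throughout as the holomorphic derivative $\partial_{\ze_i}$ acting on the now-convergent real analytic function, and checking in each of the eight cases that the term produced when the mode reaches $\vac$ indeed vanishes.
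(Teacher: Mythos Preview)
The paper does not give its own proof of this theorem: it is quoted verbatim from \cite[Theorem 3.11]{Moriwaki24SwissCheese}, so there is nothing to compare against here. Your sketch is the natural argument and is essentially correct --- verify each identity on the nonempty open set $U_n$ using the explicit series representation from Theorem \ref{thm_A}, the algebraic commutator identities (Proposition \ref{prop_translation} and \eqref{eq_L0_cov}), and the vanishing of the boundary terms on $\vac$, then propagate to $X_n(\bbC)$ by real-analyticity; this is exactly how one would expect the cited proof to run.

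Two small remarks. First, for the barred identities you invoke the ``$z\leftrightarrow\z$ mirror'' of Proposition \ref{prop_translation}(6) with $D\bar\nu=0$; this mirror statement is not literally written in the paper, so in a self-contained write-up you would either state it or note that the whole full-VOA setup is symmetric under exchanging the chiral and antichiral data. Second, your remark that locally uniform convergence justifies termwise differentiation and substitution is the right technical point, and is indeed the only place requiring genuine analytic care.
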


Hereafter, we assume that a full vertex operator algebra is
\begin{itemize}
\item
simple;
\item
locally $C_1$-cofinite;
\item
unitary with anti-linear involution $\phi:F \rightarrow F$.
\end{itemize}

In this case, 
by Proposition \ref{prop_unitary_vacuum}, $F_{0,0}=\bbC\vac$.
Let $\<\vac| \in F_{0,0}^\vee$ be the unique dual vector such that $\<\vac|\vac \>=1$.
Then, by Proposition \ref{prop_unitary_spec},
\begin{align*}
L(1)F_{1,0}+\Ld(1)F_{0,1}=0,
\end{align*}
which implies that
\begin{align}
L(n)^* \<\vac| = \Ld(n)^*\<\vac|=0 \label{eq_dual_vac}
\end{align}
for all $n =-1,0,1$.

One of the most important observable in quantum field theory is the \textit{n-point correlation function} (or the \textit{vacuum expectation value}) is defined as
\begin{align*}
S_n^{\pmb{a}}(\pmb{\zeta})=C_{[n]}(\<\vac|,a_{[n]},\ze_{[n]}),
\end{align*}
where we use the following notation:
\begin{align*}
\pmb{a}=(a_1,\cdots,a_n), \qquad \pmb{\zeta}=(\zeta_1,\cdots,\zeta_n).
\end{align*}

By the infinitesimal conformal invariance  and \eqref{eq_dual_vac}, we have:
\begin{corollary}[infinitesimal conformal invariance]
\label{cor_inf_conf}
Assume $a_{i} \in F_{h_i,\h_i}$ ($i=1,2,\dots,n$) are quasi-primary. Then,
\begin{align*}
(\sum_{i=1}^n \frac{d}{d\ze_i}) S_n^{\pmb{a}}(\pmb{\zeta})&=0\\
 (\sum_{i=1}^n \frac{d}{d\zee_i})S_n^{\pmb{a}}(\pmb{\zeta})&=0\\
 (\sum_{i=1}^n \ze_i\frac{d}{d\ze_i}+h_i) S_n^{\pmb{a}}(\pmb{\zeta})&=0\\
 (\sum_{i=1}^n \zee_i\frac{d}{d\zee_i}+\h_i) S_n^{\pmb{a}}(\pmb{\zeta})&=0\\
 (\sum_{i=1}^n \ze_i^2\frac{d}{d\ze_i}+h_i\ze_i) S_n^{\pmb{a}}(\pmb{\zeta})&=0\\
 (\sum_{i=1}^n \zee_i^2\frac{d}{d\zee_i}+\h_i\zee_i) S_n^{\pmb{a}}(\pmb{\zeta})&=0.
\end{align*}
\end{corollary}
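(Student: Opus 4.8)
The plan is to specialise the infinitesimal conformal covariance identities of Theorem \ref{thm_B} to the dual vacuum $u = \<\vac|$ and then exploit \eqref{eq_dual_vac} together with the quasi-primary and homogeneity properties of the vectors $a_i$. Concretely, taking $u = \<\vac|$ in the identity $C_n(L(-1)^* u, a_{[n]}; \ze_{[n]}) = \sum_{i} C_n(u, L(-1)_i a_{[n]}; \ze_{[n]})$ and using $L(-1)^*\<\vac| = 0$ from \eqref{eq_dual_vac}, one gets $\sum_{i} C_n(\<\vac|, L(-1)_i a_{[n]}; \ze_{[n]}) = 0$; combined with the first identity of Theorem \ref{thm_B}, which after substituting complex variables reads $C_n(\<\vac|, L(-1)_i a_{[n]}; \ze_{[n]}) = \frac{d}{d\ze_i} S_n^{\pmb{a}}(\pmb{\zeta})$, this is the first asserted equation. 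The conjugate argument, using $\Ld(-1)^*\<\vac| = 0$ and the second identity of Theorem \ref{thm_B}, gives the second equation.

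For the dilation equations I would use the $L(0)^*$ and $\Ld(0)^*$ identities with $u = \<\vac|$. Again the left-hand sides vanish by \eqref{eq_dual_vac}. Since $a_i \in F_{h_i,\h_i}$, linearity of $C_n$ in the $i$-th argument gives $C_n(\<\vac|, L(0)_i a_{[n]}; \ze_{[n]}) = h_i\, C_n(\<\vac|, a_{[n]}; \ze_{[n]}) = h_i\, S_n^{\pmb{a}}(\pmb{\zeta})$, and likewise $C_n(\<\vac|, \Ld(0)_i a_{[n]}; \ze_{[n]}) = \h_i\, S_n^{\pmb{a}}(\pmb{\zeta})$. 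Hence the right-hand sides become $\sum_i(\ze_i \frac{d}{d\ze_i} + h_i) S_n^{\pmb{a}}(\pmb{\zeta})$ and $\sum_i(\zee_i \frac{d}{d\zee_i} + \h_i) S_n^{\pmb{a}}(\pmb{\zeta})$, which therefore both vanish.

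The special conformal equations follow in the same way from the $L(1)^*$ and $\Ld(1)^*$ identities: the left-hand sides vanish by \eqref{eq_dual_vac}, the $L(0)_i$ and $\Ld(0)_i$ terms contribute factors $h_i$, $\h_i$ as above, and — this is the only place the quasi-primary hypothesis is used — the $L(1)_i$ and $\Ld(1)_i$ terms drop out because $L(1)a_i = \Ld(1)a_i = 0$. There is no substantive obstacle; the only point worth checking is that $\<\vac| \in F^\vee$ genuinely lies in the domain where Theorem \ref{thm_B} and \eqref{eq_dual_vac} apply, which is guaranteed under the standing assumptions of simplicity, local $C_1$-cofiniteness and unitarity by Propositions \ref{prop_unitary_spec} and \ref{prop_unitary_vacuum}.
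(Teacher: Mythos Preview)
Your argument is correct and is exactly the derivation the paper has in mind: it states the corollary as an immediate consequence of the infinitesimal conformal covariance identities in Theorem~\ref{thm_B} specialised to $u=\<\vac|$, combined with \eqref{eq_dual_vac}, and does not spell out any further details. Your explicit unpacking of how the $L(0)_i$ terms produce $h_i$ by homogeneity and how the $L(1)_i$, $\Ld(1)_i$ terms vanish by the quasi-primary hypothesis is precisely the intended content.
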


\subsubsection{Energy bounds and spectral density}
For a unitary VOA, an estimate of the field norm, called polynomial energy bound, was introduced in \cite{CKLW18}.

Let $(F,\<\bullet,\bullet\>)$ be a unitary full VOA.
\begin{itemize}
\item[(PEB)]
We say that $F$ satisfies \textbf{polynomial energy bounds} if the following condition holds:
For any $a\in F$, there exist positive integers $p_a, q_a$ and a constant $M_a > 0$ such that, for all $r,s\in \bbR$ and all $b\in F$
\begin{align}\label{eq:PEB}
||a(r,s)b|| \leq M_a(|r|+|s| + 1)^{p_a} ||(L(0)+\Ld(0)+\bb1)^{q_a} b||.
\end{align}
\item[(PSD)]
We say that $F$ satisfies \textbf{polynomial spectral density} if
there exists $L$ such that for any $n\in \bbZ_{\geq 0}$,
\begin{align}\label{eq:density}
\#\{(h + \h \in \bbR^2 : N \leq h + \h < N+1, F_{h,\h}\neq 0 \} < C(N+2)^L.
\end{align}
\end{itemize}

\begin{lemma}\label{lm:PEBestimate}
 Let us assume (PEB) and let $\Upsilon \subset F$ be a finite set.
 Then for any $n$ and $\pmb{a} = (a_1,\cdots a_n) \in \Upsilon^n$, there are $M_\Upsilon, Q_\Upsilon > 0$ such that
 \begin{align*}
  &|\<\vac, a_1(r_1, s_1) \cdots a_n(r_n, s_n)\vac\>| \\
  &\le M_\Upsilon^n (|r_1 + s_1|+1 +\cdots |r_n + s_n|+1)^{nQ_\Upsilon} (|r_1 - s_1|+1 +\cdots |r_n - s_n|+1)^{nQ_\Upsilon}
 \end{align*}
\end{lemma}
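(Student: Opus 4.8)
The plan is to iterate the single-field bound (PEB) $n$ times, inserting a resolution of the identity into the vacuum expectation value at each step and controlling the resulting "energy" factors $\|(L(0)+\Ld(0)+\bb1)^{q}b\|$ that appear. More precisely, write $a_n(r_n,s_n)\vac = v_n$, so that $\|v_n\| \le M_{a_n}(|r_n|+|s_n|+1)^{p_{a_n}}\|\vac\| = M_{a_n}(|r_n|+|s_n|+1)^{p_{a_n}}$ since $(L(0)+\Ld(0)+\bb1)^{q}\vac = \vac$. Next, $a_{n-1}(r_{n-1},s_{n-1})v_n$ is estimated by (PEB), but now I need to bound $\|(L(0)+\Ld(0)+\bb1)^{q_{a_{n-1}}}v_n\|$ rather than $\|v_n\|$. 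Here I use that $a_n(r_n,s_n)$ shifts the $L(0)+\Ld(0)$-eigenvalue in a controlled way: if $a_n \in F_{h,\h}$ then $a_n(r_n,s_n)$ maps the $(L(0)+\Ld(0))$-eigenspace of eigenvalue $\lambda$ into that of eigenvalue $\lambda + h + \h - r_n - s_n - 2$ (this follows from \eqref{eq_L0_cov}). So $v_n$ lies in a single $(L(0)+\Ld(0))$-eigenspace with eigenvalue bounded by a constant (depending on $\Upsilon$) plus $|r_n + s_n|$, hence $\|(L(0)+\Ld(0)+\bb1)^{q_{a_{n-1}}}v_n\| \le C(|r_n+s_n|+1)^{q_{a_{n-1}}}\|v_n\|$ for a constant $C$ depending only on $\Upsilon$ and $q_{a_{n-1}}$.

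Iterating, at step $k$ (peeling off $a_{n-k+1}$) the vector accumulates an $(L(0)+\Ld(0))$-eigenvalue bounded by $\mathrm{const} \cdot (|r_n+s_n|+1)\cdots(|r_{n-k+1}+s_{n-k+1}|+1)$ up to additive constants, and correspondingly each application of (PEB) contributes a factor polynomial in $(|r_{n-k+1}+s_{n-k+1}|+1)$ times a power of the accumulated energy. Collecting all these factors, one gets a bound of the form $M^n \prod_{i=1}^n (|r_i+s_i|+1)^{Q}$ for suitable $M = M_\Upsilon$, $Q = Q_\Upsilon$ depending only on $\Upsilon$ (taking $Q_\Upsilon$ to be, say, the maximum over $a \in \Upsilon$ of $p_a + q_a$ times a constant absorbing the iteration, and using that a product of $(|r_i+s_i|+1)$'s is bounded by $(\sum_i(|r_i+s_i|+1))$ raised to the number of factors). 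This already yields the $(|r_1+s_1|+1+\cdots+|r_n+s_n|+1)^{nQ_\Upsilon}$ factor in the claimed inequality.

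The remaining factor $(|r_1-s_1|+1+\cdots+|r_n-s_n|+1)^{nQ_\Upsilon}$ requires care because (PEB) as stated does not separately control the "$\z$-momentum" $r-s$. However, since each $a_i \in \Upsilon$ is a finite sum of homogeneous vectors and one may assume $\Upsilon$ consists of homogeneous vectors $a_i\in F_{h_i,\h_i}$, the mode $a_i(r_i,s_i)$ acting on $F_{h',\h'}$ produces a vector in $F_{h'+h_i-r_i-1,\h'+\h_i-s_i-1}$. Starting from $\vac \in F_{0,0}$, the product $a_1(r_1,s_1)\cdots a_n(r_n,s_n)\vac$ lies in $F_{H,\bar H}$ with $H = \sum(h_i-r_i-1)$, $\bar H = \sum(\h_i - s_i-1)$; pairing against $\<\vac| \in F_{0,0}^\vee$ forces $H = \bar H = 0$, i.e.\ $\sum r_i = \sum h_i - n$ and $\sum s_i = \sum \h_i - n$, so $\sum(r_i - s_i) = \sum(h_i - \h_i)$ is bounded by a constant depending on $\Upsilon$. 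This linear constraint, combined with the fact that $\|(L(0)+\Ld(0)+\bb1)^q\cdot\|$ controls $|r_i+s_i|$ but the above constraint on $\sum(r_i - s_i)$ alone does not bound each $|r_i - s_i|$, shows that actually the $(|r_i-s_i|+1)$ factors are redundant: on the support of a nonzero expectation value each $|r_i - s_i| \le |r_i + s_i| + \mathrm{const}$ is false in general, so instead one argues that the bound is vacuously improved — one simply includes the extra factor since it only makes the right-hand side larger. Concretely, since $(|r_1-s_1|+1+\cdots)^{nQ_\Upsilon} \ge 1$, the inequality with both factors is weaker than the one with only the first factor, so it follows immediately from the iterated (PEB) estimate.

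The main obstacle is the bookkeeping in the iteration: one must track how the degree $q_{a_{n-k+1}}$ of the energy operator in (PEB) interacts with the accumulated $(L(0)+\Ld(0))$-eigenvalue, ensuring the exponents combine into a uniform $nQ_\Upsilon$ rather than growing factorially or depending on the order of the $a_i$. The key point that makes this work is that $\Upsilon$ is \emph{finite}, so $p_a, q_a, M_a$ are bounded uniformly over $a\in\Upsilon$, and the eigenvalue shift $|h_a + \h_a|$ is likewise bounded; a clean way to organize the induction is to prove, by induction on $k$, a statement of the form ``$\|(L(0)+\Ld(0)+\bb1)^{q} a_{n-k+1}(r_{n-k+1},s_{n-k+1})\cdots a_n(r_n,s_n)\vac\| \le M_\Upsilon^k \prod_{i=n-k+1}^n (|r_i+s_i|+1)^{Q_\Upsilon + q}$'' for every fixed $q \le \max_{a\in\Upsilon} q_a$, with $Q_\Upsilon$ chosen large enough at the outset (e.g.\ $Q_\Upsilon = \max_{a\in\Upsilon}(p_a + q_a) + \lceil\max_{a\in\Upsilon}|h_a+\h_a|\rceil + 1$).
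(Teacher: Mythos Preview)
Your iteration of (PEB) is the right idea and matches the paper's approach, but there is a genuine slip in how you read the (PEB) factor. Each application of (PEB) produces a factor $(|r_i|+|s_i|+1)^{p_{a_i}}$, \emph{not} $(|r_i+s_i|+1)^{p_{a_i}}$. These are very different: $|r|+|s|$ can be arbitrarily large while $|r+s|$ stays bounded. So after iterating you obtain
\[
M_\Upsilon^n \prod_{j} (|r_j|+|s_j|+1)^{p_\Upsilon}\,\bigl(|r_j+\cdots+r_n|+|s_j+\cdots+s_n|+1\bigr)^{q_\Upsilon},
\]
not a bound purely in the quantities $|r_j+s_j|$. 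Your subsequent claim that the $(|r_j-s_j|+1)$ factor is ``redundant'' because it is $\ge 1$ would be fine \emph{if} you had already established the bound involving only $\sum(|r_j+s_j|+1)$, but you have not: the $(|r_j|+|s_j|+1)^{p_\Upsilon}$ factors are still sitting there unconverted.

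The fix is exactly what the paper does: use the elementary inequality $|r|+|s|\le |r+s|+|r-s|$, so that
\[
|r_j|+|s_j|+1 \le (|r_j+s_j|+1)+(|r_j-s_j|+1),
\]
and then bound the product $\prod_j(|r_j|+|s_j|+1)^{p_\Upsilon}$ by $\bigl(\sum_j(|r_j+s_j|+1)\bigr)^{nQ_\Upsilon}\bigl(\sum_j(|r_j-s_j|+1)\bigr)^{nQ_\Upsilon}$ for a suitable $Q_\Upsilon$. In other words, the $|r-s|$ factor in the statement is not decorative; it is precisely what absorbs the ``angular'' part of the (PEB) factor $(|r|+|s|+1)^p$. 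Your digression about the constraint $\sum_i(r_i-s_i)=\sum_i(h_i-\h_i)$ is on the right track but, as you yourself note, controls only the total sum and not the individual terms, so it does not rescue the argument as written.
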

\begin{proof}
 We calculate, using $|r| + |s| \le |r-s| + |r+s|$,
 \begin{align}
  & |\<\vac, a_1(r_1, s_1) \cdots a_n(r_n, s_n)\vac\>| \nonumber \\
  &\le \|a_1(r_1, s_1) \cdots a_n(r_n, s_n)\vac\| \nonumber \\
  &\le M_{a_1}(|r_1|+|s_1|+1)^{p_{a_1}} \|(L_0+\bar L_0 + \bb1)^{q_{a_1}}a_1(r_2, s_2) \cdots a_n(r_n, s_n)\vac\| \nonumber \\
  &\le M_{a_1}(|r_1|+|s_1|+1)^{p_{a_1}} (|r_2 + \cdots + r_2| + |s_2 + \cdots + s_n| + 1)^{q_{a_1}} 
  \|a_2(r_2, s_2) \cdots a_n(r_n, s_n)\vac\| \nonumber \\
  &\le M_\Upsilon^n \prod_{j=1}^n (|r_j|+|s_j|+1)^{p_\Upsilon} (|r_j+\cdots +r_j| + |s_1+\cdots+s_n|+1)^{q_\Upsilon} \nonumber \\
  &\le M_\Upsilon^n (|r_1|+|s_1|+1 +\cdots |r_n|+|s_n|+1)^{nq_\Upsilon} \prod_{j=1}^n (|r_j|+|s_j|+1)^{p_\Upsilon} \nonumber \\
  &\le M_\Upsilon^n (|r_1 + s_1|+1 +\cdots |r_n + s_n|+1)^{nQ_\Upsilon} (|r_1 - s_1|+1 +\cdots |r_n - s_n|+1)^{nQ_\Upsilon} \label{eq:fullFourier}
\end{align}
\end{proof}

Let us remark that there are cases where our technical conditions can be checked by
looking at the chiral components.
\begin{proposition}
Let $F$ be a full vertex operator algebra. Assume that the canonical vertex operator subalgebras $\ker L(-1)$ and $\ker \Ld(-1)$ are $C_2$-cofinite. Then, $F$ satisfies the polynomial spectral density and the local $C_1$-cofiniteness.
\end{proposition}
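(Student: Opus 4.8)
The plan is to take as chiral algebras the two canonical subalgebras themselves and to reduce both conclusions to standard finiteness properties of $C_2$-cofinite vertex operator algebras. Put $V=\ker L(-1)$ and $W=\ker\Ld(-1)$; by Proposition~\ref{ker_hom} these are vertex operator algebras (with conformal vectors $\bar\nu$ and $\nu$), $F$ is a $V\otimes W$-module, and $Y(\bullet,\uz)$ is an intertwining operator of $V\otimes W$-modules. By hypothesis $V$ and $W$ are $C_2$-cofinite, hence $C_1$-cofinite; and $V\otimes W$ is $C_2$-cofinite as well, since $C_2(V)\otimes W+V\otimes C_2(W)\subset C_2(V\otimes W)$, so that $(V\otimes W)/C_2(V\otimes W)$ is a quotient of the finite-dimensional space $(V/C_2(V))\otimes(W/C_2(W))$. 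Because $\ker L(-1)\subset\bigoplus_n F_{0,n}$ and $\ker\Ld(-1)\subset\bigoplus_n F_{n,0}$ (a general property of full vertex operator algebras, see the proof of Proposition~\ref{prop_unitary_spec}), $V$ is graded by $\h$ and $W$ by $h$; moreover, by Proposition~\ref{prop_translation}(7) the $V$- and $W$-actions on $F$ commute, and the fields $Y(a,\uz)$ are $\z$-independent for $a\in W$ and $z$-independent for $a\in V$. Finally, by (FO2), (FO3) the space $F$, graded by the $(L(0)+\Ld(0))$-eigenvalue (which is the conformal grading of the $V\otimes W$-module structure), is a grading-restricted, lower-bounded $V\otimes W$-module.

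For (PSD) I would use that a $C_2$-cofinite vertex operator algebra has finitely many irreducible modules and that every grading-restricted lower-bounded module has finite length with composition factors among them; consequently there is a finite set $\Sigma\subset\bbR$ such that the conformal grading of any such module is contained in $\Sigma+\bbZ_{\ge0}$. Applied to $F$ over $V\otimes W$ this gives $h+\h\in\Sigma+\bbZ_{\ge0}$ whenever $F_{h,\h}\neq0$. Hence in a unit window $N\le h+\h<N+1$ at most $|\Sigma|$ values of $h+\h$ can occur, and for each such value $v$ the pairs $(h,\h)$ with $h+\h=v$, $h-\h\in\bbZ$ (by (FO1)) and $h,\h\ge0$ number at most $2v+1\le 2N+3$; therefore $\#\{(h,\h)\mid N\le h+\h<N+1,\ F_{h,\h}\neq0\}\le|\Sigma|(2N+3)<C(N+2)$, i.e.\ (PSD) holds with $L=1$.

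For local $C_1$-cofiniteness I would again use $V$ and $W$ as above. Since $V\otimes W$ is $C_2$-cofinite and $F$ is grading-restricted, $F$ has finite length over $V\otimes W$; in particular any decomposition of $F$ into $V\otimes W$-submodules has only finitely many nonzero summands. The decisive step is to produce a (then necessarily finite) index set $I$, $V$-modules $M_i$ and $W$-modules $\overline{M}_i$ with $F\cong\bigoplus_{i\in I}M_i\otimes\overline{M}_i$ as $V\otimes W$-modules, compatibly with $Y(\bullet,\uz)$ in the sense of \eqref{eq_int_C1_decomp}; in the rational case this is the Huang--Kong decomposition \cite{HK07FullFieldAlgebras}, and in general I expect it to follow from the finite-tensor-category structure of the module categories of $V$ and $W$ (using that the irreducible $V\otimes W$-modules are tensor products of irreducibles) together with the full vertex operator algebra structure of $F$. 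Granting this, each $M_i$ is a grading-restricted $V$-module — its graded pieces sit inside those of $F$ — hence of finite length, so finitely generated, so $C_2$-cofinite and a fortiori $C_1$-cofinite; likewise for $\overline{M}_i$. Since $I$ is finite, the sets $I(i,j)$ in the definition of local $C_1$-cofiniteness are automatically finite, and we conclude that $F$ is locally $C_1$-cofinite over $V$ and $W$.

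The hard part is exactly the decomposition $F\cong\bigoplus_{i\in I}M_i\otimes\overline{M}_i$, together with the induced factorisation of $Y(\bullet,\uz)$ into tensor products of chiral intertwining operators. When $V$ and $W$ are rational this comes from complete reducibility, but for merely $C_2$-cofinite (possibly logarithmic) chiral parts the category of $V\otimes W$-modules contains indecomposable modules that are not tensor products, so one cannot argue purely abstractly; the full vertex operator algebra structure of $F$ — concretely, that $F$ is generated from $\vac$ by the two mutually local families of chiral fields, with operator product expansions controlled by intertwining operators of $V$ and of $W$ — must be used to exclude such modules. Everything else reduces to bookkeeping with the finiteness statements for $C_2$-cofinite module categories recalled above.
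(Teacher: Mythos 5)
Your proof of (PSD) is correct and is in substance the paper's argument. The paper gets the finite set of exponent classes from Zhu's theorem that a $C_2$-cofinite VOA has only finitely many irreducible modules, so that $F$ is a finitely generated module over $\ker\Ld(-1)\otimes\ker L(-1)$ and $F=\bigoplus_{i=1}^N\bigoplus_{n,m\in\bbZ_{\ge0}}F_{n+\Delta_i,m+\bar\Delta_i}$ for finitely many $(\Delta_i,\bar\Delta_i)$; the count in a unit window of $h+\h$ then goes exactly as in your second paragraph (using $h-\h\in\bbZ$ and $h,\h\ge0$), yielding $L=1$. Your detour through finite length and composition factors instead of finite generation is an inessential variation.

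For local $C_1$-cofiniteness your write-up has a genuine, and self-acknowledged, gap: the decomposition $F\cong\bigoplus_{i\in I}M_i\otimes\overline{M}_i$ together with the factorisation \eqref{eq_int_C1_decomp} of $Y(\bullet,\uz)$ into chiral intertwining operators is never established, and everything after ``granting this'' is conditional on it; the appeal to finite tensor categories is a heuristic, not an argument, and as you yourself observe it cannot be purely categorical in the non-rational $C_2$-cofinite case. The paper does not attempt this categorical route. Its proof of this half is a short reduction: $F$ is finitely generated over the $C_2$-cofinite VOA $\ker\Ld(-1)\otimes\ker L(-1)$, and by the theorem of Abe--Buhl--Dong \cite{ABD04Rationality} every finitely generated module over a $C_2$-cofinite VOA is $C_2$-cofinite, hence $C_1$-cofinite; local $C_1$-cofiniteness is then read off from this. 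In other words, the decisive finiteness input is the ABD theorem applied to $F$ (and to the modules appearing in it) rather than a semisimple factorisation of the category of $V\otimes W$-modules. To complete your argument you should either adopt that reduction or actually prove the tensor decomposition you postulate; as written, the second half of the proposition is not proved.
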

\begin{proof}
Note that a $C_2$-cofinite vertex operator algebra has only finitely many irreducible modules \cite{Zhu96ModularInvariance}.
Thus, $F$ is finitely generated module over $\ker \Ld(-1)\otimes \ker L(-1)$.
In particular, there are finitely many real numbers $(\Delta_i,\bar{\Delta}_i) \in \bbR^2$ $(i=1,\dots,N)$ such that
\begin{align*}
F = \bigoplus_{i=1}^N \bigoplus_{n,m\in \bbZ_{\geq 0}}F_{n+\Delta_i,m+\bar{\Delta}_i}.
\end{align*}
Hence, $\#\{(h,\h) \in \bbR^2\mid n \leq h \leq n+1, m \leq \h \leq m+1, F_{h,\h}\neq 0 \} \leq N$
and from this it is straightforward that $F$ satisfies the polynomial spectral density.
It is shown in \cite{ABD04Rationality} that any finitely generated module of a $C_2$-cofinite vertex operator algebra is $C_2$-cofinite,
and thus, $C_1$-cofinite. Hence, $F$ satisfies the local $C_1$-cofiniteness.
\end{proof}

On the other hands, with the current techniques, polynomial energy bounds (of intertwining operators)
need to be checked case by case, cf.\! \cite{Gui19, CT23EnergyBounds}.

\subsection{The Riemann sphere and the stereographic projection}\label{riemann}
Until now, we considered the correlation functions defined in a unitary full VOA under local $C_1$-cofiniteness
on $\bbC$.
Under the identification $\bbC \cong \bbR^2$, the Riemann sphere $\CP = \bbC \cup \{\infty\}$ includes the two-dimensional Euclidean space $\bbR^2$.
Moreover, the connected component $\euclidean$ of the unit element of the Euclidean group $\rE(2)$ can be considered as a subgroup
of the conformal group $\PSLtwoC$. The latter acts on $\CP$ by the linear fractional transformations
\begin{align}\label{eq:lft}
 \left(\begin{array}{cc} a & b \\ c & d \end{array}\right) \cdot \zeta = \frac{a\zeta + b}{c\zeta + d}.
\end{align}

In particular, $\euclidean$ is generated by
\begin{itemize}
 \item Translations: for $a \in \bbC$,
 $\zeta \mapsto \left(\begin{array}{cc} 1 & a \\ 0 & 1 \end{array}\right) \cdot \zeta = \zeta + a$.
 \item Rotations: for $\lambda \in \bbR/2\pi\bbZ$,
 $\zeta \mapsto \left(\begin{array}{cc} \re^{\ri\lambda/2} & 0 \\ 0 & \re^{-\ri\lambda/2} \end{array}\right) \cdot \zeta = \re^{\ri\lambda}\zeta$.
\end{itemize}

On the other hand, $\PSLtwoC$ in addition contains
\begin{itemize}
 \item Dilations: for $\lambda \in \bbR$,
 $\zeta \mapsto \left(\begin{array}{cc} \re^{\lambda/2} & 0 \\ 0 & \re^{-\lambda/2} \end{array}\right) \cdot \zeta = \re^{\lambda}\zeta$.
 \item NS-dilations: for $\lambda \in \bbR$,
 $\zeta \mapsto \left(\begin{array}{cc} \cosh (\lambda/2) & -\sinh (\lambda/2) \\ -\sinh(\lambda/2) & \cosh (\lambda/2) \end{array}\right) \cdot \zeta
 = \frac{\zeta \cosh(\lambda/2) - \sinh(\lambda/2)}{-\zeta \sinh(\lambda/2) + \cosh(\lambda/2)}$.
\end{itemize}
We call the last one-parameter group ``NS-dilations'' for the following reason (cf.\! \cite[Section 3.1.5]{Rychkov17EPFL}:
The Riemann sphere $\CP$ is mapped to the sphere $S^2 \subset \bbR^3$ by the stereographic projection
\begin{align*}
 \zeta = \tau + \ri\xi \mapsto \left(\frac{2\tau}{1 + \tau^2 + \xi^2}, \frac{2\xi}{1 + \tau^2 + \xi^2}, \frac{- 1 + \tau^2 + \xi^2}{1 + \tau^2 + \xi^2}\right).
\end{align*}
Under this map, the point $0, \infty$ are mapped to $(0,0,-1), (0,0,1) \in \bbR$, respectively,
and the region $|\zeta| > 1$ is mapped to the upper hemisphere.
It is more convenient to compose this with
$\left(\begin{array}{cc} \frac1{\sqrt2} & -\frac1{\sqrt2} \\ \frac1{\sqrt2} & \frac1{\sqrt2} \end{array}\right)$
which maps three points $1, \infty, -1$ to $0, 1, \infty$ and the region $|\zeta| > 0$ to
the region $\tau > 0$. It holds that
\begin{align*}
 \left(\begin{array}{cc} \cosh (\lambda/2) & -\sinh (\lambda/2) \\ -\sinh(\lambda/2) & \cosh (\lambda/2) \end{array}\right) =
 \left(\begin{array}{cc} \frac1{\sqrt2} & -\frac1{\sqrt2} \\ \frac1{\sqrt2} & \frac1{\sqrt2} \end{array}\right)^{-1}
 \left(\begin{array}{cc} \re^{\lambda/2} & 0 \\ 0 & \re^{-\lambda/2} \end{array}\right)
 \left(\begin{array}{cc} \frac1{\sqrt2} & -\frac1{\sqrt2} \\ \frac1{\sqrt2} & \frac1{\sqrt2} \end{array}\right),
\end{align*}
so that it gives the dilation flowing from the North Pole to the South Pole.

\subsection{Global conformal invariance}\label{global}
In Corollary \ref{cor_inf_conf}, we considered the infinitesimal conformal invariance.
Since the global conformal group $\rS\rO_e(3,1)$ does not act on $X_n(\bbC)$, we need to treat this global action more carefully.
Set
\begin{align*}
X_n(\CP) =\{(\ze_1,\dots,\ze_n)\in (\CP)^n\mid \ze_i\neq \ze_j\}.
\end{align*}
Then, the global conformal symmetry $\rS\rO_e(3,1) \cong \PSLtwoC = \mathrm{SL}_2 (\bbC)/\{\pm 1\}$ acts on $X_n(\CP)$ by the linear fractional transformation. 
The aim of this section is to show that
\begin{align}
  S_n^{\pmb{a}}(\ze_1, \cdots, \ze_n) &= \prod_{j=1}^n  \left(\frac{d\gamma_j}{d\ze_j}(\ze_j)\right)^{h_{a_j}} \left(\overline{\frac{d\gamma_j}{d\ze_j}(\ze_j)}\right)^{\bar h_{a_j}} 
  S_n^{\pmb{a}}({\gamma(\ze_1)}, \cdots, {\gamma(\ze_n)}) \nonumber \\
  &=\prod_{j=1}^n  \left(c\ze_j+d\right)^{-2h_{a_j}} \left(c\zee_j+d)\right)^{-2\bar h_{a_j}} 
  S_n^{\pmb{a}}({\gamma(\ze_1)}, \cdots, {\gamma(\ze_n)}), \label{eq_fractional}
 \end{align}
where $h_a, \bar h_a$ are conformal dimensions of the quasi-primary vector $a$.
To state this claim we prepare a function space on $X_r(\CP)$.

Let $h_1,\h_1,\dots,h_n,\h_n \in \bbR_{\geq 0}$ with $h_i-\h_i \in \bbZ$.
Denote\footnote{The upper index $\omega$ indicates real analyticity and has nothing to do with
variables $\omega_i$ that appear below.}
by $C_{\pmb{h}}^{\om}(X_n(\CP))$ the vector space consisting of continuous functions $f$ on $X_n(\CP)$ such that:
\begin{itemize}
\item
$f$ is a real analytic function on $X_n(\bbC) \subset X_n(\CP)$;
\item
For any $i \in \{1,\cdots, n\}$ and $\al = (\al_1,\dots, \al_n) \in X_n(\CP)$ with $\al_i = \infty$,
the function $\ze_i^{-2h_i}\zee_i^{-2\h_i} f(\ze_1,\dots,\ze_i,\dots,\ze_n)$ has a real analytic continuation at $\alpha$.
\end{itemize}

\begin{remark}
\label{rem_analytic_inf}
Take a real analytic coordinate 
\begin{align*}
(\omega_1,\dots,\omega_n) = (\ze_1-\al_1,\dots,\ze_{i-1}-\al_{i-1}, \ze_i^{-1},\ze_{i+1}-\al_{i+1},\dots,\ze_n-\al_n)
\end{align*}
around $\al$. The second condition is equivalent to $f$ having the following convergent expansion around $\al$
\begin{align*}
\omega_i^{2h_i}\bar{\omega}_i^{2\h_i} f(\omega_1+\al_1,\dots,\omega_{i-1}+\al_{i-1},\omega_i^{-1},\omega_{i-1}+\al_{i-1},\dots,\omega_n+\al_n) 
\in \bbC[[\omega_1,\bar{\omega}_1,\dots,\omega_n,\bar{\omega}_n]].
\end{align*}
\end{remark}

Let $\ga=\begin{pmatrix}
a & b \\
c & d \\
\end{pmatrix}
 \in \mathrm{SL}_2 (\bbC)$.
Note that if $h-\h \in \bbZ$, then
\begin{align*}
\left(c\ze+d\right)^{-2h} \left(c\zee+d\right)^{-2\bar h} &=
|c\ze+d|^{-2h}\left(c\zee+d\right)^{2(h-\h)}=\left(-c\ze-d\right)^{-2h} \left(-c\zee-d\right)^{-2\bar h}.
\end{align*}
Hence, $\left(c\ze+d\right)^{-2h} \left(c\zee+d\right)^{-2\bar h}$ is well-defined for the element of $\PSLtwoC$.
\begin{lemma}
\label{lem_action_global}
For any $f\in C_{\pmb{h}}^{\om}(X_n(\CP))$ and $\ga \in \mathrm{SL}_2 (\bbC)$,
\begin{align*}
\ga \cdot_{\pmb{h}} f=
\prod_{j=1}^n  \left(c\ze_j+d\right)^{-2h_{j}} \left(c\zee_j+d)\right)^{-2\bar h_{j}}f({\gamma(\ze_1)}, \cdots, {\gamma(\ze_n)})
\end{align*}
is in $C_{\pmb{h}}^{\om}(X_n(\CP))$.
\end{lemma}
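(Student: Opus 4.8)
The plan is to verify the two defining conditions of $C_{\pmb{h}}^{\om}(X_n(\CP))$ for $\ga\cdot_{\pmb{h}}f$ directly from the definition. For the first condition, note that on $X_n(\bbC)$ the map $\ze\mapsto\ga(\ze)=\frac{a\ze+b}{c\ze+d}$ is real analytic wherever $c\ze+d\neq 0$, and $(c\ze+d)^{-2h}$ is real analytic and nonvanishing there as well; the only point that needs care is $\ze_j=-d/c$, where $\ga(\ze_j)=\infty$. So the real analyticity of $\ga\cdot_{\pmb{h}}f$ on $X_n(\bbC)$ is not automatic: at configurations where some $\ze_j=-d/c$ we must use the second defining property of $f$ (real analytic continuation at $\infty$ in the $j$-th slot) together with the chain-rule-type identity for the conformal weight factors.

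Concretely, first I would record the cocycle identity: if $\ga_j(\ze_j)=\frac{a\ze_j+b}{c\ze_j+d}$, then $\frac{d\ga_j}{d\ze_j}=(c\ze_j+d)^{-2}$, and more importantly, writing $\om_j=\ze_j^{-1}$ near $\infty$, one checks that $\ga(\ze_j)^{-1}$ is a real analytic function of $\om_j$ (indeed $\ga(\ze_j)^{-1}=\frac{c\ze_j+d}{a\ze_j+b}=\frac{c+d\om_j}{a+b\om_j}$) and that near $\ze_j=-d/c$ the factor $(c\ze_j+d)^{-2h_j}(c\zee_j+d)^{-2\h_j}$ times $\ga(\ze_j)^{-2h_j}\overline{\ga(\ze_j)}^{-2\h_j}$ equals, up to a nonvanishing real analytic factor, $(a\ze_j+b)^{-2h_j}(a\zee_j+b)^{-2\h_j}$, which is nonvanishing at $\ze_j=-d/c$ since $\det\ga=ad-bc=1$ forces $a(-d/c)+b=-1/c\neq 0$. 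Thus $\ga\cdot_{\pmb{h}}f$ near such a point equals a nonvanishing real analytic factor times $\bigl[\ga(\ze_j)^{-2h_j}\overline{\ga(\ze_j)}^{-2\h_j}f(\dots,\ga(\ze_j),\dots)\bigr]$, and the bracket is exactly the real analytic continuation of $f$ at $\infty$ in slot $j$ composed with the real analytic map $\om_j\mapsto\ga(\ze_j)^{-1}$; hence it is real analytic there. This handles the first condition (and, simultaneously, the behavior at any $\ze_j$ finite).

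For the second condition, suppose $\al_i=\infty$. I would multiply $\ga\cdot_{\pmb{h}}f$ by $\ze_i^{-2h_i}\zee_i^{-2\h_i}$ and again use $\om_i=\ze_i^{-1}$: the product of $\ze_i^{-2h_i}\zee_i^{-2\h_i}$ with $(c\ze_i+d)^{-2h_i}(c\zee_i+d)^{-2\h_i}$ equals $(c+d\om_i)^{-2h_i}(c+d\zee_i^{-1}\cdot\text{(conj)})^{\dots}$, which near $\om_i=0$ (i.e.\ $c\neq 0$ case) is nonvanishing real analytic; and $\ga(\ze_i)=\frac{a+b\om_i}{c+d\om_i}$ is finite and real analytic near $\om_i=0$ when $c\neq 0$, so $f(\dots,\ga(\ze_i),\dots)$ is real analytic there by the first property of $f$. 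If instead $c=0$ then $\ga$ is affine, $\ga(\infty)=\infty$, and one uses the second property of $f$ with the new coordinate being an affine-times-$\om_i$ reparametrization; the bookkeeping is the same. In all cases one assembles a nonvanishing real analytic prefactor times a real analytic continuation of $f$, giving the claimed continuation of $\ga\cdot_{\pmb{h}}f$ at $\al$.

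The main obstacle is purely bookkeeping: tracking which of $c\ze_j+d$, $\ze_j$, $a\ze_j+b$ vanish at the configuration under consideration and organizing the three cases ($\ze_j$ finite with $c\ze_j+d\neq 0$; $\ze_j$ finite with $c\ze_j+d=0$ so $\ga(\ze_j)=\infty$; $\ze_j=\infty$) so that in each case the weight factors recombine, via $ad-bc=1$, into a nonvanishing real analytic function times the appropriate continuation of $f$ in the slot dictated by $\ga$. The well-definedness remark preceding the lemma (that $(c\ze+d)^{-2h}(c\zee+d)^{-2\h}$ descends to $\PSLtwoC$ because $h-\h\in\bbZ$) ensures all of this is independent of the lift of $\ga$, so no sign ambiguities arise. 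No deep input is needed beyond Remark~\ref{rem_analytic_inf} for translating "real analytic continuation at $\infty$" into a convergent power series in the coordinate $\om_i$.
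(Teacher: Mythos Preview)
Your proposal is correct and follows essentially the same approach as the paper's proof: a case analysis according to whether $c=0$ or $c\neq 0$, handling separately the points $\ze_j=-d/c$ (where $\gamma(\ze_j)=\infty$, so one invokes the second defining property of $f$) and $\ze_j=\infty$ (where one invokes either the first or second property of $f$ according to whether $c\neq 0$ or $c=0$). The paper organizes the cases in a slightly different order and writes out explicit local coordinates rather than phrasing things via the cocycle identity, but the substance is the same.
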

\begin{proof}
First, assume that $c=0$. Then, $\ga \cdot f =f(a\ze+b)$ is clearly real analytic on $X_n(\bbC)$.
Since $\ze=\frac{\omega^{-1}-b}{a}$ is a local coordinate at $\zeta=\infty$,
\begin{align*}
(\omega_i^{-1}-b)^{-2h_i}(\bar{\omega_i}^{-1}-b)^{-2\h_i} f(\dots,\omega_i^{-1},\dots)
=(1-b\omega_i)^{-2h_i}(1-b\omega_i)^{-2\h_i} \omega_i^{2h_i} \bar{\omega}_i^{2\h_i}f(\dots,\omega_i^{-1},\dots)
\end{align*}
is real analytic by Remark \ref{rem_analytic_inf}.

Next, assume that $c\neq 0$ and consider the case of $\al \in X_n(\bbC)$ with $\al_i = -\frac{d}{c}$.
Take the local coordinate $\ze=\frac{dw^{-1}-b}{-cw^{-1}+a}$ at $\ze=-\frac{d}{c}$, since
\begin{align*}
&(c\ze_i+d)^{-2h_i}(c\zee_i+d)^{-2\h_i} \cdots f(\cdots,\textstyle{\frac{a\ze_i+b}{c\ze_i+d}},\dots) \\
&=(-c+a\omega_i)^{-2h_i}(-c+a\bar{\omega}_i)^{-2\h_i}\omega_i^{2h_i} \bar{\omega}_i^{2\h_i}f(\dots,\omega_i^{-1},\dots),
\end{align*}
$\ga \cdot_{\pmb{h}} f$ is real analytic on $X_n(\bbC)$.
Finally, consider the case of $\al \in X_n(\CP)$ with $\al_i=\infty$.
Since
\begin{align*}
&\ze_i^{2h_i}\zee_i^{2\h_i}(c\ze_i+d)^{-2h_i}(c\zee_i+d)^{-2\h_i} \cdots f(\cdots,\textstyle{\frac{a\ze_i+b}{c\ze_i+d}},\dots) \\
&=(c+d\ze_i^{-1})^{-2h_i}(c+d\zee_i^{-1})^{-2\h_i} f(\dots, \textstyle{\frac{a+b\ze_i^{-1}}{c+d\ze_i^{-1}}},\dots),
\end{align*}
the assertion holds.
\end{proof}

\begin{proposition}[Global conformal invariance]
\label{prop_global_conf}
Suppose that $a_i$ are Hermite and quasi-primary vectors.
Then, the correlation function $S_n^{\pmb{a}}(\ze_1, \cdots, \ze_n)$ is in $C_{\pmb{h_a}}^{\om}(X_n(\CP))$ and 
is invariant with respect to the $\cdot_{\pmb{h_a}}$ action of $\PSLtwoC$, that is,
\begin{align}\label{eq:invarianceC-function}
  S_n^{\pmb{a}}(\ze_1, \cdots, \ze_n) &=\prod_{j=1}^n  \left(\frac{d\gamma_j}{d\ze_j}(\ze_j)\right)^{h_{a_j}} \left(\overline{\frac{d\gamma_j}{d\ze_j}(\ze_j)}\right)^{\bar h_{a_j}} S_n^{\pmb{a}}({\gamma(\ze_1)}, \cdots, {\gamma(\ze_n)})
  \end{align}
  for any $\ga \in \PSLtwoC$.
\end{proposition}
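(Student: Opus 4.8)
The plan is to prove the statement in two steps: first that $S_n^{\pmb a}\in C_{\pmb{h_a}}^\om(X_n(\CP))$ — i.e.\ that the real analytic function $S_n^{\pmb a}$ on $X_n(\bbC)$ given by Theorem \ref{thm_A} acquires, as any argument tends to $\infty$, the behaviour prescribed by the conformal weights — and then that it is fixed by the $\cdot_{\pmb{h_a}}$-action of $\PSLtwoC$. I would carry them out in this order, since only once the first is known does Lemma \ref{lem_action_global} make the $\cdot_{\pmb{h_a}}$-action applicable to $S_n^{\pmb a}$, after which the invariance follows from Corollary \ref{cor_inf_conf} by integration along the connected group $\PSLtwoC$. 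As a preliminary: $F$ being simple and unitary, $F_{0,0}=\bbC\vac$ (Proposition \ref{prop_unitary_vacuum}) and the normalized invariant bilinear form $(\bullet,\bullet)$ is graded, so that $\<\vac|$, viewed in $F^\vee$, equals $(\vac,\bullet)$; hence on $U_n$ we may write $S_n^{\pmb a}(\pmb\ze)=\bigl(\vac,\,Y(a_1,\uze_1)\cdots Y(a_n,\uze_n)\vac\bigr)$.

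For the behaviour at infinity it suffices, by the symmetry property (Theorem \ref{thm_B}), to analyse $\ze_1\to\infty$. On $U_n$, applying the quasi-primary invariance \eqref{eq:hermite} to the outermost factor $Y(a_1,\uze_1)$ (the $\exp(L(1)z+\Ld(1)\z)$-term being trivial since $a_1$ is quasi-primary) yields
\begin{align*}
 \ze_1^{2h_1}\zee_1^{2\h_1}\,S_n^{\pmb a}(\ze_1,\ze_2,\dots,\ze_n)\;=\;(-1)^{h_1-\h_1}\bigl(Y(a_1,\uze_1^{-1})\vac,\;Y(a_2,\uze_2)\cdots Y(a_n,\uze_n)\vac\bigr).
\end{align*}
Pushing $a_2,\dots,a_n$ across one at a time via \eqref{eq_bilinear_inv} and using the symmetry of $(\bullet,\bullet)$, the right-hand side equals, up to an explicit factor not involving $\ze_1$, a correlation function $C_n$ of $a_1,\dots,a_n$ evaluated at the arguments $\ze_1^{-1},\dots,\ze_n^{-1}$. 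By Theorem \ref{thm_A} this $C_n$ is real analytic in all its arguments on $X_n(\bbC)$, in particular across $\ze_1^{-1}=0$ when the remaining $\ze_j$ are held finite, distinct and nonzero; since it agrees on $U_n$ with $\ze_1^{2h_1}\zee_1^{2\h_1}S_n^{\pmb a}$ (times the $\ze_1$-free factor), uniqueness of real analytic continuation shows that $\ze_1^{2h_1}\zee_1^{2\h_1}S_n^{\pmb a}$ continues real analytically across $\ze_1=\infty$, which is the defining requirement for $C_{\pmb{h_a}}^\om(X_n(\CP))$ near such configurations (cf.\ Remark \ref{rem_analytic_inf}).

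Next, Lemma \ref{lem_action_global} gives a continuous action of $\PSLtwoC$ on $C_{\pmb{h_a}}^\om(X_n(\CP))$ by $\cdot_{\pmb{h_a}}$. For $X\in\sltwoc$ with associated holomorphic vector field $v_X(\ze)\partial_\ze$ on $\CP$ ($v_X$ a polynomial of degree $\le 2$) and $\gamma_t=\exp(tX)$, differentiating gives
\begin{align*}
 \tfrac{d}{dt}\Big|_{t=0}\bigl(\gamma_t\cdot_{\pmb{h_a}}f\bigr)\;=\;\sum_{j=1}^{n}\Bigl(v_X(\ze_j)\tfrac{d}{d\ze_j}+h_{a_j}v_X'(\ze_j)+\overline{v_X(\ze_j)}\tfrac{d}{d\zee_j}+\h_{a_j}\overline{v_X'(\ze_j)}\Bigr)f,
\end{align*}
and, as $v_X$ ranges over a real basis of $\bbC\cdot1+\bbC\cdot\ze+\bbC\cdot\ze^2$, these are the first-order operators of Corollary \ref{cor_inf_conf}. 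Hence for $f=S_n^{\pmb a}$ all such derivatives vanish, so $\Sigma:=\{\gamma\in\PSLtwoC : \gamma\cdot_{\pmb{h_a}}S_n^{\pmb a}=S_n^{\pmb a}\}$ is a closed subgroup whose Lie algebra is all of $\sltwoc$; since $\PSLtwoC$ is connected, $\Sigma=\PSLtwoC$, which is \eqref{eq:invarianceC-function}.

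I expect the main obstacle to be the $C_{\pmb{h_a}}^\om(X_n(\CP))$-membership, in two respects. First, turning the formal identity above into an analytic one: one must check that $\bigl(Y(a_1,\uze_1^{-1})\vac,\,Y(a_2,\uze_2)\cdots\vac\bigr)$, visibly a formal power series in $\ze_1^{-1},\zee_1^{-1}$, actually converges to a real analytic function near $\ze_1^{-1}=0$ — this is where one leans on the identification with a genuine correlation function and the real-analyticity clause of Theorem \ref{thm_A}, or, for more uniform control, on the finer convergence statements of \cite{Moriwaki24SwissCheese}. Second, and more delicate, is the behaviour at configurations $\al\in X_n(\CP)$ where several coordinates lie at $0$ or $\infty$ simultaneously: there one must push across only the operators attached to $\al_i=\infty$, keep the others in place, and verify that the potential blow-up of the prefactors $\prod\ze_j^{-2h_j}\zee_j^{-2\h_j}$ is cancelled by the operator-product singularities of the resulting correlation function, so that the twisted function remains real analytic. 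By contrast, once one works over $X_n(\CP)$ the integration of the group action is routine, and in particular free of the usual nuisance that a path in $\PSLtwoC$ from the identity to $\gamma$ may drag some $\ze_i$ through the point $\infty$.
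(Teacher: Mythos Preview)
Your approach is correct and close to the paper's, with one difference in organization. The ``push all operators across'' identity you derive --- $S_n^{\pmb a}(\ze_1,\ldots,\ze_n)=\prod_j(-1)^{h_j-\h_j}\ze_j^{-2h_j}\zee_j^{-2\h_j}\,S_n^{\pmb a}(\ze_1^{-1},\ldots,\ze_n^{-1})$ --- is precisely what the paper isolates as Lemma~\ref{lem_inverse_cor}. The paper then uses it twice: once for the $C_{\pmb{h_a}}^\om$ membership, and once more (combined with the rotation $\ze\mapsto-\ze$, already known from Corollary~\ref{cor_inf_conf}) to read off invariance under $\gamma=\left(\begin{smallmatrix}0&1\\-1&0\end{smallmatrix}\right)$. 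Since translations, rotations and dilations integrate trivially from Corollary~\ref{cor_inf_conf}, and together with this inversion they generate $\PSLtwoC$, the paper concludes by checking generators. Your Lie-algebra route (closed stabilizer with full Lie algebra in a connected group) is a valid alternative; it is marginally less elementary because it leans on smoothness of the $\cdot_{\pmb{h_a}}$-orbit map, whereas the paper's generator argument simply reuses the inversion identity already in hand and needs no further analysis.

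Your worry about ``several coordinates at $0$ or $\infty$ simultaneously'' is mostly unfounded: points of $X_n(\CP)$ are distinct, so at most one lies at $\infty$. The residual case --- $\al_1=\infty$ with some other $\al_j=0$, where the global inversion sends $\ze_j^{-1}$ back to $\infty$ --- is a real technical point but is disposed of by first invoking translation invariance (itself an immediate integration from Corollary~\ref{cor_inf_conf}, and translations fix $\infty$) to move all finite $\al_j$ away from $0$. There is no need for the more delicate strategy of pushing only selected operators across and tracking cancellations of prefactors against operator-product singularities.
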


To prove the above proposition, we will use the following lemma:
\begin{lemma}
\label{lem_inverse_cor}
Assume that $a_i$ are Hermite and quasi-primary.
Then the following identity holds as real analytic functions:
\begin{align*}
S_n^{\pmb{a}}(\ze_1, \cdots, \ze_n) = \prod_{j=1}^n (-1)^{h_j-\h_j}\ze_j^{-2h_j}\zee_j^{-2\h_j}
S_n^{\pmb{a}}(\ze_1^{-1}, \cdots, \ze_n^{-1}).
\end{align*}
\end{lemma}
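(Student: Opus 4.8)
The plan is to express $S_n^{\pmb a}$, on a single expansion chamber, as a bilinear pairing and then transfer the vertex operators one at a time across the form using the quasi-primary invariance \eqref{eq:hermite}. Two preliminary remarks are used. First, since $F_{0,0}=\bbC\vac$, the normalization $(\vac,\vac)=1$ together with the fact that the invariant bilinear form $(\bullet,\bullet)$ respects the $\bbR^2$-grading give $\<\vac|\,w\>=(\vac,w)$ for every $w$ in $F$, and indeed in its graded completion $\widehat F=\prod_{h,\h}F_{h,\h}$; hence on the open set $\{|\ze_1|>\cdots>|\ze_n|>0\}\subset U_n$ we may write $S_n^{\pmb a}(\ze_{[n]})=(\vac,Y(a_1,\uze_1)\cdots Y(a_n,\uze_n)\vac)$. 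Second, the symmetry property of $C_n$ (Theorem \ref{thm_B}) applied with the order-reversing permutation shows that on that same open set (where $|\ze_n^{-1}|>\cdots>|\ze_1^{-1}|$) one has $S_n^{\pmb a}(\ze_1^{-1},\dots,\ze_n^{-1})=(\vac,Y(a_n,\uze_n^{-1})\cdots Y(a_1,\uze_1^{-1})\vac)$.

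The core step is to prove, on a suitable nonempty open subset of $U_n$,
\begin{align*}
 &(\vac,Y(a_1,\uze_1)\cdots Y(a_n,\uze_n)\vac)\\
 &\qquad =\prod_{j=1}^n(-1)^{h_j-\h_j}\ze_j^{-2h_j}\zee_j^{-2\h_j}\,(\vac,Y(a_n,\uze_n^{-1})\cdots Y(a_1,\uze_1^{-1})\vac).
\end{align*}
I would obtain this by a repeated transfer: with $u=Y(a_k,\uze_k^{-1})\cdots Y(a_1,\uze_1^{-1})\vac$ and $v=Y(a_{k+2},\uze_{k+2})\cdots Y(a_n,\uze_n)\vac$, formula \eqref{eq:hermite} applied to the quasi-primary vector $a_{k+1}$ gives
\begin{align*}
 (u,Y(a_{k+1},\uze_{k+1})v)=(-1)^{h_{k+1}-\h_{k+1}}\ze_{k+1}^{-2h_{k+1}}\zee_{k+1}^{-2\h_{k+1}}\,(Y(a_{k+1},\uze_{k+1}^{-1})u,v),
\end{align*}
so $n$ such steps, together with $(w,\vac)=(\vac,w)$ at the very end (again by the $\bbR^2$-grading and $(\vac,\vac)=1$), produce the displayed identity. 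Although \eqref{eq:hermite} is stated for $u,v\in F$, it applies verbatim for $u,v\in\widehat F$ once the relevant pairings converge, since it is just the mode identity $(u,a(r,s)v)=(-1)^{h-\h}(a(2h-2-r,2\h-2-s)u,v)$ read off coefficientwise. Combining the core step with the two preliminary remarks yields $S_n^{\pmb a}(\ze_{[n]})=\prod_j(-1)^{h_j-\h_j}\ze_j^{-2h_j}\zee_j^{-2\h_j}S_n^{\pmb a}(\ze_1^{-1},\dots,\ze_n^{-1})$ on that open set; since both sides are real analytic on $X_n(\bbC)\cap\{\ze_i\neq 0\}$, which is connected (it is $\bbC^n$ with finitely many real-codimension-$2$ sets removed), the identity of real analytic functions propagates to all of this set, which is where the statement is asserted.

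I expect the main obstacle to be the convergence bookkeeping in the core step: one must pin down a single nonempty open subset of $U_n$ on which, simultaneously for every $k$, the composition $Y(a_k,\uze_k^{-1})\cdots Y(a_1,\uze_1^{-1})\vac$ converges (which forces $|\ze_1|>\cdots>|\ze_k|$), the composition $Y(a_{k+1},\uze_{k+1})\cdots Y(a_n,\uze_n)\vac$ converges (which forces $|\ze_{k+1}|>\cdots>|\ze_n|$), and the bilinear pairing of these two $\widehat F$-valued expressions converges absolutely. The last requirement is the delicate one; it should hold on a deep chamber $\{|\ze_1|\gg|\ze_2|\gg\cdots\gg|\ze_n|>0\}$ and can be justified using the convergence theory for compositions of intertwining operators developed in \cite{Moriwaki24SwissCheese} (after rewriting the left-hand vacuum via the invariant form, these pairings are among the compositions whose domains of convergence are established there). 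Working on such a deep chamber is harmless, since the identity is extended by analyticity in the final step anyway.
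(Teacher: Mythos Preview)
Your approach is essentially the paper's: transfer the vertex operators one by one across the invariant bilinear form using \eqref{eq:hermite}, invoke the symmetry of $S_n^{\pmb a}$ to reorder the inverted points, and then extend by real analyticity. The difference is in how you handle the intermediate convergence. You try to work with complex numbers throughout and therefore need a single open chamber on which, for every $k$, the partial compositions $Y(a_k,\uze_k^{-1})\cdots Y(a_1,\uze_1^{-1})\vac$ and $Y(a_{k+1},\uze_{k+1})\cdots Y(a_n,\uze_n)\vac$ converge and their pairing is absolutely summable. You correctly flag this as the delicate point and defer it to \cite{Moriwaki24SwissCheese}.

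The paper sidesteps this entirely by performing all $n$ transfers at the level of \emph{formal} series in $z_1,\z_1,\dots,z_n,\z_n$. Since \eqref{eq:hermite} is, as you note, just the mode identity $(u,a(r,s)v)=(-1)^{h-\h}(a(2h-2-r,2\h-2-s)u,v)$, the chain of transfers is a purely algebraic identity of formal power series with no convergence hypothesis at all. One arrives at
\[
(\vac,Y(a_1,\uz_1)\cdots Y(a_n,\uz_n)\vac)=\prod_j(-1)^{h_j-\h_j}z_j^{-2h_j}\z_j^{-2\h_j}\,(\vac,Y(a_n,\uz_n^{-1})\cdots Y(a_1,\uz_1^{-1})\vac)
\]
formally, and only then evaluates both sides at $|\ze_1|>\cdots>|\ze_n|$: the left side converges to $S_n^{\pmb a}(\ze_{[n]})$ by Theorem~\ref{thm_A}, and the right side converges on exactly the same chamber (since $|\ze_n^{-1}|>\cdots>|\ze_1^{-1}|$) to $\prod_j(-1)^{h_j-\h_j}\ze_j^{-2h_j}\zee_j^{-2\h_j}S_n^{\sigma\pmb a}(\ze_n^{-1},\dots,\ze_1^{-1})$, after which symmetry and analyticity finish as you do. Thus your ``main obstacle'' simply does not arise: the intermediate $\widehat F$-valued pairings never need to converge, only the two endpoint series do, and they converge on the same region.
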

\begin{proof}
First, by noting that $z, \bar z$ as formal variables, $\<zu,v\> = z\<u,v\>$,
\begin{align*}
&\<\vac, Y(a_1,\uz_1) Y(a_2,\uz_2)\dots Y(a_n,\uz_n) \vac\> \\
&=(\phi\vac, Y(a_1,\uz_1) Y(a_2,\uz_2)\dots Y(a_n,\uz_n) \vac)  & \text{ (definition of } \<\bullet, \bullet\>\text{)}\\
&=(\vac, Y(a_1,\uz_1) Y(a_2,\uz_2)\dots Y(a_n,\uz_n) \vac)  & (\phi\vac = \vac) \\
&=(Y(a_1,\uz_1) Y(a_2,\uz_2)\dots Y(a_n,\uz_n) \vac, \vac)  & \text{(symmetry of } (\bullet, \bullet) \text{)} \\
&= (-1)^{h_1 - \h_1} z_1^{-2h_1} \z_1^{-2\h_1} (Y(a_2,\uz_2)\dots Y(a_n,\uz_n)\vac, Y(a_1,\uz_1^{-1})\vac) & \text{(} a_1 \text{ is Hermite)}\\
&\cdots\\
&=\prod_{j=1}^n (-1)^{h_j-\h_j} z_j^{-2h_j} \z_j^{-2\h_j} \cdot (\vac, Y(a_n,\uz_n^{-1})\dots Y(a_1,\uz_1^{-1})\vac) & \text{(} a_j \text{ are Hermite)}\\
&=\prod_{j=1}^n (-1)^{h_j-\h_j} z_j^{-2h_j} \z_j^{-2\h_j} \cdot \<\vac, Y(a_n,\uz_n^{-1})\dots Y(a_1,\uz_1^{-1})\vac\>
  & \text{(}\phi\vac = \vac, \text{definition of } \<\bullet, \bullet\>\text{)}\\
\end{align*}

Now we evaluate the equality with the complex numbers $\zeta_1, \cdots, \zeta_n$ satisfying
$|\zeta_1| > \cdots > |\zeta_n|$ and obtain
\begin{align*}
 S_n^{\pmb{a}}(\ze_1, \cdots, \ze_n)
 &= \prod_{j=1}^n (-1)^{h_j-\h_j}\ze_j^{-2h_j}\zee_j^{-2\h_j} \cdot S_n^{\sigma\pmb{a}}(\ze_n^{-1}, \cdots, \ze_1^{-1}) \\
 &= \prod_{j=1}^n (-1)^{h_j-\h_j}\ze_j^{-2h_j}\zee_j^{-2\h_j} \cdot S_n^{\pmb{a}}(\ze_1^{-1}, \cdots, \ze_n^{-1}),
\end{align*}
where $\sigma (1\cdots n) = (n\cdots 1)$ and we used the symmetry of $S^{\pmb{a}}_n$ in the second equality.
Since $\{(\ze_1,\dots,\ze_n)\in X_n(\bbC)\mid |\ze_1|>\cdots>|\ze_n|\}$ is an open subset of $X_n(\bbC)$, the assertion follows from the identity theorem for real analytic functions.
\end{proof}

\begin{proof}[proof of Proposition \ref{prop_global_conf}]
By Lemma \ref{lem_inverse_cor}, $S_n^{\pmb{a}}(\ze_1, \cdots, \ze_n)$
 is in $C_{\pmb{h_a}}^{\om}(X_n(\CP))$.
Hence, it suffices to show that the invariance holds for the generators of $\mathrm{PSL}_2 (\bbC)$.
By Corollary \ref{cor_inf_conf}, $S_n^{\pmb{a}}(\ze_1, \cdots, \ze_n)$ is invariant
under translations, rotations and dilations, that is, for any $\ga = \begin{pmatrix}
a & b \\
0 & d \\
\end{pmatrix} \in \bbC \rtimes \bbC \subset \PSLtwoC$.
In particular, for $\ga=\begin{pmatrix}
i & 0 \\
0 & -i \\
\end{pmatrix}$, $S_n^{\pmb{a}}(\ze_1, \cdots, \ze_n) = \prod_{j=1}^n (-1)^{h_j-\h_j}
S_n^{\pmb{a}}(-\ze_1, \cdots, -\ze_n)$ holds.
Combining this with Lemma \ref{lem_inverse_cor}, we have
\begin{align*}
S_n^{\pmb{a}}(\ze_1, \cdots, \ze_n) = \prod_{j=1}^n \ze_j^{-2h_j}\zee_j^{-2\h_j}
S_n^{\pmb{a}}(-\ze_1^{-1}, \cdots, -\ze_n^{-1}).
\end{align*}
Hence, the assertion holds for $\ga=\begin{pmatrix}
0& 1 \\
-1 & 0 \\
\end{pmatrix} \in \PSLtwoC$. Since $\PSLtwoC$ is generated by translations, rotations, dilations and $\begin{pmatrix}
0& 1 \\
-1 & 0 \\
\end{pmatrix}$, the assertion holds.
\end{proof}

\subsection{OS axioms on \texorpdfstring{$\bbR^2$}{R2}}\label{OS}
We follow the conventions of \cite{OS73, OS75} for general Euclidean fields (the original papers are written for the four-dimensional case,
but it is straightforward to adapt them to two dimensions),
and \cite{FFK89, Schottenloher08} for things specific to two-dimensional CFT\footnote{Unfortunately, the ``axioms'' of \cite{FFK89, Schottenloher08}
do not include regularity conditions corresponding to ($\check E0$) or ($E0'$), in particular,
they are not strong enough to reconstruct Wightman fields.}, up to some changes of notations.

A Euclidean quantum field field theory can be formulated in terms of Schwinger functions satisfying
the Osterwalder-Schrader axioms \cite{OS73, OS75}.
We are interested in two-dimensional models, and we identify $\bbR^2 = \bbC$, which is natural for conformal field theory.
We denote $\zeta = \tau+\ri\xi \in \bbC$, or equivalently $\zeta = (\tau, \xi) \in \bbR^2$.
For a test function $f$ on $\bbC^n = \bbR^{2n}$ and a multi-index $\alpha = (\alpha_1, \cdots, \alpha_{2n})$,
we denote
\begin{align*}
 \partial^\alpha f := \frac{\partial^{|\alpha|}}{\partial \tau_1^{\alpha_1}\partial \xi_1^{\alpha_2}\cdots \partial \tau_n^{\alpha_{2n-1}}\partial \xi_n^{\alpha_2n}}f,
\end{align*}
where $\alpha_k \ge 0, |\alpha| = \sum \alpha_k$.

Let $\Upsilon$ be a set of indices, which will be eventually any finite set of Hermite quasi-primary fields in the theory\footnote{Below,
to compare with \cite{OS73}, one can take $\Upsilon$ as a set with a single element $a$ and assume that $h_a = \bar h_a$,
then it is a theory with a single scalar field, and one can ignore all the indices $\pmb{a}$, leaving only
the number of variables $n$. With $\Upsilon$ we just introduce more fields with non-zero spin.}.
Any such $a \in \Upsilon$ is associated with $h_a, \bar h_a > 0$ called \textbf{conformal dimensions}
(they are possibly different positive numbers, not the complex conjugate of each other) where $h_a - \bar h_a \in \bbZ$.
An $n$-tuple of indices with $n \in \bbN$ is denoted by $\pmb{a} = (a_1, \cdots, a_n)$ and we denote
the set of such $n$-tuples $\Upsilon_n$.

Apart from this, following \cite{OS73}, we introduce
\begin{itemize}
 \item For $\pmb{a} \in \Upsilon_n$, let $\scS^{\pmb{a}}(\bbR^{2n})$ the space of Schwartz functions (smooth and rapidly decaying).
 For this definition, there is no distinction for different $\pmb{a}$, but we will consider different actions
 of $\PSLtwoC$.
 \item For $f \in \scS^{\pmb{a}}(\bbR^{2n})$, define the Schwartz norms
 \begin{align}\label{eq:schwartznorms}
  |f|_p = \underset{|\alpha| \le p}{\sup_{\pmb{\zeta} \in \bbR^{2n}}} |(1+|\pmb{\zeta}|^2)^{\frac p2}(\partial^\alpha f)(\pmb{\zeta})|,
 \end{align}
 where $\pmb{\zeta} = (\zeta_1,\cdots, \zeta_n) \in \bbR^{2n}$ and $|\pmb{\zeta}|^2 = \sum_{j=1}^n |\zeta_j|^2,
 |\zeta_j| = \sqrt{\tau_j^2 + \xi_j^2}$ for $\zeta_j = \tau_j + \ri\xi_j \in \bbC$.
 \item $\scS^{\pmb{a}}_{\neq}(\bbR^{2n}) = \{f \in \scS(\bbR^{2n}): \partial^\alpha f(\zeta_1, \cdots\!, \zeta_n) = 0 \text{ for all } \alpha
 \text{ if } \zeta_j = \zeta_k \text{ for some } j < k\}$.
 (in \cite{OS73} the notation $^0\scS$ is used. We take this notation from \cite{Simon74}.)
 \item For $-\infty \le s_1 < s_2 \le \infty$,
 \[
  \scS^{\pmb{a}}_{s_1 s_1}(\bbR^{2n}) = \{f \in \scS^{\pmb{a}}(\bbR^{2n}): \partial^\alpha f(\zeta_1, \cdots\!, \zeta_n) = 0 \text{ for all } \alpha
  \text{ unless } s_1 < \tau_1 < \cdots < \tau_n < s_2\}
 \]
 \item $\scS^{\pmb{a}}_+(\bbR^{2n}) = \scS^{\pmb{a}}_{0,\infty}(\bbR^{2n}), \scS^{\pmb{a}}_<(\bbR^{2n}) = \scS^{\pmb{a}}_{-\infty, \infty}(\bbR^{2n})$.
 \item For $\scS^{\pmb{a}}_*$ any of these spaces of test functions, let $(\scS^{\pmb{a}}_*)'$ be the continuous dual. 
 \item For $\gamma \in \PSLtwoC$, we define the local action on $f$ with compact support\footnote{This is defined
as long as $\gamma$ does not send any point in $\supp f$ to $\infty$. In particular, for any $f$ with compact support,
there is a neighbourhood of the unit element in $\PSLtwoC$ with that property, and the action is defined there.}:
\begin{align}\label{eq:fgamma}
f_\gamma(\zeta_1, \cdots, \zeta_n) := \prod_{j=1}^n  \left(\frac{d\gamma^{-1}}{d\zeta}(\zeta_j)\right)^{-h_{a_j}}
 \left(\overline{\frac{d\gamma^{-1}}{d\zeta}(\zeta_j)}\right)^{-\bar h_{a_j}} \left|\frac{d\gamma^{-1}}{d\zeta}(\zeta_j)\right|
f(\gamma^{-1}(\zeta_1),\ldots, \gamma^{-1}(\zeta_n)). 
\end{align}
This can be extended to all $f \in \scS^{\pmb{a}}(\bbR^{2n})$ if $\gamma \in \euclidean$,
as such $\gamma$ does not take any point $\bbR^2$ to $\infty$.

\end{itemize}

We call a set $\{S_n^{\pmb{a}}\}_{n \in \bbN, \pmb{a} \in \Upsilon}$ of distributions $S_n^{\pmb{a}} \in (\scS^{\pmb{a}}_{\neq})'(\bbR^{2n})$
\textbf{Schwinger functions on $\bbR^2$} if they satisfy the following \textbf{Osterwalder-Schrader axioms}:
\begin{enumerate}[{(OS}1{)}]
 \item[(OS\textlabel{0$^\prime$}{ax:lineargrowth})] (Linear growth) $S_0 = 1$ (for $n=0$ there is no upper index)
 and there exist $s \in \bbN, C_1, C_2 > 0$ such that
 \begin{align}\label{eq:lineargrowth}
  |S_n^{\pmb{a}}(f)| \le C_1 (n!)^{C_2} |f|_{sn}
 \end{align}
 for all $n \in \bbN, f \in \scS^{\pmb{a}}_{\neq}(\bbR^{2n}), \pmb{a} \in \Upsilon_n$.
 \item\label{ax:invariance} (Euclidean invariance)
 For any $\gamma \in \euclidean$ and $f \in \scS^{\pmb{a}}_{\neq}(\bbR^{2n})$, it holds that
 \begin{align*}
  S_n^{\pmb{a}}(f_\gamma) = S_n^{\pmb{a}}(f)
 \end{align*}
 \item\label{ax:RP} (Reflection positivity)
 For any finite set $A$ of finite sequences\footnote{Here the lower index $n$ is not distinguishing the elements of $A$, but rather
 indicating the length of the sequence $(a_{n,1}, \cdots, a_{n,n})$.
 In other words, elements of $A$ are written as $\pmb{a}_n, \pmb{b}_m...$ and so on.}
 $\pmb{a}_n = (a_{n,1}, \cdots, a_{n,n}) \in \Upsilon_n, n \in \bbN$ and $\{f_n^{\pmb{a}_n}\}, f_n^{\pmb{a}_n} \in \scS^{\pmb{a}_n}_+(\bbR^{2n})$,
 it holds that
 \begin{align}\label{eq:RP}
  \sum_{\pmb{a}_n, \pmb{b}_m, \in A} S_{m+n}^{(\phi\pmb{b}_m, \pmb{a}_n)}(\Theta (f_m^{\pmb{b}_m})^* \otimes f_n^{\pmb{a}_n}) \ge 0,
 \end{align}
  where\footnote{We choose $b_j$ Hermite, so it is natural that $\phi$ has no effect on single $b$'s.} $\phi\pmb{b}_m = (b_{m,m}, \cdots, b_{m,1})$
  and $(\phi\pmb{b}_m, \pmb{a}_n)$ is the finite sequence in $\Upsilon_{m+n}$ given by concatenation and
 \begin{align*}
  f^*(\zeta_1, \cdots, \zeta_n) &= \overline{f(\zeta_n, \cdots, \zeta_1)} \quad \text{(note the inversion of the variables)}\\
  \Theta f(\zeta_1, \cdots, \zeta_n) &= f(\theta \zeta_1, \cdots, \theta \zeta_n), \quad \text{with } \theta \zeta = -\bar \zeta
 \end{align*}

 \item\label{ax:symmetry} (Symmetry) For any $\sigma \in \frS_n$ where $\frS_n$ is the symmetric group with $n$ elements, it holds that
 $S_n^{\sigma\pmb{a}}(f^\sigma) = S_n^{\pmb{a}}(f)$,
 where $f^\sigma(\zeta_1, \cdots, \zeta_n) = f(\zeta_{\sigma^{-1}(1)}, \cdots, \zeta_{\sigma^{-1}(n)})$
 and $\sigma\pmb{a} = (a_{\sigma^{-1}(1)}, \cdots, a_{\sigma^{-1}(n)})$.
 \item\label{ax:clustering} (Clustering)
 For any finite sets $A, B$ of finite sequences of indices and finite sets of test functions $\{f^{\pmb{a}_m}_m\},\{g^{\pmb{b}_n}_n\},
 f^{\pmb{a}_m}_m \in \scS^{\pmb{a}_m}_+(\bbR^{2m}), g^{\pmb{b}_n}_n \in \scS^{\pmb{b}_n}_+(\bbR^{2n})$,
 indices $\pmb{a}_m \in A, \pmb{b}_n \in B$ as in (OS\ref{ax:RP})
 and $y \in \euclidean$ is a translation of the form $y = (0, y_2) \in \bbR^2$, it holds that
  \[
   \lim_{\lambda \to \infty} \sum_{\pmb{a}_m \in A, \pmb{b}_n \in B}
   S_{m+n}^{(\phi\pmb{a}_m, \pmb{b}_n)}(\Theta (f^{\pmb{a}_m}_m)^* \otimes g^{\pmb{b}_m}_{n,\lambda y})
   = \sum_{\pmb{a}_m \in A, \pmb{b}_n \in B} S_m^{\phi\pmb{a}_m}(\Theta (f^{\pmb{a}_m}_m)^*)S_n^{\pmb{b}_n}(g^{\pmb{b}_n}_n),
  \]
  where $\phi\pmb{a}_m$ is as above.
 \item[(OS\textlabel{C}{ax:invarianceC})] (Conformal invariance)
 For any $f \in \scS^{\pmb{a}}_{\neq}(\bbR^{2n})$ with compact support and $\gamma \in \PSLtwoC$ sufficiently close to
 the unit element such that $\gamma$ does not take any point of $\supp f$ to $\infty$, it holds that
 \begin{align*}
  S_n^{\pmb{a}}(f_\gamma) = S_n^{\pmb{a}}(f)
 \end{align*}
  \setcounter{OS}{\value{enumi}}
\end{enumerate}

\begin{remark}
 (OS\ref{ax:invarianceC}) is a generalization of (OS\ref{ax:invariance}),
 where $\euclidean$ is considered as a subgroup of $\PSLtwoC$ as in Section \ref{riemann}.
 For $\gamma \in \euclidean$, the factor $\frac{d\gamma}{d\zeta}$ simplifies
 ($1$ for translations and $\re^{\ri\lambda}$ for rotations, giving the ``spin'', both constants in $\zeta$).

 Our (OS\ref{ax:invariance})--(OS\ref{ax:clustering}) are natural generalizations of E1--E4 of \cite{OS73} to
 multiple fields with spin, while (OS\ref{ax:lineargrowth}) corresponds to E0$'$ of \cite{OS75}. 
 It should not be confused with (OS1$'$) of \cite{Simon74}, which corresponds to $\check{\rE0}$ of \cite{OS75},
 the assumption that $S_n^{\pmb{a}}$ are Laplace transforms of some distributions.
\end{remark}

\section{Proof of the OS axioms}\label{proofOS}
Hereafter we assume that $F$ is a full VOA satisfying local $C_1$-cofiniteness.

Recall that $Y(a,z,\z)=\sum_{r,s \in \bbR}a(r,s)z^{-r-1}\z^{-s-1}$ is a formal series.
The series
\begin{align}\label{eq:standardregion}
 S_n^{\pmb{a}}(\zeta_1, \cdots, \zeta_n)
 &= \<\vac, Y(a_1, \zeta_1, \zee_1)\cdots Y(a_n, \zeta_n, \zee_n)\vac\> \nonumber \\
 &=
\sum_{\substack{r_2, \cdots, r_n \in \bbR \\ s_2, \cdots, s_n \in \bbR}}
 \<\vac, a_1(-r_2\cdots -r_n, -s_2\cdots -s_n)a_2(r_2,s_2)\cdots a_2(r_n,s_n)\vac\> \nonumber \\
 &\qquad\qquad \times \left(\frac{\zeta_2}{\zeta_1}\right)^{-r_2\cdots -r_n}\left(\frac{\zee_2}{\zee_1}\right)^{-s_2\cdots -s_n}\cdots
 \left(\frac{\zeta_n}{\zeta_{n-1}}\right)^{-r_n}\left(\frac{\zee_n}{\zee_{n-1}}\right)^{-s_n}\frac1{|\zeta_1\cdots \zeta_n|^2}
\end{align}
is convergent if $|\zeta_1| > \cdots > |\zeta_n|$.
By Theorem \ref{thm_A}, this extends to $X_n(\bbC)$ and \eqref{eq:invarianceC-function} holds.

To simplify the notations, we denote $\bbR^{2n}_{\neq} = X_n(\bbC) = \{(\zeta_1, \cdots, \zeta_n) : \zeta_j \neq \zeta_k \text{ for } j\neq k\}$
the set of $n$ ordered distinct points and
$\bbR^{2n}_= :=\bbR^{2n} \setminus \bbR^{2n}_{\neq}$ the set of coinciding points.
For test functions $f \in \scS^{\pmb{a}}_{\neq}(\bbR^2)$
with compact support in $\bbR^{2n}_{\neq}$,
we define $S^{\pmb{a}}_n(f)$ by the integral
\begin{align}\label{eq:defSn}
 \int S^{\pmb{a}}_n(\zeta_1, \cdots, \zeta_n) f(\zeta_1, \cdots, \zeta_n) d\tau_1 d\xi_1\cdots d\tau_n d\xi_n,
\end{align}
where $\zeta = (\tau, \xi)$ is considered as a point in $\bbR^2$.
By this definition, it is clear from Theorem \ref{thm_B} that $\{S^{\pmb{a}}_n\}$ satisfy (OS\ref{ax:symmetry})
for test functions $f = f^{\pmb{a}}_n$ with compact support in $\bbR^{2n}_{\neq}$.

\begin{theorem}\label{th:fulltoOS}
 Let $F$ be a unitary simple full VOA satisfying local $C_1$-cofiniteness, polynomial energy bounds \eqref{eq:PEB} and polynomial spectral density \eqref{eq:density},
 and $\Upsilon$ be a finite family of quasi-primary vectors in $F$.
 Then $\{S^{\pmb{a}}_n\}$ extends to $(\scS^{\pmb{a}}_{\neq})'(\bbR^{2n})$ and satisfy the conformal OS axioms
 (OS\ref{ax:lineargrowth})--(OS\ref{ax:invarianceC}). 
\end{theorem}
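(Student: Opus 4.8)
We may assume, in accordance with the conventions of Section~\ref{OS} (cf.\ the footnotes there), that every element of $\Upsilon$ is a \emph{Hermite} quasi-primary vector, so that Proposition~\ref{prop_global_conf} applies to every $\pmb{a}\in\Upsilon_n$. The plan is to define $S^{\pmb{a}}_n$ first as a linear functional on $C_c^\infty(\bbR^{2n}_{\neq})$ by the absolutely convergent integral~\eqref{eq:defSn}, to establish the covariance and positivity properties at that level, and then to prove the linear growth bound (OS\ref{ax:lineargrowth}); the bound provides continuity for the natural topology of $\scS^{\pmb{a}}_{\neq}(\bbR^{2n})$ and hence a unique continuous extension to $(\scS^{\pmb{a}}_{\neq})'(\bbR^{2n})$, to which all remaining axioms then pass. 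On $C_c^\infty(\bbR^{2n}_{\neq})$, symmetry (OS\ref{ax:symmetry}) is immediate from the (Symmetry) clause of Theorem~\ref{thm_B} after the substitution $\pmb{\zeta}\mapsto\sigma\pmb{\zeta}$. Conformal invariance (OS\ref{ax:invarianceC}) follows from Proposition~\ref{prop_global_conf}: for $\gamma\in\PSLtwoC$ close to the identity and not taking $\supp f$ to $\infty$, substituting $\zeta_j\mapsto\gamma^{-1}(\zeta_j)$ in~\eqref{eq:defSn}, the Jacobian $\prod_j\bigl|\tfrac{d\gamma^{-1}}{d\zeta}(\zeta_j)\bigr|^2$ of $\prod_j d\tau_j\,d\xi_j$ and the covariance factors of~\eqref{eq:invarianceC-function} together reproduce exactly the weights defining $f_\gamma$ in~\eqref{eq:fgamma}. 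Euclidean invariance (OS\ref{ax:invariance}) is the subcase $\gamma\in\euclidean$, where no point is sent to $\infty$ and $f\mapsto f_\gamma$ is continuous on all of $\scS^{\pmb{a}}_{\neq}(\bbR^{2n})$, so the identity holds beyond compactly supported $f$.

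The main obstacle is the linear growth bound (OS\ref{ax:lineargrowth}), which must come with constants $s,C_1,C_2$ uniform in $n$ and in $\pmb{a}\in\Upsilon_n$ (and with $S_0=\langle\vac,\vac\rangle=1$). Here is the plan. Since $\bigcup_{\sigma\in\frS_n}U_n^\sigma$ exhausts $X_n(\bbC)$ up to a null set, write $S^{\pmb{a}}_n(f)=\sum_{\sigma\in\frS_n}\int_{U_n^\sigma}S^{\pmb{a}}_n(\pmb{\zeta})\,f(\pmb{\zeta})\,d\pmb{\zeta}$ and, on each $U_n^\sigma$, represent $S^{\pmb{a}}_n$ by the absolutely and locally uniformly convergent radial series~\eqref{eq:standardregion} with $\pmb{a}$ and $\pmb{\zeta}$ permuted by $\sigma$. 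The modes $\langle\vac,a_{\sigma(1)}(\cdots)a_{\sigma(2)}(r_2,s_2)\cdots\vac\rangle$ are bounded polynomially in the indices $(r_j,s_j)$, with $\Upsilon$-uniform constants, by Lemma~\ref{lm:PEBestimate}; integrating the accompanying monomial in the $\zeta_{\sigma(j)}$ against $f$ and integrating by parts in the radial and angular variables of each $\zeta_{\sigma(j)}$ trades each power of an index for a derivative of $f$, so the sum over $(r_2,s_2),\dots,(r_n,s_n)$ --- after using the polynomial spectral density~\eqref{eq:density} to bound the number of modes contributing at each energy level --- is dominated by a constant times a single Schwartz seminorm $|f|_{sn}$ with $s$ depending only on $\Upsilon$, the prefactor $(n!)^{C_2}$ absorbing the sum over $\frS_n$ and the combinatorics of distributing derivatives among the $n$ variables. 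The delicate points are to justify the term-by-term integration up to the boundary of each $U_n^\sigma$, where~\eqref{eq:standardregion} ceases to converge uniformly but $S^{\pmb{a}}_n$ stays real analytic by Theorem~\ref{thm_A}, and to keep the order of the dominating seminorm linear in $n$. Once $|S^{\pmb{a}}_n(f)|\le C_1(n!)^{C_2}|f|_{sn}$ is known for $f\in C_c^\infty(\bbR^{2n}_{\neq})$, one extends by density: given $f\in\scS^{\pmb{a}}_{\neq}(\bbR^{2n})$, first cut off near $\infty$, then multiply by a smooth cutoff vanishing in a shrinking neighbourhood of $\bbR^{2n}_=$; these approximants lie in $C_c^\infty(\bbR^{2n}_{\neq})$ and converge to $f$ in the Schwartz topology because $f$ is Schwartz and vanishes to infinite order on $\bbR^{2n}_=$. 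Hence $S^{\pmb{a}}_n$ extends uniquely to $(\scS^{\pmb{a}}_{\neq})'(\bbR^{2n})$ with the same bound, and (OS\ref{ax:symmetry}), (OS\ref{ax:invariance}), (OS\ref{ax:invarianceC}) extend by continuity.

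For reflection positivity (OS\ref{ax:RP}), note that each $f^{\pmb{a}_n}_n\in\scS^{\pmb{a}_n}_+(\bbR^{2n})$ is supported in $\{0<\tau_1<\cdots<\tau_n\}$, so $\Theta(f^{\pmb{b}_m}_m)^*\otimes f^{\pmb{a}_n}_n$ is supported in $\{\tau_1<\cdots<\tau_m<0<\tau_{m+1}<\cdots<\tau_{m+n}\}$, which avoids $\bbR^{2(m+n)}_=$; thus the left-hand side of~\eqref{eq:RP} is given by the integral~\eqref{eq:defSn} directly. Passing to radial coordinates by the conformal covariance of Proposition~\ref{prop_global_conf} --- in which the reflection $\theta\zeta=-\bar\zeta$ becomes inversion $\zeta\mapsto 1/\bar\zeta$ through the unit circle and the covariance weights become exactly the factors of~\eqref{eq:hermite} --- and using the Hermiticity identity~\eqref{eq:hermite-sp} together with $\phi\vac=\vac$, one rewrites
\[
 \sum_{\pmb{a}_n,\pmb{b}_m\in A}S^{(\phi\pmb{b}_m,\pmb{a}_n)}_{m+n}\bigl(\Theta(f^{\pmb{b}_m}_m)^*\otimes f^{\pmb{a}_n}_n\bigr)=\Bigl\|\sum_{\pmb{a}_n\in A}\Phi_{f^{\pmb{a}_n}_n}\Bigr\|^2\ge 0,
\]
where $\Phi_{f^{\pmb{a}_n}_n}=\int f^{\pmb{a}_n}_n(\pmb{\zeta})\,Y(a_{n,1},\uze_1)\cdots Y(a_{n,n},\uze_n)\vac\;d\pmb{\zeta}$ is a well-defined vector in the Hilbert-space completion of $(F,\langle\bullet,\bullet\rangle)$, the vector-valued integral making sense by (PEB) (as in Lemma~\ref{lm:PEBestimate}) because the $\tau$-ordering of the support turns into the radial ordering on which the vector-valued series converges; positivity is then the positive-definiteness of $\langle\bullet,\bullet\rangle$, i.e.\ unitarity. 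Finally, for clustering (OS\ref{ax:clustering}): as $\lambda\to\infty$ the two clusters separate in $\bbR^2$, and after a conformal transformation bringing the configuration into the domain $U_{m,n}$ the tree expansion of Proposition~\ref{prop_cluster_tree} writes $S^{(\phi\pmb{a}_m,\pmb{b}_n)}_{m+n}$ as a sum over the intermediate states linking the two clusters; each intermediate state of strictly positive energy contributes a term vanishing in the limit, and since $F$ is simple and unitary we have $F_{0,0}=\bbC\vac$ by Proposition~\ref{prop_unitary_vacuum}, so only the vacuum channel survives and the limit equals $\sum_{\pmb{a}_m\in A,\pmb{b}_n\in B}S^{\phi\pmb{a}_m}_m(\Theta(f^{\pmb{a}_m}_m)^*)\,S^{\pmb{b}_n}_n(g^{\pmb{b}_n}_n)$, the interchange of the limit with the mode sums being justified by (PEB) and (PSD).
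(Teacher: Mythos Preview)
Your outline for (OS\ref{ax:symmetry}), (OS\ref{ax:invarianceC}), and (OS\ref{ax:clustering}) matches the paper's approach. The real problems are in (OS\ref{ax:lineargrowth}) and, to a lesser extent, (OS\ref{ax:RP}).

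\textbf{Linear growth.} Your plan to decompose $X_n(\bbC)$ into $\bigcup_\sigma U_n^\sigma$ and use the radial expansion~\eqref{eq:standardregion} on each piece cannot be completed as written. The series~\eqref{eq:standardregion} involves $\sum_t c_t\,(\rho_{j+1}/\rho_j)^t$ with $|c_t|$ growing polynomially in $t$; this converges only in the \emph{open} region $\rho_{j+1}/\rho_j<1$, and the convergence degenerates as $\rho_{j+1}/\rho_j\to 1$. Integration by parts in the angular variables handles the $r_j-s_j\in\bbZ$ part, but there is no analogous mechanism in the radial variables that produces enough decay to sum over the ``energy'' $r_j+s_j$ up to the boundary---you yourself flag this as the ``delicate point'' without resolving it. The paper's route is genuinely different: rather than integrating over each $U_n^\sigma$ to its boundary, it first cuts $\supp f$ into small pieces by a controlled partition of unity, then uses \emph{translation invariance} to move each piece into the region $\{\rho_{j+1}/\rho_j<1-\eta'\}$ for a \emph{fixed} $\eta'>0$ (Lemma~\ref{lm:translateddomain}), where the radial sum is dominated by $\sum_t t^Q(1-\eta')^t<\infty$. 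The existence of a direction in which to translate, with $\eta'$ depending only on $n,R,\epsilon$, is the geometric Lemma~\ref{lm:circle-cover}. The dependence on $R$ and $\epsilon$ is then absorbed into a higher Schwartz norm in step~(d). This translation trick is the missing idea.

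\textbf{Reflection positivity.} Your claim that ``the $\tau$-ordering of the support turns into the radial ordering'' is false as stated: $0<\tau_1<\cdots<\tau_n$ does not imply $|\zeta_1|>\cdots>|\zeta_n|$, nor does it imply $|\frC(\zeta_1)|>\cdots>|\frC(\zeta_n)|$ for the Cayley transform $\frC$ in general (take $\zeta_1=0.1+10\ri$, $\zeta_2=0.2$). The paper first proves~\eqref{eq:RP} only for test functions whose support does satisfy the radial ordering after $\frC$, and then observes that a sufficiently large \emph{dilation} (which commutes with $\theta$ and leaves the positivity expression invariant by conformal covariance) shrinks any compactly supported $f^{\pmb{a}_j}_j\in\scS^{\pmb{a}_j}_+$ into that region; see step~(b) of Section~\ref{RP}. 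You need this dilation step to make the vector $\Phi_{f^{\pmb{a}_n}_n}$ well defined.
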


The proof extends in the following Sections. It is not that one axiom is proved in one section,
but they are rather interrelated. The proof of (OS\ref{ax:lineargrowth}) is given in Section \ref{LG},
which uses the translation-invariance for $f$ compactly supported in $\bbR^{2n}_{\neq}$.
The proofs of (OS\ref{ax:invarianceC}), (OS\ref{ax:symmetry}) are concluded in Section \ref{LG} as well,
while (OS\ref{ax:RP}) is proved in Section \ref{RP} and (OS\ref{ax:clustering}) is given in Section \ref{clustering}.

\subsection{Conformal invariance for compactly supported functions}\label{invariance}
Let us first prove (OS\ref{ax:invariance}) and (OS\ref{ax:invarianceC}) for test functions compactly supported in $\bbR^{2n}_{\neq}$.
We need this result in the proof of (OS\ref{ax:lineargrowth}), which in turn implies the continuity
of $S^{\pmb{a}}_n(f)$ in $f$, in particular
the invariance for all test functions in $\scS^{\pmb{a}}_{\neq}(\bbR^2)$ follows by continuity.

The conformal invariance for $f$ compactly supported in $\bbR^{2n}_{\neq}$ goes as follows:
writing $\omega = \gamma^{-1}(\zeta) = \upsilon + \ri\eta$, thus $\zeta = \gamma(\omega)$,
\begin{align*}
&S^{\pmb{a}}_n(f_\gamma) \\
&=\int S^{\pmb{a}}_n(\zeta_1, \cdots, \zeta_n) f_\gamma(\zeta_1, \cdots, \zeta_n) d\tau_1 d\xi_1\cdots d\tau_n d\xi_n\\
&= \int S^{\pmb{a}}_n(\zeta_1, \cdots, \zeta_n) \\
&\quad\times\prod_{j=1}^n  \left(\frac{d\gamma^{-1}}{d\zeta}(\zeta_j)\right)^{-h_{a_j}} \left(\overline{\frac{d\gamma^{-1}}{d\zeta}(\zeta_j)}\right)^{-\bar h_{a_j}} \left|\frac{d\gamma^{-1}}{d\zeta}(\zeta_j)\right|
f(\gamma^{-1}(\zeta_1),\ldots, \gamma^{-1}(\zeta_n)) d\tau_1 d\xi_1\cdots d\tau_n d\xi_n\\
&= \int S^{\pmb{a}}_n(\gamma(\omega_1), \cdots, \gamma(\omega_n)) \\
&\quad\times\prod_{j=1}^n  \left(\frac{d\gamma^{-1}}{d\zeta}(\gamma(\omega_j))\right)^{-h_{a_j}}
\left(\overline{\frac{d\gamma^{-1}}{d\zeta}(\gamma(\omega_j))}\right)^{-\bar h_{a_j}} \left|\frac{d\gamma^{-1}}{d\zeta}(\gamma(\omega_j))\right|
\left|\frac{d\gamma}{d\omega}(\omega_j)\right| \\
&\quad\times f(\omega_1,\ldots, \omega_n) d\upsilon_1 d\eta_1\cdots d\upsilon_n d\eta_n\\
&= \int \prod_{j=1}^n  \left(\frac{d\gamma}{dw}(\omega_j)\right)^{h_{a_j}} \left(\overline{\frac{d\gamma}{d\zeta}(\omega_j)}\right)^{\bar h_{a_j}}
S^{\pmb{a}}_n(\gamma(\omega_1), \cdots,\gamma(\omega_n)) f(\omega_1,\ldots, \omega_n) d\upsilon_1 d\eta_1\cdots d\upsilon_n d\eta_n\\
&= \int  S^{\pmb{a}}_n(\omega_1, \cdots, \omega_n) f(\omega_1,\ldots, \omega_n) d\upsilon_1 d\eta_1\cdots d\upsilon_n d\eta_n=S^{\pmb{a}}_n(f),
\end{align*}
where we substitute $f_\gamma$ \eqref{eq:fgamma} in the second equation,
we change the variable $\zeta_j = \gamma(\omega_j)$ with the Jacobian $\prod_{j=1}^n |\frac{d\gamma}{d\omega}(\omega_j)|$ in the third equation
and used the relation $\frac{d\gamma^{-1}}{d\zeta}(\gamma(\omega)) = (\frac{d\gamma}{d\omega}(\omega))^{-1}$ in the fourth equation
and rewrote using \eqref{eq:invarianceC-function} in the fifth equation.

\subsection{Linear growth + extension to distributions}\label{LG}
The linear growth condition (OS\ref{ax:lineargrowth}) is an estimate of $S^{\pmb{a}}_n(f)$
in terms of the Schwartz norm of $f \in \scS^{\pmb{a}}_{\neq}(\bbR^{2n})$ with a good behaviour in $n$.
As all $S^{\pmb{a}}_n$, first defined on functions compactly supported in $\bbR^{2n}_{\neq}$, have finite Schwartz norm,
they extend to distributions in $(\scS^{\pmb{a}}_{\neq})'(\bbR^{2n})$.
Then (OS\ref{ax:symmetry}) and (OS\ref{ax:invarianceC}) follow by continuity.
We will use here the translation invariance of $S^{\pmb{a}}_n(f)$ for $f$ compactly supported in $\bbR^{2n}_{\neq}$.

We prove linear growth as follows.
\begin{itemize}
 \item[(a)] We restrict $f$ to those with compact support in $\bbR^{2n}_{\neq}$.
 With this, we may assume that $\supp f$ has radius $R > 0$
 and has distance (see below, not the Euclidean distance) $\epsilon > 0$ from the set of coinciding points.
 They both depend on $f$.
 This allows us (see the next point) to decompose $\supp f$ into finitely many small pieces,
 each of which can be translated to a set where a simple representation of $S^{\pmb{a}}_n$ is available,
 cf.\! \eqref{eq:standardregion}.
 \item[(b)] We cut $f$ into small pieces by a smooth partition of unity.
 This requires estimates of all derivatives of the bump functions.
 \item[(c)] We estimate each piece in a translated region,
 and collect them together. This gives an estimate of $S^{\pmb{a}}_n(f)$ in terms of a Schwartz norm
 that depends on $R, \epsilon$.
 \item[(d)] We write the above estimates in terms of stronger Schwartz norms of $f$ without $R, \epsilon$.
 The dependence on $n$ satisfies (OS\ref{ax:lineargrowth}).
\end{itemize}
In the course of the proof, we estimate $S^{\pmb{a}}_n(f)$ in various steps
and we reduce the estimate to some parts up to a factor of the form $C_1 (n!)^{C_2}$.
As the actual value of $C_1, C_2$ do not matter but it is only important that
they are independent of $n$, we do not specify these values.
Instead, we mark every point where we extract a factor of the form $C_1 (n!)^{C_2}$ with $\bigstar$.
Every time we update such a factor, we indicate it with $\bigstar'$.
This makes it clear that we can gather them and rename it as $C_1 (n!)^{C_2}$ and obtain estimates in the desired form.

Moreover, for $q$ independent of $n$, we will use the following fact without further remark:
\begin{itemize}
 \item $2^{qn} = (2^q)^n \le C_qn! = \bigstar$,
 \item $(qn)! = qn \cdots (qn-1) \cdots 1 \le q^{qn}(n!)^q \le C_q(n!)^{q+1} = \bigstar$
 \item $(n^2)^{qn} = (n^n)^{2q} \le (n!)^{4q} = \bigstar$ (the Stirling formula)
 \item $a^n \le Cn! = \bigstar$
\end{itemize}

\paragraph{(a) Finding good directions.}
Here we use $y_\bullet$ for points in $\bbR^2$.

Let $f \in \scS^{\pmb{a}}_{\neq}(\bbR^{2n})$ be a test function compactly supported in $\bbR^{2n}_{\neq}$.
For each pair $j\neq k$, the set
\begin{align*}
 \{y_j-y_k \in \bbR^2 : (y_1, \cdots, y_n) \in \supp f\} 
\end{align*}
is compact and does not contain $0$ and there are $\frac{n(n-1)}2$ (thus finitely many) such pairs.
Therefore, there are $\epsilon > 0, R > 0$ such that
\begin{align}\label{eq:epsilon}
 \{\|y_k-y_j\| \in \bbR : (y_1, \cdots, y_n) \in \supp f\} &> \epsilon, \quad \text {for all }j \neq k, \\
 \left\{\sqrt{\sum_{j=1}^n \|y_j\|^2} : (y_1, \cdots, y_n) \in \supp f\right\} &< R. \nonumber
\end{align}
We may assume that $\epsilon < 1 < R$. They clearly depend on $f$.

We need the following variation of a purely geometric lemma in $\bbR^2$ \cite[Lemma II.11]{Simon74}
(the difference is that we specify the constant $c$ explicitly).
\begin{lemma}\label{lm:circle-cover}
 Let $n \in \bbN$. For each $y=(y_1,\ldots, y_n)\in \bbR^{2n}_{\neq}$,
 there is a unit vector $\hat e_y \in S^1 \subset \bbR^2$
 such that
 \begin{align*}
  |\hat e_y \cdot (y_j - y_k)| \ge \frac{\pi}{4n^2}\|y_j - y_k\|, \quad j \neq k,
 \end{align*}
 where $y_j - y_k \in \bbR^2$. 
 Moreover, for $y, y' \in \supp f \subset \bbR^{2n}$ and $\|y'-y\| < \frac{\pi\epsilon}{32n^2}$, it holds that
 \begin{align}\label{eq:preepsilon}
  |\hat e_y \cdot (y'_j - y'_k)| \ge \frac{\pi}{8n^2}\|y'_j - y'_k\|, \quad j \neq k.
 \end{align}
\end{lemma}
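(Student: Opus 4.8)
The plan is to follow the circle-covering argument of \cite[Lemma~II.11]{Simon74}, keeping careful track of the constants. For the first assertion I would fix $y=(y_1,\ldots,y_n)\in\bbR^{2n}_{\neq}$ (the cases $n\le 1$ being vacuous), and for each unordered pair $j<k$ record the ``bad arc''
\[
B_{jk}=\{\hat e\in S^1 : |\hat e\cdot(y_j-y_k)| < c\,\|y_j-y_k\|\},\qquad c:=\frac{\pi}{4n^2},
\]
consisting of directions nearly perpendicular to $y_j-y_k$. In the angular parametrization of $S^1$ this is a union of two arcs, each of length $2\arcsin c$, so $|B_{jk}|=4\arcsin c\le 2\pi c$ (using $\arcsin c\le\frac{\pi}{2}c$ on $[0,1]$). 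Summing over the $\frac{n(n-1)}{2}$ pairs, the total measure of $\bigcup_{j<k}B_{jk}$ is at most $\pi n(n-1)c=\frac{\pi^2(n-1)}{4n}<\frac{\pi^2}{4}<2\pi=|S^1|$. Hence the bad arcs cannot cover the circle, and I take $\hat e_y$ to be any unit vector in the complement; by construction $|\hat e_y\cdot(y_j-y_k)|\ge\frac{\pi}{4n^2}\|y_j-y_k\|$ for all $j\neq k$.

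For the second assertion I would start from what is already in hand for $y\in\supp f$: the bound just obtained, together with $\|y_j-y_k\|>\epsilon$ from \eqref{eq:epsilon}. Given $y'\in\supp f$ with $\|y'-y\|<\frac{\pi\epsilon}{32n^2}$ and a fixed pair $j\neq k$, set $\delta:=(y'_j-y'_k)-(y_j-y_k)\in\bbR^2$; since each coordinate block of a vector in $\bbR^{2n}$ has norm at most the full norm, $\|\delta\|\le\|y'_j-y_j\|+\|y'_k-y_k\|\le 2\|y'-y\|<\frac{\pi\epsilon}{16n^2}$. Then $|\hat e_y\cdot(y'_j-y'_k)|\ge\frac{\pi}{4n^2}\|y_j-y_k\|-\|\delta\|$ and $\|y'_j-y'_k\|\le\|y_j-y_k\|+\|\delta\|$, so the desired inequality $|\hat e_y\cdot(y'_j-y'_k)|\ge\frac{\pi}{8n^2}\|y'_j-y'_k\|$ reduces to the elementary estimate $\frac{\pi}{8n^2}\|y_j-y_k\|\ge\|\delta\|\bigl(1+\frac{\pi}{8n^2}\bigr)$, which follows at once from $\|y_j-y_k\|>\epsilon$, $1+\frac{\pi}{8n^2}<2$, and $\|\delta\|<\frac{\pi\epsilon}{16n^2}$.

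The only subtle point is this last step. A naive perturbation argument yields only the \emph{additive} lower bound $|\hat e_y\cdot(y'_j-y'_k)|\ge\frac{\pi}{4n^2}\|y_j-y_k\|-\|\delta\|$, which is worthless when $y'_j$ and $y'_k$ are far apart, since $\|y'_j-y'_k\|$ can be as large as $2R$. The passage to the \emph{relative} bound $\ge\frac{\pi}{8n^2}\|y'_j-y'_k\|$ works precisely because $\epsilon$ is a genuine uniform separation scale on $\supp f$ (and $\epsilon<1$), which is exactly why the threshold in the hypothesis is $\frac{\pi\epsilon}{32n^2}$ rather than a bare multiple of $n^{-2}$. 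Everything else is the standard measure count on the circle.
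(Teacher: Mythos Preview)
Your proof is correct. The first assertion follows the same circle-covering measure argument as the paper, with only cosmetic differences in how the arc length is bounded.

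For the second assertion your route differs from the paper's. The paper normalizes, working with the map $g_{jk}(y')=\hat e_y\cdot\frac{y'_j-y'_k}{\|y'_j-y'_k\|}$, computes its four nontrivial partial derivatives explicitly (each bounded by $1/\epsilon$ on $\supp f$), and then invokes a separate gradient-theorem lemma (Lemma~\ref{lm:derivative}) to propagate the bound $|g_{jk}(y)|\ge\frac{\pi}{4n^2}$ to $|g_{jk}(y')|\ge\frac{\pi}{8n^2}$ for $\|y'-y\|<\frac{\pi\epsilon}{32n^2}$. You instead perturb the unnormalized quantity directly via the triangle inequality and reduce to the scalar inequality $\frac{\pi}{8n^2}\|y_j-y_k\|\ge\|\delta\|(1+\frac{\pi}{8n^2})$, which follows from $\|y_j-y_k\|>\epsilon$ and $\|\delta\|<\frac{\pi\epsilon}{16n^2}$. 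Your argument is shorter and avoids both the derivative computation and the auxiliary lemma; the paper's approach is more systematic (it isolates the perturbation step as a general principle) but heavier. One minor remark: in your closing paragraph the additive bound is not ``worthless when $y'_j$ and $y'_k$ are far apart''---in that regime it is actually comfortable, since $\|y_j-y_k\|\approx\|y'_j-y'_k\|$; the delicate case is when $\|y'_j-y'_k\|$ is near $\epsilon$, which is precisely where your use of the separation scale bites. This does not affect the validity of the proof.
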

\begin{proof}
 Let $\{e_{jk}\}_{1 \le j < k \le n}$ be an arbitrary set of unit vectors in $S^1\subset \bbR^2$.
 For each $e_{jk}$, the union $\mathcal{E}_{jk}$ of two arcs made of unit vectors $\hat e$ such that $|\hat e \cdot e_{jk}| < \frac\pi{4n^2}$
 has measure $4 \cdot \arcsin \frac\pi{4n^2} < \frac{2\pi}{n^2}$.
 Indeed, for every $j,k$, it is the union of two arcs whose vectors span an angle $\vartheta$ with $e_{jk}$,
 where $\frac{\pi}2 - \frac\pi{4n^2} < \sin\pm\vartheta < \frac{\pi}2 + \frac\pi{4n^2}$,  see Figure \ref{fig:slice}.
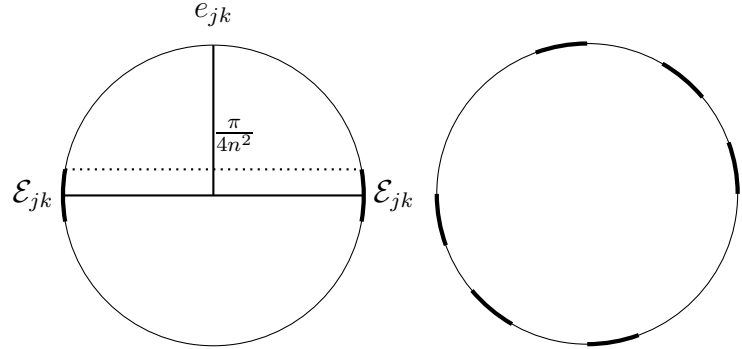
\begin{figure}[ht]\centering
\begin{tikzpicture}[scale=0.4]
  \draw[color=black](0,0) circle (5);
  \node at (0,6) {$e_{jk}$} ;
  \draw[thick] (0,0)--(0,5);
  \draw[thick] (-5,0)--(5,0);
  \draw[dotted, thick] (-4.924,0.8682)--(4.924,0.8682);
  \node at (0.7, 2) {$\frac{\pi}{4n^2}$};
  \draw[ultra thick] (-10:5) arc(-10:10:5);
  \draw[ultra thick] (170:5) arc(170:190:5);
  \node at (6,0) {$\cE_{jk}$} ;
  \node at (-6,0) {$\cE_{jk}$} ;
\end{tikzpicture}
\begin{tikzpicture}[scale=0.4]
  \draw[color=black] (0,0) circle (5);
  \draw[ultra thick] (0:5) arc(0:20:5);
  \draw[ultra thick] (180:5) arc(180:200:5);
  \draw[ultra thick] (40:5) arc(40:60:5);
  \draw[ultra thick] (220:5) arc(220:240:5);
  \draw[ultra thick] (90:5) arc(90:110:5);
  \draw[ultra thick] (270:5) arc(270:290:5);
\end{tikzpicture}
\caption{Left: for a given unit vector $e_{jk}$, the ``bad'' directions $\cE_{jk}$ are indicated.
Right: after excluding ``bad'' directions, there are nonempty intervals of ``good'' directions.}
\label{fig:slice}
\end{figure}

There are $\frac{n(n-1)}2$ such unit vectors $e_{jk}$, thus the union of the arcs in $\mathcal{E}_{jk}$ has the measure not larger than $\pi$. In particular, its complement is nonempty.
We can take $\hat e_y$ from the complement.

To show \eqref{eq:preepsilon}, fix $j,k\in\{1,\ldots,n\}$, $j\neq k$. We consider the map $g_{jk}:\mathbb{R}^{2n}_{\neq}\to S^1 \subset \mathbb{R}^2$ defined as $y'=(y'_1,\ldots,y'_n)\mapsto \frac{y'_j - y'_k}{\|y'_j - y'_k\|}$. As $y'_j - y'_k \neq 0$, the partial derivatives are well defined.
Let $y'_\ell=(s'_\ell,t'_\ell)$ for $\ell=1,\ldots,n$ for $s'_\ell,t'_\ell\in\mathbb{R}$. For every $j\neq \ell\neq k$, then $\partial_{s'_\ell}g_{jk}=0=\partial_{t'_\ell}g_{jk}$. Thus only 4 partial derivatives are non-trivial, namely $\partial_{s'_j}g_{jk}$, $\partial_{t'_j}g_{jk}$, $\partial_{s'_k}g_{jk}$ and $\partial_{t'_k}g_{jk}$.

Due to the high symmetry of $g_{jk}$, without loss of generality, we compute the partial derivative $\partial_{s'_j}$, as the other non-trivial partial derivatives go similarly. We have
$$\partial_{s'_j}g_{jk}(s'_j,t'_j,s'_k,t'_k)=\left(\frac{\|y'_j - y'_k\|^2-2(s'_j-s'_k)^2}{\|y'_j - y'_k\|^3}, \frac{2(t'_j-t'_k)(s'_j-s'_k)}{\|y'_j - y'_k\|^3}\right).$$
The square of the norm of $\partial_{s'_j}g_{jk}(s'_j,t'_j,s'_k,t'_k)$ is given by
\begin{align*}
\left\|\partial_{s'_j}g_{jk}(s'_j,t'_j,s'_k,t'_k)\right\|^2&=\frac{\left(\|y'_j - y'_k\|^2-2(s'_j-s'_k)^4\right)^2}{\|y'_j - y'_k\|^6}+\frac{4(t'_j-t'_k)^2(s'_j-s'_k)^2}{\|y'_j - y'_k\|^6}\\
&= \frac{\|y'_j - y'_k\|^4-4(s'_j-s'_k)^2\|y'_j - y'_k\|^2+4(s'_j-s'_k)^4+4(t'_j-t'_k)^2(s'_j-s'_k)^2}{\|y'_j - y'_k\|^6}\\
&= \frac{\|y'_j - y'_k\|^4}{\|y'_j - y'_k\|^6}=\frac1{\|y'_j - y'_k\|^2}
\end{align*}
Thus $\left\|\partial_{s'_j}g_{jk}(s'_j,t'_j,s'_k,t'_k)\right\|=\frac1{\|y'_j - y'_k\|}$. If we choose $y'\in \mathrm{supp} f$,
then by \eqref{eq:epsilon} $\|y'_j-y'_k\|>\epsilon$ and thus we conclude that $\left\|\partial_{s'_j}g_{jk}(s'_j,t'_j,s'_k,t'_k)\right\|=\frac1{\|y'_j - y'_k\|}<\frac1{\epsilon}$.

For a fixed $y\in\supp f$, we are going to apply Lemma \ref{lm:derivative} below with $g_{jk}:\mathbb{R}^{2n}_{\neq}\to\mathbb{R}$ defined by
$g_{jk}(y'):= \hat e_y\cdot  \frac{y'_j-y'_k}{\|y'_j - y'_k\|}, \eta:=\frac\pi{4n^2}, H:=\frac4\epsilon, M:=4$. 
By the first part of the present lemma, we know that for $y'=y$, we have
\begin{align*}
 \left| g_{jk}(y)\right|= \left|\hat e_y\cdot  \frac{y_j-y_k}{\|y_j - y_k\|}\right|\geq\frac{\pi}{4n^2}=\eta.
\end{align*}
Without loss of generality, among the non-trivial partial derivatives, we consider the partial derivative $\partial_{s_j}$. It holds that $\partial_{s_j}  g_{jk}(y') = \hat e_y\cdot \partial_{s_j}\frac{y'_j-y'_k}{\|y'_j - y'_k\|}$. Thus
\begin{align*}
 \left|\partial_{s_j}  g_{jk}(y')\right|= \left|\hat e_y\cdot \partial_{s_j}\frac{y'_j-y'_k}{\|y'_j - y'_k\|}\right|\leq \left|\partial_{s_j}\frac{y'_j-y'_k}{\|y'_j - y'_k\|}\right|=\frac1{\|y'_j - y'_k\|}<\frac1{\epsilon}=\frac HM.
\end{align*}
Therefore, as long as $\|y'-y\| < \frac{\pi\epsilon}{32n^2}=\frac{\eta}{2H}$, we have by Lemma \ref{lm:derivative}
$$\left|g_{jk}(y')\right|=\left|\hat e_y\cdot  \frac{y'_j-y'_k}{\|y'_j - y'_k\|}\right| \ge \frac{\pi}{8n^2},$$
namely $|\hat e_y\cdot  (y'_j-y'_k)| \ge \frac{\pi}{8n^2}\|y'_j - y'_k\|$ as desired.
\end{proof}

The point of the following lemma is that the estimate is independent of the dimension $l$,
because of the fixed number of non-trivial partial derivatives.
\begin{lemma}\label{lm:derivative}
Let $h:U\subseteq\mathbb{R}^l\to\mathbb{R}$ be a differentiable function. Assume that there exits a point $x\in U$ such that $|h(x)|\geq \eta$ for $\eta> 0$ and $|\partial_{x'_i}h(x')|<\frac{H}{M}$ for $H>0$, where $M$ is the number of non-trivial partial derivatives. Then, for all $x'\in U$ such that $\|x'-x\|<\frac\eta{2H}$, then $|h(x')|>\frac\eta2$.  
\end{lemma}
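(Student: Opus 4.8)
The plan is to integrate the directional derivative of $h$ along the line segment joining $x$ to $x'$ and exploit that only $M$ of the partial derivatives are non-trivial. First I would observe that for $x'$ with $\|x'-x\| < \tfrac{\eta}{2H}$ we may assume that the segment $x(t) := x + t(x'-x)$, $t\in[0,1]$, lies in $U$ (this is automatic if $U$ is convex; in the intended application to $g_{jk}$ on $\bbR^{2n}_{\neq}$ one checks it directly, since for $y,y'\in\supp f$ with $\|y'-y\| < \tfrac{\pi\epsilon}{32n^2}$ one has $\|(y'_j-y'_k)-(y_j-y_k)\| \le 2\|y'-y\| < \epsilon < \|y_j-y_k\|$, so $x(t)_j \neq x(t)_k$ for all $t$). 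Setting $\varphi(t) := h(x(t))$, differentiability of $h$ gives $\varphi'(t) = \nabla h(x(t))\cdot(x'-x)$, and hence $h(x')-h(x) = \int_0^1 \nabla h(x(t))\cdot(x'-x)\,dt$.

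Next I would bound the integrand. Since at most $M$ of the partials $\partial_{x_i}h$ are non-zero and each such one satisfies $|\partial_{x_i}h|<\tfrac{H}{M}$ (on $U$, hence along the segment), Cauchy--Schwarz yields
\begin{align*}
\bigl|\nabla h(y)\cdot(x'-x)\bigr| \le \|\nabla h(y)\|\,\|x'-x\| \le \sqrt{M\cdot(H/M)^2}\,\|x'-x\| = \frac{H}{\sqrt M}\,\|x'-x\| \le H\,\|x'-x\|
\end{align*}
for every $y$ on the segment. Therefore
\begin{align*}
|h(x')-h(x)| \le \int_0^1 \bigl|\nabla h(x(t))\cdot(x'-x)\bigr|\,dt \le H\,\|x'-x\| < H\cdot\frac{\eta}{2H} = \frac{\eta}{2}.
\end{align*}
The reverse triangle inequality then gives $|h(x')| \ge |h(x)| - |h(x')-h(x)| > \eta - \tfrac{\eta}{2} = \tfrac{\eta}{2}$, as claimed.

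There is no genuine obstacle in this lemma; the only mild point of care is that the bound on the partial derivatives must hold along the whole segment from $x$ to $x'$ (not merely at $x$) and that this segment must stay in the domain. Both are harmless: the hypothesis $|\partial_{x'_i}h(x')| < \tfrac{H}{M}$ is understood as a pointwise bound valid throughout (a convex neighbourhood of $x$ in) $U$, and the segment-in-domain condition is verified explicitly where the lemma is applied.
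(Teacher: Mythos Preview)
Your proof is correct and follows essentially the same approach as the paper: integrate the gradient along the straight segment from $x$ to $x'$, bound the increment $|h(x')-h(x)|$ by $H\|x'-x\|<\eta/2$, and conclude via the reverse triangle inequality. The only cosmetic difference is that the paper bounds the dot product componentwise (using $|x'_i-x_i|\le\|x'-x\|$ for each of the $M$ non-trivial terms to get $M\cdot\tfrac{H}{M}\|x'-x\|=H\|x'-x\|$), whereas you use Cauchy--Schwarz, which in fact gives the slightly sharper $\tfrac{H}{\sqrt M}\|x'-x\|$; either suffices.
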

\begin{proof}
By the gradient theorem, one has
$$|h(x')-h(x)|=\left|\int_{\gamma}\nabla h(x')\cdot dr\right|$$
for every differentiable curve $\gamma:U\subseteq\mathbb{R}^l\to\mathbb{R}$ which starts at $x$ and ends at $x'$. If we choose the parametrization $r:[0,1]\to U$ defined by $t\mapsto tx'+(1-t)x$ of $\gamma$, then one has $r'(t)=x'-x$
\begin{align*}
|h(x')-h(x)|&=\left|\int_0^1 \nabla h(r(t))\cdot (x'-x)dt\right|=\left|\int_0^1 \sum_{j=1}^l \partial_{x'_j}h(r(t)) (x'_i-x_i)dt\right|\\
&\leq \int_0^1 \sum_{j=1}^l \left| \partial_{x'_j}h(r(t)) (x'_i-x_i) \right| dt\leq \sum_{j=1}^M \frac HM\|x'-x\|<\frac\eta2,
\end{align*} 
for all $x'\in U$ such that $\|x'-x\|<\frac\eta{2H}$. Observe that in the last sum, we considered only the non-trivial partial derivatives. Therefore, we have
\begin{align*}
|h(x')|&
\geq|h(x)|-|h(x')-h(x)|>\eta-\frac\eta2=\frac\eta2.
\end{align*}
which is our conclusion.
\end{proof} 

In the next Lemma, recall the identification $\bbR^2 \cong \bbC$,
with coordinates $y = (s,t) \in \bbR^2$, corresponding to $\zeta = s+\ri t \in \bbC$,
and we use these two representations interchangeably.
\begin{lemma}\label{lm:translateddomain}
Let $\vartheta \in [0,2\pi)$ and $\hat e_y = (\cos \vartheta, \sin \vartheta) \in S^1 \subset \bbR^2$.
For any compact subset $K \subset X_n(\bbC) \cap U_{n,\vartheta} (=\bbR^{2n}_{\neq} \cap U_{n,\vartheta})$, where
\begin{align*}
 U_{n,\vartheta}
 &=\{(\zeta_1,\cdots, \zeta_n) \in \bbC^n : 0<\Re (\re^{-\ri\vartheta}\zeta_1) <\dots < \Re (\re^{-\ri\vartheta}\zeta_r)\} \\
 &=\{(y'_1,\cdots, y'_n) \in \bbR^{2n} : 0<\hat e_y \cdot y'_1 < \cdots < \hat e_y \cdot y'_n\} \\
\end{align*}
there exists $N > 0$ such that
for any $\la \in \bbR$ with $\la>N$,
\begin{align*}
K+\la \hat e^y \subset X_n(\bbC) \cap U_{n,<} &=\{(\zeta_1,\cdots, \zeta_n) \in \bbC^n : 0<|\zeta_1| <\dots <|\zeta_r|\}\\
K -\la\hat e^y \subset X_n(\bbC) \cap U_{n,>} &=\{(\zeta_1,\cdots, \zeta_n) \in \bbC^n : |\zeta_1| >\dots >|\zeta_r|>0\}.
\end{align*}

Furthermore, if
\begin{align*}
K\subset  \{(y'_1,\cdots, y'_n) \in \bbR^{2n} : \textstyle{\frac{\hat e_y \cdot (y'_\ell - y'_{\ell+1})}{\|y'_\ell - y'_{\ell+1}\|}} > \eta, \ell = 1,\cdots, n-1\} \cap
\{y' \in \bbR^{2n} : \|y'\| \le R\}
\end{align*}
for some $0 < \eta < 1 < R$,
then we can take $\lambda = \frac {2R^2}{\eta}$ and it results in
\begin{align*}
K+\la \hat e^y \subset X_n(\bbC) \cap U_{n,<}
&=\{(\zeta_1,\cdots, \zeta_n) \in \bbC^n : \textstyle{\frac{|\zeta_\ell|}{|\zeta_{\ell+1}|} < 1- \frac{\eta^2}{4R^2}}\}\\
K-\la \hat e^y \subset X_n(\bbC) \cap U_{n,>}
&=\{(\zeta_1,\cdots, \zeta_n) \in \bbC^n : \textstyle{\frac{|\zeta_{\ell+1}|}{|\zeta_\ell|} < 1- \frac{\eta^2}{4R^2}}\}.
\end{align*}
\end{lemma}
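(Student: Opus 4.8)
The plan is to establish the two groups of assertions in sequence, since the second (quantitative) part refines the first. First I would handle the existential statement about $N$. The key observation is that on $U_{n,\vartheta}$ the functions $\zeta \mapsto \Re(\re^{-\ri\vartheta}\zeta_\ell)$ are strictly increasing in $\ell$ and the first is positive; after translating by $\lambda\hat e_y$, each coordinate $y'_\ell$ becomes $y'_\ell + \lambda\hat e_y$, so $\hat e_y\cdot(y'_\ell+\lambda\hat e_y) = \hat e_y\cdot y'_\ell + \lambda$. On the compact set $K$, both $\hat e_y\cdot y'_\ell$ and the orthogonal component $\hat e_y^\perp\cdot y'_\ell$ are bounded, say by some constant depending on $K$. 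For $\lambda$ large, $|\zeta_\ell|^2 = (\hat e_y\cdot y'_\ell + \lambda)^2 + (\hat e_y^\perp\cdot y'_\ell)^2$ is dominated by the first term, which is strictly increasing in $\ell$ once $\lambda$ exceeds the diameter-type bound; so $|\zeta_1| < \cdots < |\zeta_n|$, and positivity is automatic. The reflected statement with $-\lambda\hat e_y$ is identical after replacing $\hat e_y$ by $-\hat e_y$ (which reverses the order, hence the $>$ chain). This gives the first claim with $N$ any bound exceeding the relevant quantities over $K$.

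Next I would prove the quantitative version. Here the hypothesis is that $K$ lies in the cone where $\frac{\hat e_y\cdot(y'_\ell - y'_{\ell+1})}{\|y'_\ell - y'_{\ell+1}\|} > \eta$ for each $\ell$, together with $\|y'\| \le R$. From the cone condition, $\hat e_y\cdot(y'_{\ell+1} - y'_\ell) > \eta\|y'_{\ell+1} - y'_\ell\| \ge \eta \cdot |\hat e_y\cdot(y'_{\ell+1}-y'_\ell)|$; more usefully, I want a lower bound on the \emph{gap} $\hat e_y\cdot y'_{\ell+1} - \hat e_y\cdot y'_\ell$ in absolute terms. Since the points also satisfy $\|y'\| \le R$, one has $\|y'_\ell - y'_{\ell+1}\| \le 2R$, but that is an upper bound; for a lower bound I would instead argue that after shifting by $\lambda\hat e_y$, writing $a_\ell := \hat e_y\cdot y'_\ell$ and $b_\ell := \hat e_y^\perp\cdot y'_\ell$, we need $(a_\ell+\lambda)^2 + b_\ell^2 < (a_{\ell+1}+\lambda)^2 + b_{\ell+1}^2$ with the explicit slack $\frac{\eta^2}{4R^2}$, i.e. $\frac{|\zeta_\ell|^2}{|\zeta_{\ell+1}|^2} < \left(1 - \frac{\eta^2}{4R^2}\right)^2$. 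Expanding, the inequality $(a_{\ell+1}+\lambda)^2 - (a_\ell+\lambda)^2 = (a_{\ell+1}-a_\ell)(a_{\ell+1}+a_\ell+2\lambda)$ must beat $b_\ell^2 - b_{\ell+1}^2$ and produce the stated ratio; using $|a_\ell|, |b_\ell| \le R$ and $a_{\ell+1} - a_\ell = \hat e_y\cdot(y'_{\ell+1}-y'_\ell) > \eta\|y'_{\ell+1}-y'_\ell\|$, I would plug in $\lambda = \frac{2R^2}{\eta}$ and verify the arithmetic gives $\frac{|\zeta_\ell|}{|\zeta_{\ell+1}|} < 1 - \frac{\eta^2}{4R^2}$. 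This is a finite elementary computation: bound $|\zeta_{\ell+1}|^2 - |\zeta_\ell|^2$ below by (gap)$\cdot 2\lambda$ minus an $O(R^2)$ correction, divide by $|\zeta_{\ell+1}|^2 \le (R+\lambda)^2$, and check $\lambda = 2R^2/\eta$ makes the leading term dominate with the required margin.

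The main obstacle I anticipate is the bookkeeping in the quantitative estimate: one must extract the gap $a_{\ell+1} - a_\ell$ in terms of $\eta$ alone (not $\eta\|y'_{\ell+1}-y'_\ell\|$, whose norm factor could a priori be small), so I would need to observe that the cone condition combined with $K$ being a subset of $U_{n,\vartheta}$ \emph{does not} force a positive lower bound on $\|y'_{\ell+1} - y'_\ell\|$ — rather, the correct reading is that the ratio statement is scale-robust, and one should track $\frac{|\zeta_\ell|^2}{|\zeta_{\ell+1}|^2}$ directly as a function of $\lambda$ and the (possibly small) gap $g_\ell := a_{\ell+1} - a_\ell > 0$. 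Writing everything in terms of $g_\ell$ and the bounded quantities $a_\ell, b_\ell$, one finds $|\zeta_{\ell+1}|^2 - |\zeta_\ell|^2 \ge 2\lambda g_\ell - 4R^2 \ge 2\lambda g_\ell(1 - \tfrac{2R^2}{\lambda g_\ell})$ — but if $g_\ell$ can be arbitrarily small this fails, which signals that the intended hypothesis must be that $\eta$ bounds the \emph{cosine}, so $g_\ell = \hat e_y\cdot(y'_{\ell+1}-y'_\ell) \ge \eta\|y'_{\ell+1}-y'_\ell\|$ and separately $\|y'_{\ell+1}-y'_\ell\|$ need not be large; so I would instead divide numerator and denominator by $\|y'_{\ell+1}-y'_\ell\|$-scaled quantities, or more cleanly, note that $\frac{|\zeta_\ell|}{|\zeta_{\ell+1}|} \le \frac{|a_\ell + \lambda| + |b_\ell|}{|a_{\ell+1}+\lambda| - |b_{\ell+1}|}$-type bounds combined with $a_{\ell+1} \ge a_\ell$ (from $U_{n,\vartheta}$) already give $\frac{|\zeta_\ell|}{|\zeta_{\ell+1}|} \le 1$, and the strict margin $\frac{\eta^2}{4R^2}$ comes from the single worst case where $a_{\ell+1} - a_\ell$ is comparable to $\eta$ and the perpendicular components conspire; pinning down that worst case and checking $\lambda = 2R^2/\eta$ suffices is the crux, and it is purely a matter of careful inequality-chasing rather than any conceptual difficulty.
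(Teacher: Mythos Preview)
Your approach to the first (existential) part is correct and essentially what the paper does: reduce by rotation symmetry to $\vartheta=0$, then use the identity
\[
|\zeta_{\ell+1}+\lambda|^2-|\zeta_\ell+\lambda|^2 \;=\; |\zeta_{\ell+1}|^2-|\zeta_\ell|^2 + 2\lambda\,\Re(\zeta_{\ell+1}-\zeta_\ell),
\]
and observe that for $\lambda$ large the second term dominates. You phrase this via the orthogonal decomposition $a_\ell=\hat e_y\cdot y'_\ell$, $b_\ell=\hat e_y^\perp\cdot y'_\ell$; the paper's complex-variable form is a little cleaner but the content is the same.

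For the quantitative part, the obstacle you flag is real: from the cone hypothesis $\hat e_y\cdot(y'_{\ell+1}-y'_\ell)>\eta\,\|y'_{\ell+1}-y'_\ell\|$ alone, with no lower bound on $\|y'_{\ell+1}-y'_\ell\|$, the projected gap $g_\ell$ can be arbitrarily small and no fixed $\lambda$ can produce a uniform ratio bound. The paper's proof does \emph{not} resolve this tension; it simply writes ``$\Re(\zeta_{\ell+1}-\zeta_\ell)>\eta$'' and proceeds directly: from $\big|\,|\zeta_\ell|^2-|\zeta_{\ell+1}|^2\,\big|<2R^2$ and that gap bound, $\lambda=2R^2/\eta$ gives $|\zeta_{\ell+1}+\lambda|^2-|\zeta_\ell+\lambda|^2>2R^2$, and dividing by $|\zeta_{\ell+1}+\lambda|^2<4R^4/\eta^2$ gives the claimed ratio. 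In other words, the paper treats $\eta$ as an \emph{absolute} lower bound on the projected gap rather than a cosine bound. In the application this is actually available, because the points in $\supp f$ are $\epsilon$-separated and one combines that with the cone estimate from the preceding lemma; but as a stand-alone statement the ``Furthermore'' hypothesis is too weak for its conclusion. Your instinct that ``the intended hypothesis must be'' different is right: rather than trying to extract the bound from the cone condition alone (which cannot work, as your $g_\ell\to 0$ observation shows), read $\eta$ the way the paper's proof does and the whole thing collapses to the two-line computation above.
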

\begin{proof}
By rotation symmetry, we may assume that $\vartheta=0$, or $\hat e_y = (1,0)$. We have
\begin{align*}
|\zeta_{\ell+1}+\lambda|^2-|\zeta_\ell + \lambda|^2 
&= |\zeta_{\ell+1}|^2 - |\zeta_\ell|^2 + 2\lambda \Re (\zeta_{\ell+1}-\zeta_\ell),
\end{align*}
and by taking a large enough $|\lambda|$ we can show the first claim.

As for the second claim, as $ ||\zeta_\ell|^2 - |\zeta_{\ell+1}|^2| < 2R^2$ and $\Re (\zeta_{\ell+1}-\zeta_\ell) > \eta$,
by taking $\lambda > \frac {2R^2}{\eta}$, we have $|\zeta_{\ell+1}+\lambda|^2-|\zeta_\ell + \lambda|^2 > 2R^2$.
Furthermore, by noting that $|\zeta_{\ell+1} + \lambda|^2 < \frac{4R^4}{\eta^2}$,
we obtain
\begin{align*}
 \frac{|\zeta_\ell + \lambda|}{|\zeta_{\ell+1} + \lambda|} < \sqrt{1 - 2R^2\cdot \frac{\eta^2}{4R^4}} < 1 - \frac{\eta^2}{4R^2}. 
\end{align*}
\end{proof}

\paragraph{(b) Preparing the partition of unity}
For each $y \in \supp f$, there is $\hat e_y$ as in Lemma \ref{lm:circle-cover}.
Actually, the vector $\hat e_y$ can be used for $y'$ with $\|y'-y\| \le \frac{\pi\epsilon^2}{8n^2}$
in the sense of Lemma \ref{lm:translateddomain}.
This ball includes a $2n$-dimensional hypercube with side $\frac{\pi\epsilon^2}{4\sqrt 2 n^\frac52}$.

Therefore, by taking gluing margins of $\frac{\pi\epsilon^2}{16\sqrt 2 n^\frac52}$ on each side,
we can find an open cover of $\supp f$ with less than $\left(\frac{32\sqrt 2 n^\frac52 R}{\pi\epsilon^2}\right)^{2n}$
hypercubes\footnote{This is not optimal. Indeed, it should be possible to gather
these open sets to the corresponding $\hat e_y$. But we do not need it for the sake of linear growth.},
which by the Stirling formula can be estimated in the form of $\bigstar \frac{R^{2n}}{\epsilon^{4n}}$.

We take a smooth partition of unity subordinate to this open cover.
We can take a concrete partition of unity as follows. First, we take the product
of $2n$ bump functions in different variables, and we may assume that
the $m$-th derivative of each one-dimensional bump function do not grow too fast.
One can obtain a partition of unity with $2n$ real variables as a product of $2n$ single variable bump functions. 

\paragraph{(b-1)  Construction and estimates of a 1-dimensional partition of unity}
For the one-variable case, we start by considering the bump function 
$$\phi(s)=\begin{cases}
\re^{\frac1{s^2-1}} & s\in (-1,1)\\
0 & \text{elsewhere}
\end{cases}.
$$
Consider now the function $\psi(s) = \int_{-\infty}^s\phi(s_1)ds_1 /\int_{-\infty}^\infty \phi(s_1)ds_1$ and its translation $\psi(s-3)$.
It is straightforward to see that the function $\tilde \psi(s) = \psi(s)-\psi(s-3)$ is supported in $[-1,4]$ and $\sum_{\ell \in \bbZ} \tilde\psi(s - 3 \ell) = 1$ for all $s \in \bbR$.
As this family is clearly locally finite, it is a partition of unity on $\bbR$ up to an overall constant.

Since $\psi(s-3)$ is a translation of $\psi(s)$, one needs to estimate only $\psi(s)$ and can estimate $\psi(s-3)$ with the same argument. The first derivative is given by $\psi^{(1)}(s)=\phi(s)$ ($=\re^{\frac1{s^2-1}}$ on $(-1,1)$), for which one has $|\psi^{(1)}(s)|\leq 1$ for every $s\in\mathbb{R}$. Moreover, $\psi^{(1)}$ is an even function, and thus $\psi^{(2)}$ is an odd function. Therefore, $\psi^{(2k+1)}$ and $\psi^{(2k)}$ are respectively even and odd functions for every $k=0,1,\ldots$. Hence, without loss of generality we consider only $s\in (-1, \infty)$. 

To estimate the $m$-th derivative for $m=2,3,\ldots$, one rewrites $\psi^{(1)}$ as 
$$\psi^{(1)}(s)=\re^{\frac1{s^2-1}}=\re^{-\frac12 \frac 1{s+1}}\re^{\frac12 \frac 1{s-1}}.$$
Let us define $g_{\pm}(s):=\re^{\mp \frac12 \frac 1{s\pm1}}$, then $\psi^{(1)}(s)=g_+(s)g_-(s)$. We observe that $g_+(-s)=\re^{- \frac12 \frac 1{-s+1}}=\re^{\frac12 \frac 1{s-1}}=g_-(s)$ for all $s\in\mathbb{R}$. If one defines $h_k(s):= (h(s))^k =\frac1{(s+1)^k}$ for every $k=1,2,3,\ldots$, then one has $g_+(s)=\re^{-\frac12h_1(s)}$.
By induction, it holds that $h^{(j)}=(-1)^j j!\,h_{j+1}$ and $h_j'=-j\,h_{j+1}$, for all $j=1,2,3,\ldots$. 

One can compute the $l$-th derivative of $g_+$ by the chain rule and derivatives of $h_1(s)$, but we do not need explicit expressions.
\begin{lemma}\label{lm:der_g}
We have the following.
\begin{itemize}
 \item[(1)] The $l$-th derivative of $g_+$ is a linear combination of $2^{l-1}$ terms of the form $h_jg_+, j \le l$
 (if we do not gather the proportional terms).
 \item[(2)] The highest power of $\frac1{s+1}$ that appears in the $l$-th derivative is $2l$.
 \item[(3)] In the sense of 1), all the absolute values of the coefficient of $h_jg_+, j \le l$ are less than or equal to
 $(2l)!! = (2l)(2l-2)\cdots 2$.
\end{itemize}
\end{lemma}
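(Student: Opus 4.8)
The plan is to prove all three items simultaneously by induction on $l$, because they are not really independent: the power count of (2) and the coefficient bound of (3) are precisely the bookkeeping data one must carry along in order to control the inductive step that produces the $2^{l-1}$ terms of (1). (One could instead invoke Fa\`a di Bruno's formula and count Bell-polynomial terms, but the $h_j$-recursion already set up in the text makes a direct induction the cleanest route.)

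First I would record the one computational fact everything rests on. From $g_+ = \re^{-\frac12 h_1}$ together with the identities $h_j' = -j\,h_{j+1}$ and $h_j h_k = h_{j+k}$ established just above, one gets $g_+' = -\tfrac12 h_1' g_+ = \tfrac12 h_2 g_+$, and hence, differentiating any monomial of the allowed shape,
\[
  \frac{d}{ds}(h_j g_+) = h_j' g_+ + h_j g_+' = -\,j\,h_{j+1} g_+ + \tfrac12\, h_{j+2} g_+ .
\]
So each term $c\,h_j g_+$ splits into exactly two terms of the same shape, with indices raised by $1$ and $2$ and coefficients multiplied by $-j$ and $\tfrac12$ respectively. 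Everything else is just iterating this.

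Then I would run the induction. Base case $l=1$: $g_+' = \tfrac12 h_2 g_+$ is a single term ($=2^0$), its top power of $\tfrac1{s+1}$ is $2=2l$, and its coefficient $\tfrac12$ is $\le 2 = (2\cdot1)!!$. Inductive step: assume $g_+^{(l)}$ is, without gathering proportional terms, a sum of $2^{l-1}$ terms $c\,h_j g_+$ with $j \le 2l$, with top power exactly $2l$, and with every $|c| \le (2l)!!$. Applying the splitting above to each of the $2^{l-1}$ terms yields $2\cdot 2^{l-1} = 2^{l}$ terms, which is (1); the new indices are at most $(2l)+2 = 2(l+1)$, and the old top term $h_{2l}g_+$ contributes $\tfrac12 c\,h_{2l+2}g_+$ with $c\neq 0$, so the new top power is exactly $2(l+1)$, which is (2); the new coefficients are $-j\,c$ or $\tfrac12 c$ with $j\le 2l$, hence bounded by $\max(2l,\tfrac12)\cdot(2l)!! \le (2l+2)(2l)!! = (2(l+1))!!$, which is (3). (Here I read the index bound in (1) as $j\le 2l$, consistent with (2); the displayed ``$j\le l$'' should read ``$j\le 2l$'', as already $g_+'=\tfrac12 h_2 g_+$ shows.)

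I do not anticipate any real obstacle: the content is entirely the single splitting identity plus clean induction. The only points demanding care are (i) keeping the phrase ``do not gather proportional terms'' operative so the term count stays exactly $2^{l-1}$ rather than dropping below it, and (ii) checking that the crude estimate $2l\cdot(2l)!! \le (2l+2)!!$ actually closes the coefficient recursion, which it does since $(2l+2)!! = (2l+2)(2l)!!$ and $2l < 2l+2$. The minor notational slip ``$j\le l$'' versus ``$j\le 2l$'' is worth flagging but is harmless for the later use of the lemma, where only the number of terms, the top power, and the size of the coefficients enter.
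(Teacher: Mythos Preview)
Your proof is correct and follows essentially the same approach as the paper: both rest on the splitting identity $(h_j g_+)' = -j\,h_{j+1}g_+ + \tfrac12 h_{j+2}g_+$ and run an induction on $l$, with the paper treating (1)--(3) in three separate inductions while you combine them into one. You are also right to flag the ``$j\le l$'' as a slip for ``$j\le 2l$''; the paper's own proof repeats the same typo, and as you note it is immaterial for the subsequent estimates.
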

\begin{proof}
(1) By induction for $l$ we need to show, for some coefficients $\{\alpha_k\}$,
\begin{align}\label{2m_pieces_der}
g_+^{(l)}(s)=\sum_{k=1}^{2^{l-1}}\alpha_k h_{j_k}(s)g_+(s),
\end{align}
where $j_k \le l$ (they are not necessarily distinct).
The case $l=1$ is clear. By the Leibniz rule ($(fg)' = f'g + fg'$),
the derivative of each term produces two terms, and as
$h_{j_k}'(s) = -j_k\,h_{j_k+1}(s)$ and $g_+'(s) = \frac12 h_2(s)g_+(s)$
and $h_{j_k}(s)h_2(s) = h_{j_k+2}(s)$, it remains in this form with $2^l$ terms
(if we do not gather proportional terms).

(2) We show this by induction. For $l=1$
$$g_+^{(1)}(s)=-\frac12h^{(1)}(s)\re^{-\frac12 h(s)}=\frac12\cdot h_2(s) g_+(s),$$   
which has 1 term of the form $h_2g_+$, the highest power is $h_{2}$ with a coefficient $1!=1$.

Suppose it is true for $l$, and we prove it for $l+1$. Then the highest power of $h$ that appears in the $l$-th derivative is $h_{2l}g_+$. Then, taking the first derivative, one gets the highest power of $h$ that appears in the $(l+1)$-derivative, since the powers of $h$ increase computing the derivatives. One has $(h_{2j}g_+)^{(1)}=h_{2j}^{(1)}g_++h_{2j}g_+^{(1)}=-2j\,h_{2j+1}g_++h_{2j+2}g_+$, and thus the highest power of $h$ that appears in the $(l+1)$-derivative is $2(l+1)$. 

(3) Again in the expansion \eqref{2m_pieces_der} (the $l$-th derivative of $g_+$),
we know that there are $2^{l-1}$ terms and the highest power of $\frac1{s+1}$ is $2l$.
Upon a further derivative, each term gives two terms by the Leibniz rule, and we obtain $2^l$ terms
(we do not collect the same functions, just count them separately),
and the coefficients get multiplied either by $-h_k$ or $1$, whose absolute values are less than or equal to $2l$.
Thus, after $l$ derivatives, the coefficients are bounded by $(2l)!!$.
\end{proof}

The estimate of the $l$-th derivative of $g_+(s)$ boils down to estimating terms of the form $h_j(s)g_+(s)=\frac1{(s+1)^j}\re^{-\frac12\frac1{s+1}}$. Thus, for a fixed $j$, we have
\begin{align*}
\sup_{s\in(-1,1)} h_j(s)g_+(s)
&\le \sup_{s\in(-1,\infty)} h_j(s)g_+(s)=\sup_{s\in(-1,\infty)}\frac1{(s+1)^j}\re^{-\frac12\frac1{s+1}}
=\sup_{s\in(0,\infty)}\frac1{s^j}\re^{-\frac12\frac1s} \\
&=\sup_{\frac sj\in(0,\infty)}\frac1{s^j}\re^{-\frac12\frac1{s}}=\sup_{s\in(0,\infty)}\frac{j^j}{s^j}\re^{-\frac12\frac js} \\
&=j^j\sup_{s\in(0,\infty)}\frac1{s^j}\re^{-\frac12\frac j{s}}=j^j\left(\sup_{s\in(0,\infty)}\frac1{s}\re^{-\frac12\frac 1{s}}\right)^j\leq j^j.
\end{align*}

Let us estimate the $l$-th derivative of $g_+(s)$. Using Lemma \ref{lm:der_g}, 1), in particular \eqref{2m_pieces_der}, one has
\begin{align}\label{eq:estimateg+}
 |g^{(l)}_+(s)|=\left|\sum_{i=1}^{2^{l-1}}\alpha_ih_{j_i}(s)g_+(s)\right|\leq\sum_{i=1}^{2^{l-1}}|\alpha_i|\left|h_{j_i}(s)g_+(s)\right|
 \leq \sum_{i=1}^{2^{l-1}} (2l)!!\,j_i^{j_i}\leq 2^{l-1} (2l)!!\, (2l)^{2l},
\end{align}
where in the last equality we used Lemma \ref{lm:der_g}, 2) and 3). 

We need a similar estimate for $g_-$ on $(-\infty, 1)$.
Since one has $g_-(s)=g_+(-s)$, for $k=1,2,\ldots$, by induction it is straightforward to show that $g^{(2k-1)}_-(s)=-g^{(2k-1)}_+(-s)$ and $g^{(2k)}_-(s)=g^{(2k)}_+(-s)$. Thus, by \eqref{eq:estimateg+},
\begin{align*}
 |g^{(l)}_-(s)| \leq 2^{l-1} (2l)!!\, (2l)^{2l}.
\end{align*}

We are now ready to estimate $\tilde\psi^{(m)}$ for $m=1,2\ldots$.
The $m$-th derivative produces $2^m$ terms by the Leibniz rule, thus
combining all the above computations together, we have
\begin{align}\label{eq:boundPOU1dim}
\sup_{s \in \bbR}|\tilde\psi^{(m)}(s)|
&= \sup_{s \in \bbR} |\psi^{(m)}(s)-\psi^{(m)}(s-3)|\leq \sup_{s \in \bbR}|\psi^{(m)}(s)|+\sup_{s \in \bbR}|\psi^{(m)}(s-3)| \nonumber \\
&\leq 2\sup_{s \in (-1,1)}|\psi^{(m)}(s)| \nonumber \\
&\leq2\sum_{j=0}^{m-1}\binom{m-1}j |g_+^{(j)}(s)|\,|g_-^{(m-1-j)}(s)| \nonumber \\
&\leq 2^m \cdot 2^m (2m)!!\, (2m)^{2m}.
\end{align}

\paragraph{(b-2)  Multidimensional partition of unity}
A $2n$-dimensional partition of unity is given by the product of $2n$ one-dimensional partitions of unity given by the family $C_\psi\tilde\psi(s_j)$, for $j=1,2\ldots,2n$, with a normalization constant $C_\psi$ and $m_j \in \bbZ$, by
$$\Phi_{m_1, \cdots, m_{2n}}(s_1,\ldots,s_{2n}) = C_\psi^q\tilde\psi(4s_1-m_1)\tilde\psi(4s_2-m_2)\ldots\tilde\psi(4s_{2n}-m_{2n})$$
gives a $2n$-dimensional partition of unity. Let $\alpha=(\alpha_1,\alpha_2,\ldots,\alpha_{2n})$ be a multi-index
with $|\alpha| = \sum_j \alpha_j$. Then one has
\begin{align*}
\sup_{s_j \in \bbR}|\partial^{\alpha}\Phi_{m_1, \cdots, m_{2n}}(s_1,\ldots,s_{2n})|
&= C_\psi^{2n}\sup_{s_j \in \bbR}|(\partial_{s_1}^{\alpha_1}\ldots\partial_{s_{2n}}^{\alpha_{2n}})
\tilde\psi(4s_1-m_1)\ldots\tilde\psi(4s_{2n}-m_{2n})| \\
&= C_\psi^{2n} 4^{|\alpha|}\sup_{s_j \in \bbR}|(\partial_{s_1}^{\alpha_1}\ldots\partial_{s_{2n}}^{\alpha_{2n}})
\tilde\psi(4s_1-m_1)\ldots\tilde\psi(4s_{2n}-m_{2n})| \\
&\leq C_\psi^{2n} 4^{|\alpha|}\prod_{j=1}^{2n} 4^{\alpha_j} (2{\alpha_j})!!\, (2{\alpha_j})^{2{\alpha_j}} \\
&\le C_\psi^{2n} 8^{|\alpha|} (2|\alpha|)!\cdot (2|\alpha|)^{2|\alpha|}.
\end{align*}

By scaling such partition of unity by $\frac{32\sqrt 2 n^\frac52}{\pi\epsilon^2}$,
we conclude that each piece has the $\alpha$-derivative bounded by
\begin{align*}
 C_\psi^{2n} 8^{|\alpha|} (2|\alpha|)!\cdot (2|\alpha|)^{2|\alpha|} \cdot \left(\frac{32\sqrt 2 n^\frac52}{\pi\epsilon^2}\right)^{|\alpha|}. 
\end{align*}
Note that, at a linear order, say $\alpha$-derivative with $|\alpha| = qn$, it is
\begin{align}\label{eq:boundPOU}
 \bigstar \frac{1}{\epsilon^{2qn}}. 
\end{align}

\paragraph{(c) Estimate of the integral}
Now we estimate $S^{\pmb{a}}_n(f)$.
First we multiply $f$ with the smooth partition $\{g_\sigma\}$ of unity from (b).
We know the number of elements in the partition of unity, $\left(\frac{32\sqrt 2 n^\frac52 R}{\pi\epsilon^2}\right)^{2n} = \bigstar \frac{R^{2n}}{\epsilon^{4n}}$,
and we have an estimate of the Schwartz norm of each piece, $|f_\sigma|_m \le 2^m \cdot |f|_m |g_\sigma|_m$ by the Leibniz rule.
As each such piece can be translated by a multiple of $\hat e_y$, by Lemma \ref{lm:translateddomain}
and by the translation and permutation invariance of $S^{\pmb{a}}_n$, we may assume that
each $f_\sigma$ is supported in $\{(\zeta_1, \cdots, \zeta_n) : \frac{|\zeta_{j+1}|}{|\zeta_j|} < 1 - \eta', j = 1, \cdots n-1\}$,
where $\eta' = \left(\frac\pi{16Rn^2}\right)^2$. Here we estimate $S^{\pmb{a}_n}(f_\sigma)$.

For simplicity, let us write $f$ instead of $f_\sigma$. We expand $f$ into a multiple Fourier series for fixed $\rho_j$
in the polar coordinates $\zeta_j = \rho_j \re^{\ri\vartheta_j}$:
\begin{align}\label{eq:Fourierdef}
 f(\zeta_1, \cdots, \zeta_n)
 &= \sum_{k_1, \cdots, k_n \in \bbZ} \re^{\ri(k_1 \vartheta_1 + \cdots + k_n \vartheta_n)} f_{k_1,\cdots, k_n}(\rho_1, \cdots, \rho_n), \nonumber \\
 f_{k_1, \cdots, k_n}(\rho_1, \cdots, \rho_n)
 &= \frac1{(2\pi)^n}\int f(\rho_1, \vartheta_1, \cdots, \rho_n, \vartheta_n) \re^{-\ri k_1\vartheta_1 \cdots -\ri k_n \vartheta_n}d\vartheta_1\dots d\vartheta_n.
\end{align}

On the other hand, \eqref{eq:standardregion} can be written as
\begin{align*}
 S_n^{\pmb{a}}(\zeta_1, \cdots, \zeta_n)
 &=
\sum_{\substack{r_2, \cdots, r_n \in \bbR \\ s_2, \cdots, s_n \in \bbR}}
 \<\vac, a_1(-r_2\cdots -r_n, -s_2\cdots -s_n)a_2(r_2,s_2)\cdots a_2(r_n,s_n)\vac\> \\
 &\qquad\qquad \times \left(\frac{\rho_2}{\rho_1}\right)^{-r_2-s_2-\cdots -r_n-s_n}\cdots \left(\frac{\rho_n}{\rho_{n-1}}\right)^{-r_n-s_n} \\
 &\qquad\qquad \times \re^{\ri(r_2+\cdots+r_n-s_2-\cdots-s_n) \vartheta_1 + (-r_2+s_2)\vartheta_2 + \cdots + (-r_n+s_n) \vartheta_n)}
 \frac1{\rho_1^2\cdots \rho_n^2}
\end{align*}

Now we calculate \eqref{eq:defSn} using \eqref{eq:standardregion}:
\begin{align*}
 &\int_{\frac{\rho_{j+1}}{\rho_j} < 1 - \eta'} S_n^{\pmb{a}}(\zeta_1, \cdots, \zeta_n) f(\zeta_1, \cdots, \zeta_n)\rho_1 \cdots \rho_n d\rho_1 d\vartheta_1 \cdots d\rho_n  d\vartheta_n \\
 &= \int_{\frac{\rho_{j+1}}{\rho_j} < 1 - \eta'} \sum_{r_2, \cdots, r_n, s_2, \cdots, s_n \in \bbR}
 \<\vac, a_1(-r_2\cdots -r_n, -s_2\cdots -s_n)a_2(r_2, s_2)\cdots a_2(r_n, s_n)\vac\> \\
 &\quad\quad\times\left(\frac{\rho_2}{\rho_1}\right)^{-r_2\cdots -r_n -s_2\cdots -s_n}\cdots \left(\frac{\rho_n}{\rho_{n-1}}\right)^{-r_n-s_n}
 \frac1{\rho_1^2\cdots\rho_n^2}
 f_{- r_2 \cdots - r_n + s_2 +\cdots + s_n, r_2 - s_2, \cdots, r_n - s_n}(\rho_1, \cdots, \rho_n) \\
 &\quad \quad \times \rho_1 \cdots \rho_n d\rho_1 \cdots d\rho_n.
\end{align*}
Here, the sum $\sum_{r_2, \cdots, r_n, s_2, \cdots, s_n \in \bbR}$
can be equally written as
$\sum_{r_2 + s_2, \cdots, r_n + s_n \in \bbR}\sum_{r_2 - s_2, \cdots, r_n - s_n \in \bbZ}$,
and the sum over $r_j - s_j$ are restricted to $\bbZ$ because 
all the vectors $a_1(r_1, s_1)\cdots a_2(r_n, s_n)\vac$ vanish unless $\sum_j s_j - r_j \in \bbZ$ by FO1).

First we calculate the sum $\sum_{r_2 - s_2, \cdots, r_n - s_n \in \bbZ}$, noting that the indeces of $f$ are of this form.
On one hand, as $f$ is a test function, for each $m \in \bbZ$, the Schwartz norm
\begin{align}\label{eq:schwartz}
 |f|_m := \sup_{\substack{\zeta_j \in \bbR^2 \\ |\alpha| \le m}} |(1+|\pmb{\zeta}|^2)^\frac m2 \partial^\alpha f(\zeta_1, \cdots, \zeta_n)|
\end{align}
is finite, where $|\pmb{\zeta}|^2 = \sum_{j=1}^n |\zeta_j|^2$ and $\partial^\alpha f$ denotes the derivative of $f$
with a multi-index $\alpha$ and $|\alpha|$ is the order of the derivative.

By using \eqref{eq:Fourierdef} and the formula that relates the Fourier transform of $f$ and its derivative, one has
\begin{align}\label{eq:der_FT}
 \ri^{|\beta|}k_1^{\beta_1} \cdots k_n^{\beta_n} f_{k_1, \cdots, k_n}(\rho_1, \cdots, \rho_n)
 = \frac1{(2\pi)^n}\int D_{\pmb{\vartheta}}^\beta f(\rho_1, \vartheta_1, \cdots, \rho_n, \vartheta_n) \re^{-\ri k_1\vartheta_1 \cdots -\ri k_n \vartheta_n}d\vartheta_1\dots d\vartheta_n.
 \end{align}
Therefore, it holds that
\begin{align*}
 (|k_1| + 1)^{\beta_1} \cdots (|k_n|+1)^{\beta_n}| f_{k_1, \cdots, k_n}(\rho_1, \cdots, \rho_n)| 
 &=\sum_{j=1}^n\sum_{\ell=0}^{\beta_j}\binom{\beta_j}{\ell}|k_j|^\ell|f_{k_1, \cdots, k_n}(\rho_1, \cdots, \rho_n)| \\
 &\le \frac{2^{|\beta|}}{(2\pi)^n}
 \int |D_{\pmb{\vartheta}}^\beta f(\rho_1, \vartheta_1, \cdots, \rho_n, \vartheta_n)|d\vartheta_1\dots d\vartheta_n \\
 &\le 2^{|\beta|}2^n|f|_{2n + |\beta|},
\end{align*}
where we have estimated the integral using \eqref{eq:schwartz} and the fact that
the $\vartheta_j$-derivative is a linear combination of partial derivatives in $\tau_j, \xi_j$ with
coefficients $\le 1$. Thus,
\begin{align}\label{eq:der_FTplus}
 (|k_1| + 1)^{\beta_1+2} \cdots (|k_n|+1|)^{\beta_n+2}| f_{k_1, \cdots, k_n}(\rho_1, \cdots, \rho_n)|
 &\le 2^{|\beta|+2n}2^n|f|_{4n + |\beta|}.
\end{align}

By Lemma \ref{lm:PEBestimate},
\begin{align*}
 &|\<\vac, a_1(r_1, s_n)a_2(r_2, s_2)\cdots a_2(r_n, s_n)\vac\>| \\
  &\le \bigstar (|r_1 + s_1|+1 +\cdots |r_n + s_n|+1)^{nQ_\Upsilon} (|r_1 - s_1|+1 +\cdots |r_n - s_n|+1)^{nQ_\Upsilon}
\end{align*}
and by expanding the right-hand side,
we have $(n!)^{2(Q_\Upsilon+1)} = \bigstar$ terms of the following form, with $\sum_j \beta^+_j = \sum_j \beta^-_j = nQ_\Upsilon$,
\begin{align*}
 \prod_{j=1}^n (|r_j + s_j|+1)^{\beta^+_j} (|r_j - s_j|+1)^{\beta^-_j}
\end{align*}
By \eqref{eq:der_FTplus},
\begin{align*}
 &\sum_{r_1-s_1, r_n-s_n \in \bbZ} \prod_{j=1}^n (|r_j-s_j|+1)^{\beta^-_j} |f_{r_1-s_1, \cdots, r_n-s_n}(\rho_1, \cdots, \rho_n)| \\
 &\le \sum_{r_1-s_1, r_n-s_n \in \bbZ} \prod_{j=1}^n (|r_j-s_j|+1)^{-2} \cdot 2^{nQ_\Upsilon+2n}2^n |f|_{(4+Q_\Upsilon)n} \\
 &= \bigstar |f|_{(4+Q_\Upsilon)n}.
\end{align*}
Therefore, we have
\begin{align}\label{eq:estimate-}
 &\left|\int_{\frac{\rho_{j+1}}{\rho_j} < 1 - \eta'} S_n^{\pmb{a}}(\zeta_1, \cdots, \zeta_n) f(\zeta_1, \cdots, \zeta_n)\rho_1 \cdots \rho_n d\rho_1 d\vartheta_1 \cdots d\rho_n  d\vartheta_n\right| \nonumber \\
 &\le \bigstar |f|_{(4+Q_\Upsilon)n}\int_{\frac{\rho_{j+1}}{\rho_j} < 1 - \eta'}
 \underset{r_2+s_2, \cdots, r_n+s_n \in \bbR}{\overset{\circ}\sum} \left(\frac{\rho_2}{\rho_1}\right)^{-r_2\cdots -r_n -s_2\cdots -s_n}\cdots \left(\frac{\rho_n}{\rho_{n-1}}\right)^{-r_n-s_n} \nonumber \\
  &\qquad\qquad\qquad\qquad\qquad\qquad\qquad\qquad \times \frac{\prod_{j=1}^n (|r_j + s_j|+1)^{\beta^+_j}}{\rho_1\cdots\rho_n}
   d\rho_1 \cdots d\rho_n,
\end{align}
where we wrote $\overset{\circ}\sum$ the sum
where $\<\vac, a_1(-r_2\cdots -r_n, -s_2\cdots -s_n)a_2(r_2, s_2)\cdots a_2(r_n, s_n)\vac\>$ do not vanish.

Next, note that
\begin{align*}
 |r_j + s_j| + 1 &= |(r_j + s_j + \cdots + r_n + s_n) - (r_{j+1} + s_{j+1} + \cdots + r_n + s_n)| + 1 \\
 &\le |r_j + s_j + \cdots + r_n + s_n| + 1 + |r_{j+1} + s_{j+1} + \cdots + r_n + s_n| + 1,
\end{align*}
and $r_1 + s_1 = -r_2-\cdots-r_n - s_2 - \cdots -s_n$ at a non-zero term in the sum $\overset{\circ}\sum$,
thus by writing $t_j := r_j + s_j + \cdots + r_n + s_n$,
\begin{align*}
 \prod_{j=1}^n (|r_j + s_j| + 1)^{\beta^+_j} 
 &\le (|t_2|+1)^{\beta^+_1}(|t_n|+1)^{\beta^+_n}\prod_{j=2}^{n-1} (|t_j| + 1 + |t_{j+1}| + 1)^{\beta^+_j}
\end{align*}
and by expanding it, there are less than $2^{nQ_\Upsilon}$ terms each of which is bounded by
$\prod_{j=2}^{n} (|t_j| + 1)^{2Q_\Upsilon}$.

To estimate the remaining $\rho$-integral of \eqref{eq:estimate-},
the number of nontrivial terms in $\overset{\circ}\sum_{t_2,\cdots, t_n \ge 0}$ when
$N_j \le t_j < N_{j+1}$ is less than the number of $\{(r_j, s_j)\}$
for which this scalar product does not vanish,
$\<\vac, a_1(-r_2\cdots -r_n, -s_2\cdots -s_n)a_2(r_2, s_2)\cdots a_2(r_n, s_n)\vac\> \neq 0$,
which is by (PSD) \eqref{eq:density} bounded by
$C^n\prod_{j=1}^n (N_j+2)^L$.
We change the variable $\rho'_{j+1} = \rho_{j+1}/\rho_j$ for $j=1,\cdots,n-1$
and calculate:
\begin{align*}
 &\int_{\frac{\rho_{j+1}}{\rho_j} < 1- \eta'} \overset{\circ}{\underset{r_2 + s_2, \cdots, r_n + s_n \in \bbR}{\sum}}
 \prod_{j=1}^n (|r_j+s_j| + 1)^{\beta^+_j}
 \left(\frac{\rho_2}{\rho_1}\right)^{-r_2\cdots -r_n -s_2\cdots -s_n}\cdots \left(\frac{\rho_n}{\rho_{n-1}}\right)^{-r_n-s_n}
 \frac{d\rho_1 \cdots d\rho_n}{\rho_1\cdots \rho_n} \\
 &\le 2^{nQ_\Upsilon} \int_{\rho'_{j+1} < 1- \eta'} \overset{\circ}\sum_{t_2, \cdots, t_n \ge 0}
 \prod_{j=1}^n (|t_j| + 1)^{2Q_\Upsilon}
 (\rho'_2)^{t_2}\cdots (\rho'_n)^{t_n}
 d\rho'_2\cdots d\rho'_n \\
 &\le \bigstar \int_{\rho'_{j+1} < 1- \eta'} \sum_{\substack{t_j \in \bbZ\\t_2, \cdots, t_n \ge 0}}
 \prod_{j=1}^n (|t_j| + 1)^{2Q_\Upsilon+L}
 (\rho'_2)^{t_2}\cdots (\rho'_n)^{t_n}
 d\rho'_2\cdots d\rho'_n \\
 &= \bigstar \sum_{\substack{t_j \in \bbZ\\t_2, \cdots, t_n \ge 0}}
 \prod_{j=1}^n (t_j + 1)^{2Q_\Upsilon+L}
 \frac{(1- \eta')^{t_j+1}}{t_j + 1} \\
 &\le \bigstar \left(\int_0^\infty t^{2Q_\Upsilon+L-1} (1- \eta')^t dt\right)^n \\
 &\le \bigstar \Gamma(2Q_\Upsilon+L)^n\left(\frac1{-\log(1-\eta')}\right)^{(2Q_\Upsilon+L)n} \\
 &\le \bigstar'\left(\frac1{\eta'}\right)^{(2Q_\Upsilon+L)n}
 = \bigstar \left(\frac{16Rn^2}\pi\right)^{(2Q_\Upsilon+L)n}
 = \bigstar'R^{(2Q_\Upsilon+L)n},
\end{align*}
where
$\Gamma$ is the Gamma-function and
after the change of variables there is no dependence on $\rho_1$, so
we integrated it out and obtained the radius of the domain which is small (because $f = f_\sigma$ is cut by a partition of unity)
so replaced by $1$,
while we ignored $\rho_n$ in the denominator as we may assume that it is larger than $1$.

This is an estimate of single $f_\sigma$. We cut the original $f$ into
$\left(\frac{16\sqrt 2 n^\frac52 R}{\pi\epsilon^2}\right)^{2n} = \bigstar\frac{R^{2n}}{\epsilon^{4n}}$ pieces.
Recall that $|f_\sigma|_m \le 2^m |f|_m |g_\sigma|_m$ by the Leibniz rule.
By collecting them all with \eqref{eq:boundPOU}, we have
\begin{align}\label{eq:beforeabsorption}
 |S^{\pmb{a}}_n(f)| \le \sum_\sigma |S^{\pmb{a}}_n(f_\sigma)| \le \bigstar' |f|_{(4+Q_\Upsilon)n}\frac{R^{(2Q_\Upsilon+L+2)n}}{\epsilon^{(4+2(4+Q_\Upsilon))n}}.
\end{align}

\paragraph{(d) Extending to functions vanishing at the coinciding points}
The estimate \eqref{eq:beforeabsorption} is obtained for a test function $f$ with $\supp f$ contained
in the disk with radius $R$ and $\min_{j,k} |z_j - z_k| > \epsilon$.
By taking a larger Schwartz norm, we can eliminate the dependence on $R, \epsilon$.

\paragraph{(d-1) Absorbing $R$}
We show that the estimate $|S^{\pmb{a}}_n(f)| \le \bigstar |f|_{(4+Q_\Upsilon)n} \frac{R^{N_1 n}}{\epsilon^{N_2 n}}$
implies $|S^{\pmb{a}}_n(f)| \le \bigstar' |f|_{(4+Q_\Upsilon+N_1)n+3}\frac1{\epsilon^{N_2 n}}$.
Concretely, we will take $N_1 = 2Q_\Upsilon+L+2$
and $N_2 = 4+2(4+Q_\Upsilon)$.

To remove the $R$ dependence with the cost of larger Schwartz norm,
note that any test function $f$ supported in $\bbR^{2n}_{\neq}$ can be written as
$f = f_0 + f_1 + \cdots + f_{\tilde R}, \tilde R \in \bbN$, where $R < \tilde R$ and 
\begin{align*}
 f_j(\pmb{z}) = f(\pmb{z}) \cdot h_j(|\pmb{z}|),
\end{align*}
where $\{h_j\}$ is a smooth partition of unity on $\bbR$ (a scaled version of the one in part (b-1)) such that $\supp h_j \subset [j-\frac13, j]$
and $h_j = 1$ on $[j+\frac13,j+\frac23]$.
By the estimate there, we know that there are constants $C_{h,1}, C_{h,2}$ such that
$\sup_{s \in \bbR} |\partial^m h_j(s)| \le C_{h,1}(m!)^{C_{h,2}}$.

For arbitrary $N$, we have
\begin{align*}
 |f_j|_m
 &= \sup_{\substack{\pmb{\zeta} \in \bbR^{2n} \\ |\alpha| \le m}} |(1 + |\pmb{\zeta}|^2)^\frac m2 |\partial^\alpha f_j(\pmb{\zeta})| \\
 &= \sup_{\substack{\pmb{\zeta} \in \bbR^{2n} \\ |\alpha| \le m}} \left((1 + |\pmb{\zeta}|^2)^\frac m2 |\partial^\alpha f_j(\pmb{\zeta})|
 \cdot \frac{(1+|\pmb{\zeta}|^2)^\frac {N+3}2}{(1+|\pmb{\zeta}|^2)^\frac {N+3}2}\right)\\
 &\le \sup_{\substack{\pmb{\zeta} \in \bbR^{2n} \\ |\alpha| \le m}} (1 + |\pmb{\zeta}|^2)^\frac {m+N+3}2 |\partial^\alpha f_j(\pmb{\zeta})|
 \cdot \frac{1}{(1+(j-\frac13)^2)^\frac {N+3}2}\\
 &\le \frac{|f_j|_{m+N+3}}{(1+(j-\frac13)^2)^\frac {N+3}2},
\end{align*}
where the case $j=0$ requires a slightly different estimate (the third equality changes)
but it holds that $|f_0|_m \le |f_0|_{m+N+3}$.

Furthermore, for arbitrary $M$ it holds that
\begin{align*}
  |f_j|_M
  &= \sup_{\substack{\pmb{\zeta} \in \bbR^{2n} \\ |\alpha| \le M}} (1 + |\pmb{\zeta}|^2)^\frac M2 |\partial^\alpha (f(\pmb{\zeta})h_j(\pmb{\zeta}))| \\
  &\le 2^M\sup_{\substack{\pmb{\zeta} \in \bbR^{2n} \\ |\alpha| \le M}} (1 + |\pmb{\zeta}|^2)^\frac M2 |\partial^\alpha f(\pmb{\zeta})|
 \cdot \sup_{\substack{\pmb{\zeta} \in \bbR^{2n} \\ |\beta| \le M-|\alpha|}}|\partial^\beta h_j(|\pmb{\zeta}|)| \\  
  &\le 6^M M! C_{h,1}(M!)^{C_{h,2}}|f|_M,
\end{align*}
as $h_j$ is a one-dimensional partition of unity and its derivatives
$\partial^\beta h_j(|\pmb{\zeta}|)$ can be estimated by the derivatives of $h_j$
and the maximum of $\partial^\beta|\pmb{\zeta}|$ can be bounded by
$|\beta|! 2^{|\beta|}(\frac32)^{|\beta|}$ for $|\pmb{\zeta}|\ge \frac23$, and for $|\pmb{\zeta}| < \frac23$ we may assume that $h_0(\pmb{\zeta}) = 1$,
thus the derivatives vanish.

Therefore, by taking $M = m + N + 3$ with $m = (4+Q_\Upsilon)n, N = N_1$,
by noting that $\supp f_j$ has radius $j+\frac32$,
\begin{align*}
 |S^{\pmb{a}}_n(f)| &\le \sum_j |S^{\pmb{a}}_n(f_j)| \le \bigstar\sum_j |f_j|_{(4+Q_\Upsilon)n} \frac{(j+\frac32)^{N_1 n}}{\epsilon^{N_2 n}} \\
 &\le \bigstar\left(|f_0|\frac{(\frac32)^{N_1 n}}{\epsilon^{N_2 n}}
 + \sum_{j \ge 1} \frac{|f_j|_{(4+Q_\Upsilon)n+N_1 n+3}}{(1+(j-\frac13)^2)^\frac {N_1 n+3}2} \frac{(j+\frac32)^{N_1 n}}{\epsilon^{N_2 n}}\right) \\
 &\le \bigstar' |f|_{(4+Q_\Upsilon)n+N_1 n+3}
 \left({\textstyle{(\frac32)^{N_1 n}}} + \sum_{j\ge 1} \frac{(j+2)^{Qn}}{(1+(j-\frac13)^2)^\frac {Qn+3}2}\right) \cdot \frac1{\epsilon^{N_2 n}}\\
 &\le \bigstar' |f|_{(4+Q_\Upsilon+N_1) n+3} \frac1{\epsilon^{N_2 n}}.
\end{align*}

\paragraph{(d-2) Absorbing $\epsilon$}

Let $\check h_0$ be a smooth function on $\bbR$ such that $\check h_0(t) = 0$ for $t \le 2^{-1}$,
$\check h_0(t) = 1 = 2^0$ for $1 \le t$. As we have seen in (b-1), we can take $\check h_0$ in such a way
that $|\frac{d^m}{dt^m} \check h_0|_0 \le C_{h,1}(m!)^{C_{h,2}}$.
We define $\check h_\ell(t) = \check h_0(2^\ell t)$.
Then $|\frac{d^m}{dt^m} \check h_\ell|_0 \le (2^\ell)^m C_{h,1}(m!)^{C_{h,2}}$.

Let $h_\ell(\pmb{\zeta}) = \prod_{j\neq k} \check h_{\ell+1}(|\zeta_j - \zeta_k|) - \prod_{j\neq k} \check h_\ell(|\zeta_j - \zeta_k|)$.
Then for any $\pmb{\zeta} \in \supp h_\ell$, $2^{-\ell-2} < |\zeta_j - \zeta_k|$,
but there is a pair $j,k$ such that $|\zeta_j - \zeta_k| < 2^{-\ell}$.
Moreover, note that for any partial derivative $\partial^\alpha h_\ell(\pmb{\zeta})$
there are at most $(2n-1)^{|\alpha|}$ terms (because each variable is involved only in $2n-1$ factors).
With the scaling factor $2^{\ell|\alpha|}$ for the $|\alpha|$-th derivative
and the estimate of partition of unity, we have
\begin{align}\label{eq:suph}
 \sup_{\pmb{\zeta} \in \bbR^{2n}} |\partial^\alpha h_\ell(\pmb{\zeta})| \le (2n)^{|\alpha|} \cdot 2^{\ell|\alpha|} \cdot C_{h,1}(|\alpha|!)^{C_{h,2}}.
\end{align}

For any test function $f$ with compact support in $\bbR^{2n}_{\neq}$, we put $f_\ell = f\cdot h_\ell$.
Then $f = f_0 + \cdots + f_L$ for some $L \in \bbN$ and
\begin{align*}
 |S_n^{\pmb{a}}(f)| \le  \sum_{\ell} |S_n^{\pmb{a}}(f_\ell)|
 \le \bigstar \sum_\ell |f_\ell|_{(4+Q_\Upsilon+N_1) n+3}2^{(\ell+2)N_2 n}
 = \bigstar' \sum_\ell |f_\ell|_{(4+Q_\Upsilon+N_1) n+3}2^{\ell N_2 n},
\end{align*}
because of part (d-1) applied to $f_\ell$ and the fact that for $\pmb{\zeta} \in \supp f_\ell$
it holds that $\epsilon = 2^{-\ell - 2} < |\zeta_k - \zeta_j|$.

Next recall the multi-variable Taylor formula with remainder
(cf.\! \href{https://sites.math.washington.edu/~folland/Math425/taylor2.pdf}{these lecture notes}):
\begin{align*}
 g(\pmb{\zeta}) = \sum_{|\alpha| \le m} \frac{\partial^\alpha g(\pmb{\omega})}{\alpha!}(\pmb{\zeta} - \pmb{\omega})^\alpha
 + \sum_{|\alpha| = m+1} \frac{\partial^\alpha g(\pmb{\omega} + c(\pmb{\zeta} - \pmb{\omega}))}{\alpha!}(\pmb{\zeta} - \pmb{\omega})^\alpha
\end{align*}
where $0 < c < 1$ and $m$ arbitrary. In particular, if $\partial^\alpha g(\pmb{\omega}) = 0$ for all $\alpha$, we have
\begin{align}\label{eq:multitaylor0}
 |g(\pmb{\zeta})| \le 
 \sum_{|\alpha| = m+1} \frac{|\partial^\alpha g(\pmb{\omega} + c(\pmb{\zeta} - \pmb{\omega}))|}{\alpha!}|(\pmb{\zeta} - \pmb{\omega})^\alpha|.
\end{align}
Let $\pmb{\zeta} \in \supp f_\ell$ and $j<k$ be a pair for which $|\zeta_j - \zeta_k| < 2^{-\ell}$.
We set
\begin{align*}
 \pmb{\omega} = (\zeta_, \cdots, \zeta_j, \cdots, \zeta_k = \zeta_j, \cdots, \zeta_n) \in \bbR^{2n}_=.
\end{align*}
Applying \eqref{eq:multitaylor0} to $\partial^\beta f_\ell$ with $\pmb{\omega} \in \bbR^{2n}_=$, we have for any $m \in \bbN$
and $|\beta| \le M$ and $ \pmb{\zeta} \in \supp f_\ell$,
\begin{align}\label{eq:zinsuppfl}
 (1 + |\pmb{\zeta}|^2)^\frac M2|\partial^\beta f_\ell(\pmb{\zeta})|
 &\le 
 (1 + |\pmb{\zeta}|^2)^\frac M2 \sum_{|\alpha| = m+1} \frac{|\partial^{\alpha+\beta}f_\ell(\pmb{\omega} + c(\pmb{\zeta} - \pmb{\omega}))|}{\alpha!}|(\pmb{\zeta} - \pmb{\omega})^\alpha| \nonumber \\
 &\le 2^M (1 + |\pmb{\zeta} + c(\pmb{\zeta} - \pmb{\omega})|^2)^\frac M2
 \sum_{|\alpha| = m+1} \frac{|\partial^{\alpha+\beta}f_\ell(\pmb{\omega} + c(\pmb{\zeta} - \pmb{\omega}))|}{\alpha!} 2^{-\ell(m+1)},
\end{align}
where the derivatives $\partial^\alpha$ act only on $\zeta_k = (y_{k,1}, y_{k,2})$.
With $m = N_2 n$ and $M = (4+Q_\Upsilon+N_1) n+3$,
and using the fact that the above summation contains $2^{|\alpha|}$ terms while the denominator
is at least $((\frac{|\alpha|}2)!)^2$, thus their ratio is less than a universal constant,
\begin{align*}
 &|f_\ell|_{(4+Q_\Upsilon+N_1) n+3} \\
 &= \sup_{\substack{\pmb{\zeta} \in \bbR^{2n} \\ |\beta| \le (4+Q_\Upsilon+N_1) n+3}}
 |(1 + |\pmb{\zeta}|^2)^\frac {(4+Q_\Upsilon+N_1) n+3}2 |\partial^\beta f_\ell(\pmb{\zeta})| \\
 &= 
 2^{(-\ell+1)(N_2 n+1)} \bigstar \sup_{\substack{\pmb{\zeta} \in \bbR^{2n} \\ |\beta| \le (4+Q_\Upsilon+N_1) n+3 \\ |\alpha|= N_2 n+1}}
 \left((1 + |\pmb{\zeta} + c(\pmb{\zeta} - \pmb{\omega})|^2)^\frac {(4+Q_\Upsilon+N_1) n+3}2
 \frac{|\partial^{\alpha+\beta}f_\ell(\pmb{\omega} + c(\pmb{\zeta} - \pmb{\omega}))|}{\alpha!}\right)\\
 &\le \bigstar 2^{(-\ell)(N_2 n+1)} |f_\ell|_{(4+Q_\Upsilon+N_1+N_2) n+4}
\end{align*}

Thus, by putting $4+Q_\Upsilon+N_1+N_2 = Q_\Upsilon'$ we have
\begin{align*}
 |S^{\pmb{a}}_n(f)| \le
 \bigstar' \sum_\ell 2^{-\ell} |f_\ell|_{Q_\Upsilon' n+4}.
\end{align*}

Furthermore, we have
\begin{align*}
 |f_\ell|_{Q_\Upsilon' n+4}
 &= \sup_{\substack{\pmb{\zeta} \in \bbR^{2n} \\ |\beta| \le Q_\Upsilon' n+4}}
 |(1 + |\pmb{\zeta}|^2)^\frac {Q_\Upsilon' n+4}2 |\partial^\beta (f h_\ell)(\pmb{\zeta})| \\
 &= \sup_{\substack{\pmb{\zeta} \in \bbR^{2n} \\ |\beta| \le Q_\Upsilon' n+4}} \sum_{\alpha \le \beta}
 |(1 + |\pmb{\zeta}|^2)^\frac {Q_\Upsilon' n+4}2 |\partial^{\beta - \alpha} f(\pmb{\zeta})| \cdot |\partial^{\alpha} h_\ell(\pmb{\zeta})| \\
 &= \bigstar 2^{\ell(Q_\Upsilon' n+4)}\sup_{\substack{\pmb{\zeta} \in \supp f_\ell \\ |\beta| \le Q_\Upsilon' n+4 \\ \alpha \le \beta}}
 |(1 + |\pmb{\zeta}|^2)^\frac {Q_\Upsilon' n+4}2
 |\partial^{\beta-\alpha}f(\pmb{\zeta})| \\
 &= \bigstar' 2^{\ell(Q_\Upsilon' n+4)}|f|_{2Q_\Upsilon' n+8} 2^{-\ell(Q_\Upsilon' n+4+1)} \\
 &\le \bigstar |f|_{2Q_\Upsilon' n+8}  \\
\end{align*}
where we used \eqref{eq:suph} and $|\alpha| \le Q_\Upsilon' n+4$ in the third line
(the number of terms apparing from the Leibniz rule is bounded by $(2n)^{Q_\Upsilon'n+4} = \bigstar$),
and \eqref{eq:zinsuppfl} with $m = Q_\Upsilon' n+4$ in the fourth line by noting the range of $\pmb{\zeta}$.

Altogether,
\begin{align*}
 |S^{\pmb{a}}_n(f)| \le
 \bigstar' \sum_\ell |f|_{2Q_\Upsilon' n+8}2^{-\ell} \le \bigstar' |f|_{2Q_\Upsilon' n+8} \le \bigstar |f|_{(2Q_\Upsilon'+8) n}
\end{align*}
holds for $n\neq 0$. For $n=0$, $|S_0| = 1$ by definition.
Thus (OS\ref{ax:lineargrowth}) holds.

\subsection{Reflection positivity}\label{RP}
We prove (OS\ref{ax:RP}) in two steps:
\begin{itemize}
 \item[(a)] We prove the reflection positivity for a set of test functions satisfying a slightly different support condition
 \item[(b)] We show that a general set of test functions can be brought to the previous set by a large dilation
\end{itemize}

\paragraph{(a) Reflection positivity for a subset of functions.}
For a quasi-primary vector $a$, we consider the formal series $\varphi(a, \uz) = Y(a, \uz)\uz^{\uh_a}$.
Let $a$ be Hermite,
then it holds that $\<u, \varphi(a, \uz)v\> = \<\varphi(a, \uz^{-1})u, v\>$ by \eqref{eq:hermite-sp}.

Let us consider the equality as formal series
\begin{align*}
 \<\vac, \varphi(a_1, \uz_1)\cdots \varphi(a_{m+n}, \uz_{m+n})\vac\> 
 =\<\varphi(a_m, \uz_m^{-1})\cdots \varphi(a_1, \uz_1^{-1})\vac, \varphi(a_{m+1}, \uz_{m+1})\cdots \varphi(a_{m+n}, \uz_{m+n})\vac\>.
\end{align*}
Both sides are convergent if this is evaluated at $|\zeta_1| > \cdots > |\zeta_{m+n}| > 0$.
Recall the convention $\<zu, v\> = z\<u, v\>$ for a formal variable $z$.
This implies that, when we evaluate a formal variable $z$ by $\zeta$ on the left of the scalar product, we have to substitute it by $\bar \zeta$.
Therefore,
\begin{align}\label{eq:conjugatephi}
 \<\vac, \varphi(a_1, \uze_1)\cdots \varphi(a_{m+n}, \uze_{m+n})\vac\> 
 = \< \varphi(a_m, r\uze_m)\cdots \varphi(a_1, r\uze_1)\vac, \varphi(a_{m+1}, \uze_{m+1})\cdots \varphi(a_{m+n}, \uze_{m+n})\vac\>,
\end{align}
where $r\zeta = \bar\zeta^{-1}$ is the reflection with respect to the unit circle.

We claim that an expression $\Psi(\uze_1, \cdots, \uze_m) = \varphi(a_1, \uze_1)\cdots \varphi(a_m, \uze_m)\vac$ defines a vector in the completion
of $F$ with respect to the norm if $1 > |\zeta_1| > \cdots > |\zeta_m| > 0$.
Indeed, for each pair $h, \bar h \in \bbR$, $F_{h, \bar h}$ is a finite-dimensional vector space,
and the projection $P_{h, \bar h}\Psi(\uze_1, \cdots, \uze_m)$ onto $F_{h, \bar h}$ is an absolutely convergent series
in $\uze_1, \cdots, \uze_m$ if $|\zeta_1| > \cdots > |\zeta_m|$.
Then for any finite set $\frH$ of pairs $(h, \bar h)$,
\begin{align*}
 \sum_{(h, \bar h) \in \frH} \|P_{h, \bar h}\Psi(\uze_1, \cdots, \uze_m)\|^2
 &= \sum_{(h, \bar h) \in \frH}
 \left.\<P_{h, \bar h}\Psi(\uze_{m+1}, \cdots, \uze_{2m}), P_{h, \bar h}\Psi(\uze_1, \cdots, \uze_m)\>\right|_{\zeta_{m+j} = \zeta_j}.
\end{align*}
This is a power series in $\uze_1, \cdots, \uze_{2m}$ evaluated at $\zeta_{m+j} = \zeta_j$ for all $j = 1,\cdots, m$.
The power series is given by extracting terms of the form $A_{s_1, \cdots, s_m, t_1,\cdots, t_m} \zeta^{-s_1}\zee^{-t_1}\cdots \zeta^{-s_{2m}}\zee^{-t_{2m}}$
satisfying $\sum_{j=1}^m s_j = h, \sum_{j=1}^m t_j = \bar h$ for some $(h, \bar h) \in \frH$ from
\begin{align*}
 \<\vac, \varphi(a_{2m}, \uze_{2m})\cdots\varphi(a_{m+1}, \uze_{m+1}) \varphi(a_1, \uze_1)\cdots \varphi(a_m, \uze_m)\vac\>
\end{align*}
It is absolutely convergent when evaluated at $r(\zeta_{m+j}) = \zeta_j$ for all $j=1,\cdots, m$
if $1 > |\zeta_1| > \cdots > |\zeta_m| > 0$, as in this case it holds that $|\zeta_{2m}| > \cdots |\zeta_{m+1}| > 1$.
This implies that the norm $\|\Psi(\uze_1, \cdots, \uze_m)\|^2 = \sum_{(h, \bar h)} \|P_{h, \bar h}\Psi(\uze_1, \cdots, \uze_m)\|^2$ is convergent.

We are interested in the reflection $\theta(\omega) = -\bar \omega$ (the reflection with respect to the imaginary axis),
rather than $r\omega = \bar\omega^{-1}$.
Using a slight variation of the Cayley transform $\frC \in \PSLtwoC$, where $\frC(\omega) = \frac{\frac1{\sqrt2}(\ri\omega+\ri)}{\frac1{\sqrt2}(-\ri\omega+\ri)} = \frac{1+\omega}{1-\omega}$,
we map $\ri\bbR$ to $S^1$. Other important points are:
\begin{align*}
 \frC(\omega) = \begin{cases}
         \infty & \text{ if } \omega = 1 \\
        0 & \text{ if } \omega = -1 \\
         1 & \text{ if } \omega = 0 \\
    -1 & \text{ if } \omega = \infty
        \end{cases}.  
\end{align*}
We also have $\frC^{-1}(\zeta) = \frac{\zeta-1}{\zeta+1}$.
It holds that 
\begin{align}\label{eq:CinvRC}
 \frC^{-1} \circ r \circ \frC(\omega) = -\bar \omega = \theta \omega.
\end{align}

We know that the power series
\begin{align*}
 \<\vac, Y(a_1, \underline \zeta_1)\cdots Y(a_n, \underline \zeta_n)\vac\>
\end{align*}
converge for $|\zeta_1| > \cdots > |\zeta_n|$.
As $\frac{d\frC}{d\omega}(\omega) = \frac{2}{(1-\omega)^2}$,
by Proposition \ref{prop_global_conf}, for $\omega_1, \cdots, \omega_n \in \bbC$ such that $\omega_j \neq 1, -1$ and
$|\frC(\omega_1)| > \cdots > |\frC(\omega_n)|$ we have
\begin{align*}
 S_n^{\pmb{a}}(\omega_1,\cdots, \omega_n)
 = \prod_{j=1}^n \left(\frac{2}{(1-\omega_j)^2}\right)^{h_{a_j}}\overline{\left(\frac{2}{(1-\omega_j)^2}\right)}^{\bar h_{a_j}}\<\vac, Y(a_1, \underline{\frC(\omega_1)})\cdots Y(a_n, \underline{\frC(\omega_n)})\vac\>,
\end{align*}
where the scalar product has a convergent expansion in $\frC(\omega_j)$.
In terms of the field $\varphi$ above, this amounts to
\begin{align}\label{eq:SJphi}
 S_n^{\pmb{a}}(\omega_1,\cdots, \omega_n)
 = \prod_{j=1}^n \frJ(\omega_j)^{h_{a_j}}\overline{\frJ(\omega_j)}^{\bar h_{a_j}}
 \<\vac, \varphi(a_1, \underline{\frC(\omega_1)})\cdots \varphi(a_n, \underline{\frC(\omega_n)})\vac\>,
\end{align}
where we introduced $\frJ(\omega) = \frac{2}{(1-\omega)^2}\cdot \frac1{\frC(\omega)} = \frac{2}{(1-\omega)(1+\omega)}$.
It holds that $\overline{\frJ(\omega)} = \frJ(\bar \omega)$ and $\frJ(-\omega) = \frJ(\omega)$, thus
\begin{align}\label{eq:Jtheta}
 \overline{\frJ(\omega)} = \frJ(\bar\omega) =  \frJ(- \bar\omega) = \frJ(\theta \omega).
\end{align}

By smearing the correlation functions $S^{\pmb{a}_j}_{j}$ with a finite set of test functions $f^{\pmb{a}_j}_j(\zeta_1,\cdots, \zeta_j)$
supported in the set of $\omega_1, \cdots, \omega_n \in \bbC$ such that $\omega_j\neq \pm 1$ for all $j$ and
$1 > |\frC(\omega_1)| > \cdots > |\frC(\omega_n)|$ we considered above,
the following expression gives a vector in the completion of $F$ with respect to the norm:
\begin{align*}
 \Psi^{\pmb{a}_j} &= \int f_j(\omega_1, \cdots, \omega_j)
 \prod_{\ell=1}^j \frJ(\omega_j)^{h_{a_j}}\overline{\frJ(\omega_j)}^{\bar h_{a_j}}
  \varphi(a_1, \frC(\omega_1))\cdots \varphi(a_j, \frC(\omega_j)) \vac d\tau_1\xi_1\cdots d\tau_j d\xi_j,
\end{align*}
where $\omega_j = \tau_j + \ri\xi_j$.

Let $A$ be as in (OS\ref{ax:RP}).
Then, the positive-definiteness of the scalar product tells that
\begin{align*}
 0 &\le \<\sum_{\pmb{b}_j \in A} \Psi^{\pmb{b}_j}, \sum_{\pmb{a}_k \in A} \Psi^{\pmb{a}_k}\> \\
 &= \sum_{\pmb{b}_j, \pmb{a}_k \in A}
 \int d\tau_1d\xi_1\cdots d\tau_j d\xi_j d\tau_{j+1}d\xi_{j+1}\cdots d\tau_{j+k} d\xi_{j+k} \;\overline{f^{\pmb{b_j}}_j(\omega_1, \cdots, \omega_j)} f^{\pmb{a_k}}_k(\omega_{j+1}, \cdots, \omega_{j+k}) \\
 &\qquad\prod_{\ell=1}^j \overline{\frJ(\omega_\ell)}^{h_{b_\ell}} \frJ(\omega_\ell)^{\bar h_{b_\ell}}
 \prod_{m=1}^k \frJ(\omega_{j+m})^{h_{a_m}} \overline{\frJ(\omega_{j+m})}^{\bar h_{a_m}}\cdot \\
 &\qquad\<\varphi(b_1, \underline{\frC(\omega_1)})\cdots \varphi(b_j, \underline{\frC(\omega_j)}) \vac,
   \varphi(a_{1}, \underline{\frC(\omega_{j+1})})\cdots \varphi(a_{k}, \underline{\frC(\omega_{j+k})}) \vac\> \\
 &= \sum_{\pmb{b}_j, \pmb{a}_k \in A}
 \int d\tau_1d\xi_1\cdots d\tau_j d\xi_j d\tau_{j+1}d\xi_{j+1}\cdots d\tau_{j+k} d\xi_{j+k} \;\overline{f^{\pmb{b_j}}_j(\omega_1, \cdots, \omega_j)} f^{\pmb{a_k}}_k(\omega_{j+1}, \cdots, \omega_{j+k}) \\
 &\qquad\prod_{\ell=1}^j \overline{\frJ(\omega_\ell)}^{h_{b_\ell}} \frJ(\omega_\ell)^{\bar h_{b_\ell}}
 \prod_{m=1}^k \frJ(\omega_{j+m})^{h_{a_m}} \overline{\frJ(\omega_{j+m})}^{\bar h_{a_m}}\cdot \\
 &\qquad\<\vac, \varphi(b_j, \underline{r\frC(\omega_j)})\cdots \varphi(b_1, \underline{r\frC(\omega_1)})
   \varphi(a_{1}, \underline{\frC(\omega_{j+1})})\cdots \varphi(a_{k}, \underline{\frC(\omega_{j+k})}) \vac\> \\
 &= \sum_{\pmb{b}_j, \pmb{a}_k \in A}
 \int d\tau_1d\xi_1\cdots d\tau_j d\xi_j d\tau_{j+1}d\xi_{j+1}\cdots d\tau_{j+k} d\xi_{j+k} \;\overline{f^{\pmb{b_j}}_j(\omega_1, \cdots, \omega_j)} f^{\pmb{a_k}}_k(\omega_{j+1}, \cdots, \omega_{j+k}) \\
 &\quad\qquad\times\prod_{\ell=1}^j \frJ(\theta \omega_\ell)^{h_{b_\ell}} \overline{\frJ(\theta \omega_\ell)}^{\bar h_{b_\ell}} \prod_{m=1}^k \frJ(\omega_{j+m})^{h_{a_m}}\overline{\frJ(\omega_{j+m})}^{\bar h_{a_m}} \\
 &\quad\qquad\times\<\vac, \varphi(b_j, \underline{\frC(\theta \omega_j)})\cdots \varphi(b_1, \underline{\frC(\theta \omega_1)})
 \varphi(a_{j+1}, \underline{\frC(\omega_{j+1})})\cdots \varphi(a_{j+k}, \underline{\frC(\omega_{j+k})}) \vac\> \\
 &= \sum_{\pmb{b}_j, \pmb{a}_k \in A}
 \int d\tau_1d\xi_1\cdots d\tau_j d\xi_j d\tau_{j+1}d\xi_{j+1}\cdots d\tau_{j+k} d\xi_{j+k}
 \;\overline{f^{\pmb{b_j}}_j(\theta \omega_j, \cdots, \theta \omega_1)} f^{\pmb{a_k}}_k(\omega_{j+1}, \cdots, \omega_{j+k}) \\
 &\qquad\quad\times\prod_{\ell=1}^j \frJ(\omega_\ell)^{h_{b_\ell}} \overline{\frJ(\omega_\ell)}^{\bar h_{b_\ell}} \prod_{m=1}^k \frJ(\omega_{j+m})^{h_{a_m}}\overline{\frJ(\omega_{j+m})}^{\bar h_{a_m}} \\
 &\qquad\quad\times\<\vac, \varphi(b_j, \underline{\frC(\omega_1)})\cdots \varphi(b_1, \underline{\frC(\omega_j)})
 \varphi(a_{1}, \underline{\frC(\omega_{j+1})})\cdots \varphi(a_{k}, \underline{\frC(\omega_{j+k})}) \vac\> \\
 &= \sum_{\pmb{b}_j, \pmb{a}_k \in A}
 \int d\tau_1d\xi_1\cdots d\tau_j d\xi_j d\tau_{j+1}d\xi_{j+1}\cdots d\tau_{j+k} d\xi_{j+k}
 \;\Theta (f^{\pmb{b_j}}_j)^*(\omega_1, \cdots, \omega_j) f^{\pmb{a_k}}_k(\omega_{j+1}, \cdots, \omega_{j+k}) \\
 &\qquad\quad\times S^{(\phi \pmb{b}_j, \pmb{a}_k)}_{j+k}(\omega_1,\cdots,\omega_{j+k}),
\end{align*}
where in the 3rd line we used the relation \eqref{eq:conjugatephi},
in the 4th line we used the relations \eqref{eq:CinvRC} and \eqref{eq:Jtheta},
in the 5th line we used the invariance of the measure under $\omega \mapsto \theta \omega$
and made the relabelling of the variables $(\omega_1,\cdots, \omega_j) \mapsto (\omega_j,\cdots, \omega_1)$
and in the 6th line we used the definition $(\Theta f_j^*)(\omega_1,\cdots, \omega_n) = \overline{f_j(\theta \omega_j, \cdots \theta \omega_1)}$
and \eqref{eq:SJphi}.
This is reflection positivity (OS\ref{ax:RP}) for the test functions chosen above.

\paragraph{(b) The general case.}
Note that the support properties of $\{f^{\pmb{a}_j}_j\}$ are not the ones required in (OS\ref{ax:RP}).
Let $\{f^{\pmb{a}_j}_j\}$ be test functions compactly supported in $\bbR^{2n}_{\neq}$ with $f^{\pmb{a}_j}_j \in \scS^{\pmb{a}_j}_+(\bbR^{2j})$.
Our goal is to prove \eqref{eq:RP} in this setting.
For $\lambda > 0$ large, by conformal invariance (Proposition \ref{prop_global_conf}) we have
$S_m^{\pmb{a}}(\omega_1,\cdots, \omega_m) = \prod_{j=1}^m \re^{\lambda(h_j+\bar h_j)}S_m^{\pmb{a}}(\re^\lambda \omega_1,\cdots, \re^\lambda \omega_m)$.
Therefore,
 \begin{align}\label{eq:lambda}
 &\sum_{\pmb{b}_j, \pmb{a}_k \in A} S_{j+k}^{(\phi\pmb{b}_j, \pmb{a}_k)}(\Theta (f^{\pmb{b}_j}_j)^* \otimes f^{\pmb{a}_k}_k) \nonumber \\
 &= \sum_{\pmb{b}_j, \pmb{a}_k \in A}
 \int d\tau_1d\xi_1\cdots d\tau_{j+k} d\xi_{j+k} \;\overline{f^{\pmb{b}_j}_j(\theta \omega_j, \cdots, \theta \omega_1)} f^{\pmb{a}_k}_k(\omega_{j+1}, \cdots, \omega_{j+k}) 
 S_{j+k}^{(\phi\pmb{b}_j, \pmb{a}_k)}(\omega_1, \cdots, \omega_{j+k}) \nonumber \\
 &= \sum_{\pmb{b}_j, \pmb{a}_k \in A}
 \int d\tau_1d\xi_1\cdots d\tau_{j+k} d\xi_{j+k} \;\overline{f^{\pmb{b}_j}_j(\theta \re^\lambda \omega_j, \cdots, \theta \re^\lambda \omega_1)} f^{\pmb{a}_k}_k(\re^\lambda \omega_{j+1}, \cdots, \re^\lambda \omega_{j+k}) \nonumber \\
 &\qquad\qquad S_{j+k}^{(\phi\pmb{b}_j, \pmb{a}_k)}(\re^\lambda \omega_1, \cdots, \re^\lambda \omega_{j+k}) \nonumber \\
 &= \sum_{\pmb{b}_j, \pmb{a}_k \in A}
 \int \prod_{\ell=1}^{j+k} \re^{2\lambda}\cdot d\tau_1d\xi_1\cdots d\tau_{j+k} d\xi_{j+k} \;\overline{f^{\pmb{b}_j}_j(\theta \re^\lambda \omega_j, \cdots, \theta \re^\lambda \omega_1)} f^{\pmb{a}_k}_k(\re^\lambda \omega_{j+1}, \cdots, \re^\lambda \omega_{j+k}) \nonumber \\
 &\qquad\qquad \prod_{\ell=1}^{j+k} \re^{-\lambda(h_\ell+\bar h_\ell)}\cdot S_{j+k}^{(\phi\pmb{b}_j, \pmb{a}_k)}(\re^\lambda \omega_1, \cdots, \re^\lambda \omega_{j+k}) \nonumber \\
 &= \sum_{\pmb{b}_j, \pmb{a}_k \in A}
 \int d\tau_1d\xi_1\cdots d\tau_{j+k} d\xi_{j+k} \;\overline{f^{\pmb{b}_j}_{\lambda,j}(\theta \omega_j, \cdots, \theta \omega_1)}f^{\pmb{a}_k}_{\lambda,k}(\omega_{j+1}, \cdots, \omega_{j+k})
 S_{j+k}^{(\phi\pmb{b}_j, \pmb{a}_k)}(\omega_1 \cdots, \omega_{j+k}) \nonumber \\
 &=\sum_{\pmb{b}_j, \pmb{a}_k \in A} S_{j+k}^{(\phi\pmb{b}_j, \pmb{a}_k)}(\Theta (f^{\pmb{b}_j}_{\lambda,j})^* \otimes f^{\pmb{a}_k}_{\lambda,k}),
 \end{align}
 where we introduced $f^{\pmb{a}_j}_{\lambda,j}(\omega_1, \cdots, \omega_j) = \re^{j\lambda - \sum_{\ell=1}^j \lambda(h_\ell+\bar h_\ell)} f^{\pmb{a}_j}_j(\re^\lambda \omega_1, \cdots, \re^\lambda \omega_j)$
 and used that $\theta \re^\lambda = \re^\lambda \theta$.
 The functions $\{f^{\pmb{a}_j}_{\lambda,j}\}$ still satisfy the support condition, but their supports are scaled by $\re^{-\lambda}$.
 
 As the supports of $\{f^{\pmb{a}_j}_j\}$ are compact, we may assume that there is (small) $\epsilon > 0$ and (large) $R > 0$ such that
 the support of $f^{\pmb{a}_j}_j$ is contained in the set
 \begin{align}\label{eq:domain1}
  \{(\omega_1, \cdots, \omega_j): \tau_1 + j\epsilon > \tau_2 + (j-1)\epsilon > \cdots > \tau_j, |\omega_\ell| < R \text{ for all } \ell\},
 \end{align}
 where $\omega_\ell = \tau_\ell + \ri\xi_\ell$.
 Then the support of $f^{\pmb{a}_j}_{\lambda,j}$ is contained in
 \begin{align}\label{eq:domain2}
  \{(\omega_1, \cdots, \omega_j): \tau_1 + \re^{-\lambda}j\epsilon > \tau_2 + \re^{-\lambda}(j-1)\epsilon > \cdots > \tau_j,
  |\omega_\ell| < \re^{-\lambda}R \text{ for all } \ell\}.
 \end{align}
 
 We claim that, for a sufficiently large $\lambda$ and sufficiently small $\eta' > 0$, this support satisfies also the condition 
 \begin{align*}
  \{(\omega_1, \cdots, \omega_j): 1 > |\frC(\omega_1)| > |\frC(\omega_2)|  > \cdots > |\frC(\omega_j)|\}.
 \end{align*}
 To see this, note that, for $\omega = \tau + \ri\xi$,
 \begin{align*}
  |\frC(\omega)|^2 = \left|\frac{1 + \tau + \ri\xi}{1 - \tau - \ri\xi}\right|^2 = \frac{1+\tau^2 + \xi^2 + 2\tau}{1+\tau^2 + \xi^2 - 2\tau}.
 \end{align*}
 For any point in the set \eqref{eq:domain1}, $\tau_\ell$ and $\tau_{\ell+1}$ differ by more than $\re^{-\lambda}\epsilon$,
 while $\tau_\ell^2 + \xi_\ell^2 = |\omega_\ell|^2 < \re^{-2\lambda}R^2, \tau_\ell^2 + \xi_\ell^2 < \re^{-2\lambda}R^2$,
 thus the latter is negligible, by taking $\lambda$ sufficiently large.
 Then it holds that $|\frC(\omega_\ell)|^2 - |\frC(\omega_{\ell+1})|^2 > \frac12(\tau_\ell - \tau_{\ell+1})$,
 as desired.
 Moreover, with a large enough $\lambda$, it is clear that
 $\supp f^{\pmb{a}_j}_{\lambda, j}$ do not contain $\pm 1$.
 
 From Step 1, we know that (OS\ref{ax:RP}) is satisfied for $\{f^{\pmb{a}_j}_{\lambda,j}\}$ and by \eqref{eq:lambda} this is equivalent to
 (OS\ref{ax:RP}) for $\{f^{\pmb{a}_j}_j\}$.

\subsection{Clustering}\label{clustering}

In this section, we will show the clustering by using Proposition \ref{prop_cluster_tree}. We first explain our idea of the proof.

We apply Proposition \ref{prop_cluster_tree} to the case $(m+1,n+1)$ where $a_{m+1}$ and $a_{m+n+2}$ are the vacuum vectors. That is, for example, consider the following  compositions of vertex operators:
\begin{align*}
Y(Y(a_1,\zeta_{1,v})Y(a_2,\zeta_{2,v})Y(a_3,\zeta_{3,v})\vac, \zeta_{v,0}) Y(a_4,\zeta_{4,0})Y(a_5,\zeta_{5,0})\vac,
\end{align*}
which absolutely convergent if
\begin{align}
\left|\frac{\zeta_{2,v}}{\zeta_{1,v}}\right|, \left|\frac{\zeta_{3,v}}{\zeta_{2,v}}\right| < 1 \quad \text{ and } \left|\frac{\zeta_{5,0}}{\zeta_{4,0}}\right| < 1 \label{eq_cluster_inequality1}
\end{align}
and
\begin{align}
\left|\frac{\zeta_{1,v}}{\zeta_{v,0}}\right| + \left|\frac{\zeta_{4,0}}{\zeta_{v,0}}\right| < 1\label{eq_cluster_inequality2}
\end{align}
by Proposition \ref{prop_cluster_tree}.

We shall think of $\{a_1,a_2,a_3\}$ as a first cluster around $\zeta_v$ and $a_4,a_5$ as a second cluster around $\zeta_0$. Then $\zeta_{v,0}$ corresponds to the distance between the two clusters.
When $\zeta_1,\dots,\zeta_5$ are fixed (generally when moving in a compact set in \eqref{eq_cluster_inequality1}), if we move the distance between the two clusters away ($\zeta_{v,0} \to \infty$), then \eqref{eq_cluster_inequality2} is automatically satisfied.
Using this fact and the property Proposition \ref{prop_unitary_spec}
and Proposition \ref{prop_unitary_vacuum} about the spectrum of the unitary full VOA, the cluster decomposition follows.

Recall that $U_n= \{|\zeta_1|>|\zeta_2|>\cdots >|\zeta_n|\}$.
\begin{proposition}
\label{prop_nm_conv}
Let $K_m \subset U_m$ and $K_n \subset U_n$ be compact subsets.
Then, $S_{m+n}^{(\pmb{a_m},\pmb{b_n})}(\ze_1+\la, \dots,\ze_m+\la, \ze_{m+1},\dots,\ze_{m+n})$ uniformly
converge to $S_m^{\pmb{a_m}}(\ze_1,\dots,\ze_m)S_n^{\pmb{b_n}}(\ze_{m+1},\dots,\ze_{m+n})$ in $K_m \times K_n$ as $\la \to \infty$.
\end{proposition}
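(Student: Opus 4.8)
The plan is to realize $S_{m+n}^{(\pmb{a_m},\pmb{b_n})}(\ze_1+\la,\dots,\ze_m+\la,\ze_{m+1},\dots,\ze_{m+n})$ as the nested composition of Proposition~\ref{prop_cluster_tree} with a vacuum vector inserted at the ``centre'' of each of the two clusters, to expand the result as a series in powers of $\la^{-1}$ using the $L(0),\Ld(0)$-grading, and then to read off from Propositions~\ref{prop_unitary_spec} and~\ref{prop_unitary_vacuum} that only the term of order $\la^{0}$ survives as $\la\to\infty$ and that this term equals $S_m^{\pmb{a_m}}\,S_n^{\pmb{b_n}}$. The conceptual content is that only the vacuum ``propagates'' between two clusters that are far apart, and this is exactly where unitarity enters, through positivity of the conformal weights and one-dimensionality of $F_{0,0}$.

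First I would apply Proposition~\ref{prop_cluster_tree} with $(m,n)$ replaced by $(m+1,n+1)$, taking the two central vectors $a_{m+1},a_{m+n+2}$ to be $\vac$, taking $u=\<\vac|$, placing the centre of the second cluster at the origin so that the translation prefactor is trivial (and in any case $L(-1)^{*}\<\vac|=\Ld(-1)^{*}\<\vac|=0$ by \eqref{eq_dual_vac}), and substituting the first cluster's variables by $\uze_1,\dots,\uze_m$, the second cluster's variables by $\uze_{m+1},\dots,\uze_{m+n}$ and the inter-cluster variable by $(\la,\la)$. For $(\ze_1,\dots,\ze_m)\in K_m\subset U_m$ and $(\ze_{m+1},\dots,\ze_{m+n})\in K_n\subset U_n$ the ``cluster'' inequalities defining $U_{m+1,n+1}$ hold automatically, while the separation inequality holds once $\la>\la_0$ for a constant $\la_0=\la_0(K_m,K_n)$; so for $\la>\la_0$ we lie inside $U_{m+1,n+1}$ and, after deleting the two inserted vacua using the Vacuum and Symmetry parts of Theorem~\ref{thm_B}, Proposition~\ref{prop_cluster_tree} identifies our quantity with the value of the absolutely, locally uniformly convergent series
\begin{align*}
 \<\vac,\;Y\bigl(Y(a_1,\uze_1)\cdots Y(a_m,\uze_m)\vac,\;\la,\la\bigr)\,Y(b_1,\uze_{m+1})\cdots Y(b_n,\uze_{m+n})\vac\>.
\end{align*}

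I would then expand this in $\la$. Writing $\Psi_v=Y(a_1,\uze_1)\cdots Y(a_m,\uze_m)\vac$ and $\Phi_0=Y(b_1,\uze_{m+1})\cdots Y(b_n,\uze_{m+n})\vac$, I regroup the above (absolutely convergent) series by the bigrading, $\Psi_v=\sum_{h',\h'}P_{h',\h'}\Psi_v$ and $\Phi_0=\sum_{h'',\h''}P_{h'',\h''}\Phi_0$. By the $L(0),\Ld(0)$-covariance \eqref{eq_L0_cov}, the $F_{0,0}$-component of $Y(P_{h',\h'}\Psi_v,z,\z)\,P_{h'',\h''}\Phi_0$ is proportional to $z^{-(h'+h'')}\z^{-(\h'+\h'')}$, so pairing with $\<\vac|$ and setting $z=\z=\la$ contributes a term $d_{h',\h',h'',\h''}\,\la^{-(h'+\h'+h''+\h'')}$. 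By Proposition~\ref{prop_unitary_spec}(1) every exponent is $\le 0$, with equality only when $h'=\h'=h''=\h''=0$; and by Proposition~\ref{prop_unitary_vacuum} $F_{0,0}=\bbC\vac$, so $P_{0,0}\Psi_v=(\<\vac|\Psi_v)\vac$ and $P_{0,0}\Phi_0=(\<\vac|\Phi_0)\vac$, and since $Y(\vac,z,\z)=\id$ the coefficient of $\la^{0}$ is $(\<\vac|\Psi_v)(\<\vac|\Phi_0)=S_m^{\pmb{a_m}}(\ze_1,\dots,\ze_m)\,S_n^{\pmb{b_n}}(\ze_{m+1},\dots,\ze_{m+n})$ by Theorem~\ref{thm_A}. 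To make the limit uniform on $K_m\times K_n$ I would use that each $\bigoplus_{h+\h<H}F_{h,\h}$ is finite-dimensional, so the positive values of $h'+\h'+h''+\h''$ that occur are bounded below by some $\delta_0>0$; then for $\la_0<\la$,
\begin{align*}
 \Bigl|S_{m+n}^{(\pmb{a_m},\pmb{b_n})}(\ze_1+\la,\dots,\ze_m+\la,\ze_{m+1},\dots,\ze_{m+n})-S_m^{\pmb{a_m}}S_n^{\pmb{b_n}}\Bigr|
 \le\sum_{w>0}|d_{\ldots}|\,\la^{-w}
 \le\Bigl(\tfrac{\la_0}{\la}\Bigr)^{\delta_0}\sum_{w>0}|d_{\ldots}|\,\la_0^{-w},
\end{align*}
and the last sum is bounded uniformly on $K_m\times K_n$ by the local uniformity in Proposition~\ref{prop_cluster_tree}.

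The step I expect to be most delicate is bookkeeping rather than conceptual: matching the index conventions of Proposition~\ref{prop_cluster_tree} with the two vacuum insertions so that the substitution produces the shift $\ze_i\mapsto\ze_i+\la$ on the first $m$ variables only, and checking that the slice of $U_{m+1,n+1}$ we use is nonempty and that $\la_0$ can be chosen uniformly over the compacta. Everything else reduces to the grading computation and to the two spectral inputs from unitarity, which together encode that only the vacuum channel contributes at large separation.
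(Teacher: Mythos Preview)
Your proposal is correct and follows essentially the same approach as the paper: insert two vacuum vectors, apply Proposition~\ref{prop_cluster_tree} with $(m+1,n+1)$, expand by conformal weight, and use positivity of the spectrum together with $F_{0,0}=\bbC\vac$ to isolate the vacuum channel as $\la\to\infty$. Your treatment is in fact slightly more streamlined than the paper's: the paper separately verifies via the invariant bilinear form that the ``cross terms'' (one cluster projecting to vacuum, the other not) vanish identically, whereas you correctly observe that any such term still carries a strictly negative power of $\la$ and hence dies in the limit anyway, so this extra computation is not needed.
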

\begin{proof}
By Theorem \ref{thm_B}, the vacuum property and symmetry, for any $\lambda \in \bbC$, we have
\begin{align*}
&S_{m+n}^{(\pmb{a_m}, \pmb{b_n})}(\ze_1+\lambda,\dots,\ze_m+\lambda,\ze_{m+1},\ze_{m+2}, \dots,\ze_{m+n})\\
&=S_{m+n+2}^{(\pmb{a_m},\vac,\pmb{b_n},\vac)}(\ze_1+\lambda,\dots,\ze_m+\lambda, \ze_v, \ze_{m+1},\ze_{m+2},\dots, \ze_{n+m},\ze_{0}),
\end{align*}
where we insert the vacuum vector at $\ze_v$ and $\ze_0$, and thus, the left-hand-side is independent of $\ze_v$ and $\ze_0$. We set $\ze_v=\lambda$ and $\ze_0=0$.

Then, by Proposition \ref{prop_cluster_tree}, we have
\begin{align}
  &S_{m+n+2}^{(\pmb{a_m},\vac,\pmb{b_n},\vac)}(\ze_1+\lambda, \dots,\ze_m+\lambda, \lambda, \ze_{m+1},\dots, \ze_{m+n},0)|_{U_{n+1,m+1}}\nonumber \\
&=\<\vac, Y\left(
Y(a_{1},\uze_{1,v})\dots Y(a_{m},\uze_{m,v})\vac,\uze_{v,0} \right)
 Y(a_{m+1},\uze_{m+1,0})\dots Y(a_{m+n},\uze_{m+n,0})\vac\> \label{eq_cluster_expansion}
\end{align}
where the right-hand-side is absolutely and locally uniformly convergent 
in
\begin{align}
\left|\frac{\zeta_{i+1,v}}{\zeta_{i,v}}\right| <1,\quad\quad \left|\frac{\zeta_{j+1,0}}{\zeta_{j,0}}\right|<1
\label{eq_conv_cluster1}
\end{align}
and
\begin{align}
\left|\frac{\zeta_{1,0}}{\zeta_{v,0}}\right|+\left|\frac{\zeta_{m+1,0}}{\zeta_{v,0}}\right| <1 \label{eq_conv_cluster2}
\end{align}
and coincides with the left-hand-side after substituting 
\begin{align*}
\zeta_{i,v}=(\ze_i+\lambda)-\lambda=\ze_i,\quad\quad \zeta_{j,0}=\ze_j, \quad\quad \zeta_{v,0}=\lambda
\end{align*}
 for $i \in\{1,\dots,m\}$ and $j \in \{m+1,\dots,m+n\}$.

Let $(\ze_1,\dots,\ze_m) \in K_m$ and $(\ze_{m+1},\dots,\ze_{m+n})\in K_n$.
Then, since $K_m \subset U_m$ and $K_n \subset U_n$, 
\eqref{eq_conv_cluster1} holds.
Moreover, if $\la \to \infty$, then
\begin{align*}
\left|\frac{\zeta_{1,v}}{\zeta_{v,0}}\right|+\left|\frac{\zeta_{m+1,0}}{\zeta_{v,0}}\right| =\left|\frac{\ze_1}{\lambda}\right|+\left|\frac{\ze_{m+1}}{\lambda}\right| <1.
\end{align*}
Hence, the series \eqref{eq_cluster_expansion} converges uniformly on $K_m \times K_n$ for sufficiently large $\la$.

Let $\{v_{h,\h}^i\}_{i\in I_{h,\h}}$ be a basis of $F_{h,\h}$ and $\{v_i^{h,\h}\}_{i\in I_{h,\h}}$ be the dual basis.
For $(h,\h)=(0,0)$, take $\vac$ as the basis of $F_{0,0}=\bbC\vac$ (see 
Proposition \ref{prop_unitary_spec}).
Set $\Delta h' = h_1+\dots+h_m$ and $\Delta h = h_{m+1}+\dots +h_{m+n}$.\\
Since 
$\zeta_{1,v}^{\Delta h'-h'}\zee_{1,v}^{\Delta \h'-\h'}\langle v_j^{h',\h'}, Y(a_{1},\uze_{1,v})\dots Y(a_{m},\uze_{m,v})\vac\>$
is scale and rotation invariant, 
\begin{align*}
F_j = \zeta_{1,v}^{\Delta h'-h'}\zee_{1,v}^{\Delta\h'-\h'}\langle v_j^{h',\h'}, Y(a_{1},\uze_{1,v})\dots Y(a_{m},\uze_{m,v})\vac\>
\end{align*}
is a power series of $\frac{\zeta_{i+1,v}}{\zeta_{i,v}}$ and $\frac{\zee_{i+1,v}}{\zee_{i,v}}$ for $i\in\{1,\dots,m\}$, i.e., depending only on the ratio.
The same holds for $G_i = \zeta_{m+1,0}^{\Delta h-h}\zee_{m+1,0}^{\Delta h-\h}\langle v_i^{h,\h}, Y(a_{m+1},\uze_{m+1,0})\dots Y(a_{m+n},\uze_{m+n,0})\vac\>$.
Then,
\begin{align*}
\<\vac, &Y\left(
Y(a_{1},\uze_{1,v})\dots Y(a_{m},\uze_{m,v})\vac,\underline{\zeta_{v,0}} \right)
 Y(a_{m+1},\uze_{m+1,0})\dots Y(a_{m+n},\uze_{m+n,0})\vac\>\\
 &=
\sum_{h,\h,h',\h'} \Big(\sum_{i\in I_{h,\h},j\in I_{h',\h'}}
 \<\vac, Y\left(v_{h'\h'}^j,\underline{\zeta_{v,0}}\right)v_{h,\h}^i \> \< v_j^{h',\h'}, Y(a_{1},\uze_{1,v})\dots Y(a_{m},\uze_{m,v}) \vac \> \\
 &\qquad\qquad\qquad \<v_i^{h,\h}, Y(a_{m+1},\uze_{m+1,0})\dots Y(a_{m+n},\uze_{m+n,0})\vac\>\Big) \\
&=\zeta_{1,v}^{-\Delta h'}\zee_{1,v}^{-\Delta \h'}
\zeta_{m+1,0}^{-\Delta h}\zee_{m+1,0}^{-\Delta \h}
\sum_{h,\h,h',\h'} \left(\sum_{i\in I_{h,\h},j\in I_{h',\h'}}
\zeta_{1,v}^{h'}\zee_{1,v}^{\h'}\zeta_{m+1,0}^{h}\zee_{m+1,0}^{\h} \<\vac, Y(v_{h',\h'}^j,\underline{\zeta_{v,0}})v_{h,\h}^i \rangle F_i G_j\right).
\end{align*}
Since
$\zeta_{v,0}^{-h-h'}\zee_{v,0}^{-\h-\h'}\<\vac,Y(v_{h',\h'}^j,\underline{\zeta_{v,0}})v_{h,\h}^i \rangle$ is scale and rotation invariant, it is independent of $\zeta_{v,0}$ and $\bar{\zeta}_{v,0}$.
Set $c_{i,j} = \zeta_{v,0}^{-h-h'}\zee_{v,0}^{-\h-\h'}\<\vac,Y(v_{h',\h'}^j,\underline{\zeta_{v,0}})v_{h,\h}^i \rangle \in \bbC$.
Then, we have
\begin{align*}
\zeta_{1,v}^{h'}\zee_{1,v}^{\h'}\zeta_{m+1,0}^{h}\zee_{m+1,0}^{\h} \<\vac, Y(v_{h',\h'}^j,\underline{\zeta_{v,0}})v_{h,\h}^i \rangle 
= c_{i,j} \left(\frac{\zeta_{1,v}}{\zeta_{v,0}}\right)^{h'}\left(\frac{\zee_{1,v}}{\zee_{v,0}}\right)^{\h'}
\left(\frac{\zeta_{m+1,0}}{\zeta_{v,0}}\right)^{h} \left(\frac{\zee_{m+1,0}}{\zee_{v,0}}\right)^{\h}.
\end{align*}
Since $L(1)F_{1,0}=\Ld(1)F_{0,1}=0$ and 
\begin{align*}
\<\vac, Y(v_{h',\h'}^j,\underline{\zeta_{v,0}})v_{h,\h}^i \rangle = \zeta_{v,0}^{-2h}\zee_{v,0}^{-2\h} \langle \exp(-L(1)\zeta_{v,0}-\Ld(1)\zee_{v,0})v_{h',\h'}^j,  \exp(-L(1)\zeta_{v,0}^{-1}-\Ld(1)\zee_{v,0}^{-1}) v_{h,\h}^i\rangle,
\end{align*}
we have
$\<\vac, Y(v_{h',\h'}^j,\underline{\zeta_{v,0}})v_{h,\h}^i \rangle=0$ if $(h,\h)=(0,0), (h',\h')\neq (0,0)$ or $(h,\h) \neq (0,0), (h',\h') = (0,0)$.
Hence, 
\begin{align}
\<\vac, &Y\left(
Y(a_{1},\uze_{1,v})\dots Y(a_{m},\uze_{m,v}),\underline{\zeta_{v,0}} \right)
 Y(a_{m+1},\uze_{m+1,0})\dots Y(a_{m+n},\uze_{m+n,0})\vac\>\nonumber \\
 &=  \<\vac, Y(a_1,\uze_{1,v})\dots Y(a_m,\uze_{m,v})\vac\> \<\vac, Y(a_{m+1},\uze_{m+1,0})\dots Y(a_{m+n},\uze_{m+n,0})\vac\>\nonumber \\
 &\qquad + \zeta_{1,v}^{-\Delta h'}\zee_{1,v}^{-\Delta \h'} \zeta_{m+1,0}^{-\Delta h}\zee_{m+1,0}^{-\Delta \h} \nonumber \\
 &\qquad \quad\times
 \left(\sum_{\substack{(h,\h)\neq (0,0),\\(h',\h')\neq (0,0)}} \sum_{i\in I_{h,\h},j\in I_{h',\h'}}
c_{i,j} \left(\frac{\zeta_{1,v}}{\zeta_{v,0}}\right)^{h'}\left(\frac{\zee_{1,v}}{\zee_{v,0}}\right)^{\h'}
\left(\frac{\zeta_{m+1,0}}{\zeta_{v,0}}\right)^{h} \left(\frac{\zee_{m+1,0}}{\zee_{v0}}\right)^{\h}
F_i G_j\right). \label{eq_cluster_hh}
\end{align}
The sum $\sum_{\substack{(h,\h)\neq (0,0),\\(h',\h')\neq (0,0)}} \left|\sum_{i\in I_{h,\h},j\in I_{h',\h'}}
c_{i,j} \left(\frac{\zeta_{1,v}}{\zeta_{v,0}}\right)^{h'}\left(\frac{\zee_{1,v}}{\zee_{v,0}}\right)^{\h'}
\left(\frac{\zeta_{m+1,0}}{\zeta_{v,0}}\right)^{h} \left(\frac{\zee_{m+1,0}}{\zee_{v,0}}\right)^{\h}F_i G_j\right|$ converges locally uniformly in $U_{m+1,n+1}$,
and $(h,\h),(h',\h')$ run through the spectrum of $F$, i.e., $\{(h,\h)\in \bbR^2\mid F_{h,\h}\neq 0\}$.
By Proposition \ref{prop_unitary_spec} and Proposition \ref{prop_unitary_vacuum},
$h+\h \geq 0$ and $h+\h =0$ if and only if $h=\h=0$.
Hence, by the polynomial spectrum density \eqref{eq:density}, the series \eqref{eq_cluster_hh} does not have singularity at 
$\frac{\zeta_{1,v}}{\zeta_{v,0}}=0$ and $\frac{\zeta_{m+1,0}}{\zeta_{v,0}}=0$
 by Proposition \ref{prop_unitary_spec} and Proposition \ref{prop_unitary_vacuum},
we can take the limit of $\la \to \infty$, i.e., $\frac{\zeta_{1,v}}{\zeta_{v,0}}, \frac{\zeta_{m+1,0}}{\zeta_{v,0}} \to 0$, which proves the assertion.
\end{proof}

Now we prove (OS\ref{ax:clustering}).
By translation-invariance (OS\ref{ax:invariance}), it suffices to show
\begin{align*}
   \lim_{\lambda \to \infty} S_{m+n}^{(\pmb{a}_m, \pmb{b}_n)}(\Theta(f^{\pmb{a}_m}_{m,\pm \ri\la})^* \otimes g^{\pmb{b}_n}_{n})
   = S_m^{\pmb{a}_m}(\Theta (f^{\pmb{a}_m}_m)^*)S_n^{\pmb{b}_n}(g^{\pmb{b}_n}_n),
\end{align*}
for each $\pmb{a}_m, {\pmb{b}_n}$ and $f^{\pmb{a}_m}_m \in \scS^{\pmb{a}_m}_+(\bbR^{2m})$ and $g^{\pmb{b}_n}_n \in \scS^{\pmb{b}_n}_+(\bbR^{2n})$
with compact support $\bbR^{2n}_{\neq}$.

By Lemma \ref{lm:translateddomain}, we may assume that 
\begin{align*}
\mathrm{supp}(f^{\pmb{a}_m}_m) \subset U_m \quad\quad\text{and}\quad\quad \mathrm{supp}\, g^{\pmb{b}_n}_n \subset  U_n,
\end{align*}
since (OS\ref{ax:clustering}) is invariant under the simultaneous translation.

Set $K_{\pmb{a}_m}= \mathrm{supp}(\Theta (f^{\pmb{a}_m}_m)^*)$ and $K_{\pmb{b}_n}= \mathrm{supp}(g^{\pmb{b}_n}_n)$.
Let $\la \in \bbR$ be sufficiently large so that $(K_{\pmb{a}_m}\pm \ri\la) \times K_{\pmb{b}_n}  \subset X_{m+n}(\bbC)$.
By the definition of the distribution,
\begin{align*}
&S_{m+n}^{(\pmb{a}_m, \pmb{b}_n)}(\tau_{\mp \ri\lambda} \Theta(f^{\pmb{a}_m}_m)^* \otimes g^{\pmb{b}_n}_{n})\\
&= \int  S_{m+n}^{(\pmb{a}_m, \pmb{b}_n)}(\ze_1,\dots,\ze_{m+n}) 
(\tau_{\mp \ri\la} \Theta (f^{\pmb{a}_m}_m)^*)(\zeta_1,\cdots, \zeta_m)  g^{\pmb{b}_n}_{n}(\zeta_1,\cdots, \zeta_n)
d\tau_1d\xi_1\cdots d\tau_{m+n} d\xi_{m+n}\\
&= \int_{K_m \times K_n} S_{m+n}^{(\pmb{a}_m, \pmb{b}_n)}(\ze_1\pm \ri\la,\dots,\ze_{m}\pm \ri\la,\ze_{m+1},\dots,\zeta_{m+n}) 
\Theta(f^{\pmb{a}_m}_m)^*(\zeta_1,\cdots, \zeta_m) g^{\pmb{b}_n}_{n}(\zeta_1,\cdots, \zeta_n) \\
&\qquad\quad d\tau_1d\xi_1\cdots d\tau_{m+n} d\xi_{m+n}.
\end{align*}
Hence, by Proposition \ref{prop_nm_conv}, and by noticing that its assumption is rotation-invariant,
thus we can apply it to $\pm \ri\lambda$ instead of $\lambda \in \bbR$,
\begin{align*}
&\lim_{\lambda \to \infty}S_{m+n}^{(\pmb{a}_m, \pmb{b}_n)}(\tau_{\mp \ri\lambda} \Theta(f^{\pmb{a}_m}_m)^* \otimes g^{\pmb{b}_n}_{n})\\
&= \int_{K_m\times K_n} 
S_{m}^{\pmb{a}_m}(\ze_1,\dots,\ze_{m}) S_{n}^{\pmb{b}_n}(\ze_{m+1},\dots,\ze_{m+n}) 
\Theta(f^{\pmb{a}_m}_m)^*(\zeta_1,\cdots, \zeta_m) g^{\pmb{b}_n}_{n}(\zeta_1,\cdots, \zeta_n) \\
&\qquad\quad d\tau_1d\xi_1\cdots d\tau_{m+n} d\xi_{m+n}\\
&= S_{m}^{\pmb{a}_m}(\Theta(f^{\pmb{a}_m}_m)^*) S_{n}^{\pmb{b}_n}(g^{\pmb{b}_n}_{n}),
\end{align*}

This completes the proof of (OS\ref{ax:clustering}), and thus all the conformal OS axioms (OS\ref{ax:lineargrowth})--(OS\ref{ax:invarianceC}).

Note that in OS\ref{ax:clustering},
only translations in spatial directions are considered, but the above proof shows the clustering for translations in arbitrary directions.

\section{Examples}\label{examples}
In this section we will see that the family of conformal field theories (full vertex operator algebras) introduced in \cite{Moriwaki21}
satisfies the assumptions of the main theorem of this paper, that is, unitarity, local $C_1$-cofiniteness, and polynomial energy bounds.
We begin with a brief review of the construction of full vertex operator algebras in \cite[Section 6]{Moriwaki21}.

Let $L$ be an even lattice, that is, $L$ is a free abelian group of finite rank $n$ equipped with non-degenerate symmetric bilinear form,
$$(-,-)_\lat:L\times L \rightarrow \bbZ,$$
such that $(\al,\al)_\lat \in 2\bbZ$ for any $\al \in L$.
Note that we do not assume that $L$ is positive-definite.

Let $\{\al_i\}_{i=1,\dots,n}$ be a basis of $L$ and $\ep:L\times L \rightarrow \bbZ_2 = \bbR^\times$ be a (non-symmetric) bilinear form defined by
\begin{align*}
\ep(\al_i,\al_i)&= (-1)^{(\al,\al)_\lat/2}\quad\quad\text{ for all $i$}\\
\ep(\al_i,\al_j)&= (-1)^{(\al_i,\al_j)_\lat}\quad\quad\text{ if $i>j$}\\
\ep(\al_i,\al_j)&= 1 \quad\quad\quad\quad\quad\quad\text{ if $j > i$}.
\end{align*}
Then, $\ep(-,-)$ is a 2-cocycle $Z^2(L,\bbR^\times)$.
Let $\bbR[\hat{L}]= \bigoplus_{\al \in L} \bbR e_\al$ be an $\bbR$-algebra with the multiplication defined by
\begin{align*}
e_\al \cdot e_\be = \ep(\al,\be)e_{\al+\be}
\end{align*}
for $\al,\be\in L$. This algebra is called a \textit{twisted group algebra} introduced in \cite{FLM88VertexOperatorAlgebras}.
It is easy to show that
\begin{align}
\ep(\al,\be)=\ep(\al,-\be)=\ep(-\al,\be) 
\quad\text{ and }\quad \ep(\al,\al)=(-1)^{(\al,\al)_\lat/2}
\label{eq_minus_ep}
\end{align}
holds for any $\al,\be \in L$.
Define a linear map $\phi:\bbR[\hat{L}] \rightarrow \bbR[\hat{L}]$ by
\begin{align}
\phi(e_\al)= e_{-\al},\label{eq_grouplattice_theta}
\end{align}
which is an $\bbR$-algebra automorphism by \eqref{eq_minus_ep}.

Set
\begin{align*}
H= L \otimes_\bbZ \bbR,
\end{align*}
which is equipped with the symmetric bilinear form induced from $L$.
Let $P(H)$ be a set of $\bbR$-linear maps $p\in \End_\bbR H$ such that:
\begin{enumerate}
\item[P1)]
$p^2=p$, that is, $p$ is a projection;
\item[P2)]
The subspaces $\ker(1-p)$ and $\ker(p)$ are orthogonal to each other.
\end{enumerate}
Let $P_>(H)$ be a subset of $P(H)$
consisting of $p \in P(H)$ such that:
\begin{enumerate}
\item[P3)]
$\ker(1-p)$ is positive-definite and $\ker(p)$ is negative-definite.
\end{enumerate}
For $p\in P(H)$, set $\p=1-p$ and $H_l=\ker(\p)$ and $H_r=\ker(p)$.
We will construct a full vertex algebra $F_{L,p}$ for each $p\in P(H)$
and show that it is unitary if $p \in P_>(H)$.

Let $(n_L,m_L)$ be the signature of $H$. Then, the orthogonal group $O(H) \cong O(n_L,m_L,\bbR)$ acts on $P(H)$ and $P_>(H)$.
Then, as an $O(H)$-set,
\begin{align}
P_>(H) \cong O(n_L,m_L;\bbR) / O(n_L;\bbR)\times O(m_L;\bbR).\label{eq_ohset}
\end{align}

Let $p\in P(H)$.
Define the new bilinear forms $(-,-)_p: H\times H \rightarrow \bbR$
by 
\begin{align*}
(h,h')_p=(ph,ph')_\lat-(\p h,\p h')_\lat
\end{align*}
for $h,h' \in H$.
By (P1) and (P2), $(-,-)_p$ is non-degenerate.
Note that $(-,-)_p$ is positive-definite if and only if $p \in P_>(H)$.

Let $\hat{H}^p=\bigoplus_{n \in \bbZ} H\otimes t^n \oplus \bbR c$ be the affine Heisenberg Lie algebra associated with $({H},(-,-)_p)$
and $\hat{H}_{\geq 0}^p=\bigoplus_{n \geq 0} H\otimes t^n \oplus \bbR c$ a subalgebra of $\hat{H}^p$.
Define the action of $\hat{H}_{\geq 0}^p$ on the twisted group algebra
$\bbR[\hat{L}]= \bigoplus_{\al \in L} \bbR e_\al$ by
\begin{align*}
c e_\alpha&=e_\al \\
h\otimes t^n e_\alpha &=
\begin{cases}
0, &n \geq 1,\cr
(h,\al)_p e_\alpha, &n = 0
\end{cases}
\end{align*}
for $\al \in H$.
Let $F_{L,p}$ be the $\hat{H}^p$-module induced from $\bbR[\hat{L}]$.
Denote by $h(n)$ the action of $h \otimes t^n$ on $F_{L,p}$ for $n \in \bbZ$.
For $h \in H$, set 
\begin{align*}
h(\uz) &= \sum_{n \in \bbZ}( (ph)(n)z^{-n-1}+ (\p h)(n)\z^{-n-1}) \in
\End F_{L,p}[[z^\pm,\z^\pm]]\\
h^+(\uz) &=\sum_{n \geq 0}( (ph)(n)z^{-n-1}+ (\p h)(n)\z^{-n-1}) \\
h^-(\uz) &=\sum_{n \geq 0}( (ph)(-n-1)z^{n}+ (\p h)(-n-1)\z^{n}).\\
E^+(h,\uz)&=
\exp\biggl(-\sum_{n\geq 1}\left(\frac{p h(n)}{n}z^{-n}+\frac{\p h(n)}{n}\z^{-n}\right)
\biggr)\\
E^-(h,\uz)&=\exp\biggl(\sum_{n\geq 1}\left(\frac{p h(-n)}{n}z^{n} +\frac{\p h(-n)}{n}\z^{n}\right)
\biggr).
\end{align*}
For $h_r \in H_r$ and $h_l \in H_l$,
$h_r(\uz)$ and $h_l(\uz)$ are denoted by $h_l(z)$ and $h_r(\z)$, respectively.

Let $\al \in H$.
Denote by $l_{e_\al} \in \End \bbR[\hat{L}]$ the left multiplication by ${e_\al}$ and define the linear map
$z^{p\alpha} \z^{\p \alpha} :\bbR[\hat{L}] \rightarrow  \bbR[\hat{L}][z^\bbR,\z^\bbR]$
by $z^{p\alpha} \z^{\p \alpha}e_\beta= z^{(p\alpha,p\beta)_p}\z^{(\p\alpha,\p\beta)_p}e_\beta$ 
for $\beta \in L$.
Then, set
\begin{align*}
e_\al(\uz)&=E^-(\al,\uz)E^+(\al,\uz)l_{e_\al} z^{p \alpha} \z^{\p \alpha} \in \End\, F_{L,p}[[z^\pm,\z^\pm]][z^\bbR,\z^\bbR].
\end{align*}

By Poincar{\'e}-Birkhoff-Witt theorem,
$F_{L,p}$ is spanned by 
\begin{align*}
\{h_l^1(-n_1-1)\dots h_l^l(-n_l-1)h_r^1(-m_1-1)\dots h_r^k(-m_k-1)e_\al \},
\end{align*}
where $h_l^i \in H_l$, $n_i \in \bbZ_{\geq 0}$ and $h_r^j \in H_r$, $m_j \in \bbZ_{\geq 0}$
 for any $1\leq i \leq l$ and $1\leq j \leq k$ and $\al \in H$.
Then, a map
$Y:F_{L,p} \rightarrow \End\, F_{L,p}[[z^\pm,\z^\pm]][z^\bbR,\z^\bbR]$ is defined inductively as follows:
For $\al \in L$, define $Y({e_\al},\uz)$ by $Y({e_\al},\uz)={e_\al}(\uz)$.
Assume that $Y(v,\uz)$ is already defined for $v \in F_{L,p}$.
Then, for $h_r \in H_r$ and $h_l \in H_l$ and $n,m \in \bbZ_{\geq 0}$,
$Y(h_l(-n-1)v,\uz)$ and $Y(h_r(-m-1)v,\uz)$ is defined by 
\begin{align*}
Y(h_l(-n-1)v,\uz)&=\Bigl(\frac{1}{n!}\frac{d}{dz}^n h_l^-(z)\Bigr)Y(v,\uz)+Y(v,\uz)\Bigl(\frac{1}{n!}\frac{d}{dz}^n h_l^+(z)\Bigr) \\
Y(h_r(-m-1)v,\uz)&=\Bigl(\frac{1}{m!}\frac{d}{d\z}^n h_r^-(\z)\Bigr)Y(v,\uz)+Y(v,\uz)\Bigl(\frac{1}{m!}\frac{d}{d\z}^n h_r^+(\z)\Bigr).
\end{align*}

Set
\begin{align*} 
\vac=1\otimes e_0, \quad
\nu  = \frac{1}{2}\sum_{i=1}^{\dim H_l} h_l^i (-1)h_l^i,\quad
\bar{\nu} = \frac{1}{2} \sum_{j=1}^{\dim H_r} h_r^j (-1)h_r^j,
\end{align*}
where $h_l^i$ and $h_r^j$ is an orthonormal basis of $H_l$ and $H_r$ with respect to the bilinear form $(-,-)_p$.
Set $F=F_{L,p}$ and $$F_{h,\h}^\al=\{v \in G\;|\; \nu(1,-1)v=h v, \bar{\nu}(-1,1)v=\h v, h(0)v=(\al,h)_p v \text{ for all } h \in H \}$$
for $h,\h \in \bbR$ and $\al \in H$. Then,
$e_\al \in F_{\frac{1}{2}(p\al,p\al)_p,\frac{1}{2}(\p\al,\p\al)_p}^\al$.

Then, by \cite[Theorem 4.14, Proposition 5.11, Theorem 6.5]{Moriwaki21}, we have:
\begin{proposition}\label{standard}
For $p \in P(H)$, $(F_{L,p},Y,\vac)$ is a full vertex algebra over $\bbR$.
If $p \in P_>(H)$, then $(L(0),\Ld(0))$-eigenvalues are non-negative, and $(F_{L,p},Y,\vac,\nu,\bar{\nu})$ is a full vertex operator algebra over $\bbR$.
Moreover, the isomorphism classes of full vertex operator algebras are parametrized by
\begin{align*}
\mathrm{Aut}L \backslash P_>(H) \cong \mathrm{Aut}L \backslash O(n_L,m_L;\bbR) / O(n_L;\bbR)\times O(m_L;\bbR),
\end{align*}
where $\mathrm{Aut}L \subset O(n_L,m_L;\bbR)$ is the automorphism group of the lattice $L$.
\end{proposition}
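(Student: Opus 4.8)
The plan is to obtain the three assertions from the lattice construction of \cite{Moriwaki21} --- \cite[Theorem 4.14]{Moriwaki21} for the full vertex algebra structure, \cite[Proposition 5.11]{Moriwaki21} for the conformal structure and \cite[Theorem 6.5]{Moriwaki21} for the classification --- and to supply the points not stated verbatim there.

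First I would check that $(F_{L,p},Y,\vac)$ is a full vertex algebra over $\bbR$. The operators $e_\al(\uz)=E^-(\al,\uz)E^+(\al,\uz)l_{e_\al}z^{p\al}\z^{\p\al}$ are the ``full'' (non-chiral) analogue of the Frenkel--Lepowsky--Meurman lattice vertex operators, the holomorphic and antiholomorphic oscillators being split along $H=H_l\oplus H_r$. Axioms (FV1) and (FV2) are immediate from the construction; for (FV2) one uses that $e_0(\uz)$ has trivial $E^\pm$ factors, trivial $l_{e_0}$ and trivial $z^{0}\z^{0}$. For (FV3) I would compute the normal-ordered product of two such operators: the $E^\pm$ factors produce an analytic prefactor of the form $(\ze_1-\ze_2)^{(p\al,p\be)_p}(\zee_1-\zee_2)^{(\p\al,\p\be)_p}$, the $l_{e_\al}$ factors contribute the $2$-cocycle $\ep$, and the $z^{p\al}\z^{\p\al}$ factors contribute the monodromy; using the identities \eqref{eq_minus_ep} and the standard bounds on Heisenberg Fock modules one obtains absolute convergence on the three required domains to a single real-analytic function $\mu$ on $Y_2(\bbC)$, and the inductive extension of $Y$ to the remaining Poincar\'e--Birkhoff--Witt spanning vectors together with the iterate property go through as in \cite[Theorem 4.14]{Moriwaki21}. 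I expect this verification of (FV3) --- the convergence of the iterated vertex operators to a common analytic function --- to be the main technical obstacle.

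Next, for $p\in P_>(H)$ I would observe that (P3) makes $(-,-)_p$ positive-definite, so the conformal weight $\bigl(\tfrac12(p\al,p\al)_p,\tfrac12(\p\al,\p\al)_p\bigr)$ of $e_\al$ lies in $\bbR_{\ge0}^2$ and the creation modes $h(-n-1)$ only raise weights; this gives the non-negativity of the $(L(0),\Ld(0))$-spectrum, and combined with the finiteness of each graded piece of the Fock space and of the set $\{\al\in L:\tfrac12(p\al,p\al)_p+\tfrac12(\p\al,\p\al)_p\le H\}$ for each $H$ (again by positive-definiteness) it yields axiom (FVOA5). The Virasoro relations for $\{\nu(n,-1)\}$ and $\{\bar\nu(-1,n)\}$ with central charges $c=\dim H_l$, $\bar c=\dim H_r$, and the identities $\nu(0,-1)=D$, $\bar\nu(-1,0)=\D$, $\nu(1,-1)|_{F_{h,\h}}=h$, $\bar\nu(-1,1)|_{F_{h,\h}}=\h$, are the usual Sugawara computations; these establish (FVOA1)--(FVOA4), so $(F_{L,p},Y,\vac,\nu,\bar\nu)$ is a full vertex operator algebra over $\bbR$.

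Finally, for the parametrization I would first note that $O(H)$ acts transitively on $P_>(H)$ with the stabilizer of a given $p$ equal to $O(H_l)\times O(H_r)\cong O(n_L;\bbR)\times O(m_L;\bbR)$, which is \eqref{eq_ohset}. Each $g\in\mathrm{Aut}L\subset O(H)$ then lifts to an isomorphism $F_{L,p}\xrightarrow{\ \sim\ }F_{L,gpg^{-1}}$ of full vertex operator algebras over $\bbR$, acting as $g$ on $H$ and by $e_\al\mapsto\pm e_{g\al}$ with the sign fixed by compatibility with $\ep$. Conversely, any full vertex operator algebra isomorphism $F_{L,p}\cong F_{L,p'}$ identifies the canonical conformal vectors and the Heisenberg sub-VOAs $\ker\D$, $\ker D$, hence preserves the $\bbR^2$-grading and the $H(0)$-grading, so that $L$ --- the set of $H(0)$-weights occurring in $F$ --- is recovered together with an automorphism of $L$ carrying $p$ to $p'$; this recognition step is carried out in \cite[Theorem 6.5]{Moriwaki21}. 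Therefore the isomorphism classes are in bijection with $\mathrm{Aut}L\backslash P_>(H)$, which by \eqref{eq_ohset} equals $\mathrm{Aut}L\backslash O(n_L,m_L;\bbR)/O(n_L;\bbR)\times O(m_L;\bbR)$. Besides (FV3), the other delicate point is this recovery of $L$ from the abstract full VOA.
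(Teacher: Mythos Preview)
Your proposal is correct and follows precisely the paper's approach: the paper simply cites \cite[Theorem 4.14, Proposition 5.11, Theorem 6.5]{Moriwaki21} for the three assertions, and you identify the same three references while additionally sketching how each piece (the full vertex algebra axioms, the Sugawara conformal structure under positive-definiteness, and the recovery of $L$ and the orbit description) fits together. Your elaboration of the non-negativity of weights and of (FVOA5) from positive-definiteness of $(-,-)_p$, and of the double-coset description via the stabilizer, is more explicit than what the paper records but is exactly the intended content of those citations.
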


Hereafter we assume that $p \in P_>(H)$.
The $\bbR$-algebra involution \eqref{eq_grouplattice_theta} can extend to 
an automorphism of the full vertex algebra $\phi:F_{L,p} \rightarrow F_{L,p}$ by
\begin{align*}
\phi: &h_l^1(-n_1-1)\dots h_l^l(-n_l-1)h_r^1(-m_1-1)\dots h_r^k(-m_k-1)e_\al\\
&\mapsto (-1)^{l+k} h_l^1(-n_1-1)\dots h_l^l(-n_l-1)h_r^1(-m_1-1)\dots h_r^k(-m_k-1)e_{-\al}
\end{align*}
(see \cite[Proposition 5.13]{Moriwaki21} applied to the automorphism $\phi$ \eqref{eq_grouplattice_theta} of the AH pair
(the twisted group algebra) giving an automorphism of the G-full VA and by \cite[Theorem 5.8]{Moriwaki21}
an automorphism of $(F_{L,p}, Y, \vac)$ by functoriality).
Note that $F_{L,p}^\bbC = F_{L,p} \otimes_{\bbR}\bbC$ is naturally a full vertex algebra over $\bbC$.
For $i=0,1$, set $F_i =\{a\in F_{L,p}\mid \phi(a)=(-1)^i a\}$,
which are regarded as subspaces of $F_{L,p}^\bbC$,
 and
\begin{align*}
\tilde{F} = F_0 \oplus \ri F_1 \subset F_{L,p}^\bbC.
\end{align*}
Then, $\tilde{F}$ is a real subalgebra of $F_{L,p}^\bbC$.
It is clear that $\nu, \bar\nu \in F_0 \subset \tilde{F}$
and $\tilde{F}$ satisfy the assumption of Proposition \ref{prop_inv_bilinear}.
Thus, $\tilde{F}$ has a unique non-degenerate symmetric invariant bilinear form:
\begin{align*}
\langle-,-\rangle:\tilde{F} \otimes \tilde{F} \rightarrow \bbR
\end{align*}
with $\langle \vac,\vac\rangle=1$, which is a restriction of that of $F_{L,p} \otimes_{\bbR}\bbC$.
\begin{proposition}\label{prop_positive_definite}
The bilinear form on $\tilde{F}$ is positive-definite.
\end{proposition}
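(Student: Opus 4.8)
The plan is to compute the invariant form $\langle-,-\rangle$ explicitly on a PBW basis of $\tilde F$ adapted to $\phi$, and to observe that the sign $(-1)^d$ (where $d$ is the total number of creation operators in a monomial) produced by the invariant-form adjoint of the Heisenberg modes is exactly cancelled by the factor $\ri^2=-1$ attached to the summand $\ri F_1$ of $\tilde F=F_0\oplus\ri F_1$. Since $\langle-,-\rangle$ on $\tilde F$ is the restriction of the $\bbC$-bilinear invariant form on $F_{L,p}^\bbC=\tilde F\otimes_\bbR\bbC$, all computations may be carried out inside $F_{L,p}^\bbC$ in the PBW basis $\{u_d\otimes e_\al\}$, where $u_d$ runs over ordered monomials of degree $d$ in the modes $h_l^i(-n-1),\,h_r^j(-m-1)$ and $\al\in L$; the $\phi$-eigenbasis then gives a real basis of $\tilde F$ consisting of the $u_d\otimes e_0$ with $d$ even, the $\ri(u_d\otimes e_0)$ with $d$ odd, and, for $\al$ in a set of representatives of the pairs $\{\pm\al\}$, the vectors $u_d\otimes e_\al+(-1)^d u_d\otimes e_{-\al}$ and $\ri\big(u_d\otimes e_\al-(-1)^d u_d\otimes e_{-\al}\big)$.

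First I would establish the relevant adjointness relations with respect to $\langle-,-\rangle$. Applying the quasi-primary identity \eqref{eq:hermite} to the weight-$(1,0)$ vectors $h\in H_l$ and the weight-$(0,1)$ vectors $h\in H_r$ yields $\langle h(n)u,v\rangle=-\langle u,h(-n)v\rangle$ for all $n\in\bbZ$ and $h\in H$; in particular $h(0)^\dagger=-h(0)$, which together with non-degeneracy of $(-,-)_p$ forces $\langle F^\al,F^\be\rangle=0$ unless $\al+\be=0$. Applying \eqref{eq:hermite} to the quasi-primary vector $e_\al$ (of weight $\big(\tfrac12(p\al,p\al)_p,\tfrac12(\p\al,\p\al)_p\big)$, with $h_{e_\al}-\h_{e_\al}=\tfrac12(\al,\al)_\lat\in\bbZ$) and matching the leading term of $e_\al(\uz)e_{-\al}=E^-(\al,\uz)E^+(\al,\uz)\,l_{e_\al}z^{p\al}\z^{\p\al}e_{-\al}$, using $\ep(\al,-\al)=\ep(\al,\al)=(-1)^{(\al,\al)_\lat/2}$ from \eqref{eq_minus_ep}, one obtains $\langle e_\al,e_{-\al}\rangle=1$ (and $\langle e_\al,e_\be\rangle=0$ for $\be\neq-\al$). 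Finally, conformal invariance of $\langle-,-\rangle$ makes distinct $(L(0),\Ld(0))$-eigenspaces orthogonal.

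Next I would carry out the Wick computation. Because the PBW monomials involve only strictly negative modes and $h(n)e_{\pm\al}=0$ for $n\ge1$, moving the creation operators of $u_d$ into the second argument via $h(-n)^\dagger=-h(n)$, where they act as annihilation-type operators, leaves only the fully contracted $c$-number terms; these force $d=d'$ and reproduce, up to the overall sign $(-1)^d$ coming from $h(-n)^\dagger=-h(n)$, the standard bosonic Fock form $G^{(d)}$ — the Gram form of degree-$d$ monomials with $h(-n)$ adjoint to $+h(n)$ and $\langle e_0,e_0\rangle=1$ — built from $(-,-)_p$. As $p\in P_>(H)$, the form $(-,-)_p$ is positive-definite; since moreover the $H_l$- and $H_r$-modes commute and are $(-,-)_p$-orthogonal, $G^{(d)}$ is a tensor product of positive-definite forms, hence positive-definite on degree-$d$ monomials. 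Thus
\begin{align*}
\langle u_d\otimes e_\al,\ v_{d'}\otimes e_{-\al}\rangle=\delta_{d,d'}\,(-1)^d\,G^{(d)}(u_d,v_{d'}),\qquad \langle F^\al,F^\al\rangle=0\ \ (\al\neq0).
\end{align*}

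Finally I would assemble positivity on $\tilde F$. Since $\phi$ commutes with $L(0),\Ld(0)$ and acts on the charge by $\al\mapsto-\al$, $\tilde F$ is the orthogonal direct sum of $\tilde F\cap F^0$ and the blocks $\tilde F\cap(F^\al\oplus F^{-\al})$ over unordered pairs $\{\pm\al\}$, each split further into finite-dimensional conformal-weight pieces. On $\tilde F\cap F^0$ the natural basis consists of $u_d\otimes e_0$ with $d$ even (contributing the sign $(-1)^d=+1$) and $\ri\,(u_d\otimes e_0)$ with $d$ odd (contributing $\ri^2(-1)^d=+1$), so the form restricts to $G^{(d)}>0$ on each degree; distinct degrees are orthogonal since they have distinct $L(0)+\Ld(0)$ within each conformal-weight piece only after the full contraction, i.e. they pair to $0$. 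On a block $\tilde F\cap(F^\al\oplus F^{-\al})$ with $\al\neq0$, the basis $u_d\otimes e_\al+(-1)^d u_d\otimes e_{-\al}$ and $\ri\big(u_d\otimes e_\al-(-1)^d u_d\otimes e_{-\al}\big)$ makes the Gram matrix block-diagonal with both diagonal blocks equal to $2G^{(d)}>0$. Hence $\langle-,-\rangle$ is positive-definite on every block, and therefore on $\tilde F$. The main obstacle is the lattice-vertex-algebra bookkeeping in the second step — pinning down $\langle e_\al,e_{-\al}\rangle$, and more generally the matrix elements containing an $e_\al$, by matching the leading coefficient of $e_\al(\uz)e_{-\al}$, with its prefactor $z^{p\al}\z^{\p\al}$ and the $2$-cocycle $\ep$, against \eqref{eq:hermite}; everything else is routine Heisenberg combinatorics.
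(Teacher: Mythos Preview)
Your proposal is correct and follows essentially the same route as the paper: decompose $\tilde F$ into charge blocks $\{\pm\al\}$, compute $\langle e_\al,e_{-\al}\rangle$ via the invariance formula and the cocycle identity $\ep(\al,-\al)=(-1)^{(\al,\al)_\lat/2}$, and then reduce to the positive-definiteness of the Heisenberg Fock form built from $(-,-)_p$, with the sign $(-1)^d$ from the adjoint relation $h(-n)^\dagger=-h(n)$ cancelled by the $\ri$-twist in $\tilde F=F_0\oplus\ri F_1$. The paper's proof is terser---it computes $\langle e_\al+e_{-\al},e_\al+e_{-\al}\rangle=2$ directly and then simply invokes ``the commutator relation of the affine Heisenberg Lie algebra'' for the rest---whereas you spell out the Wick mechanism and the sign bookkeeping explicitly; your PBW basis $u_d\otimes e_\al+(-1)^d u_d\otimes e_{-\al}$, $\ri(u_d\otimes e_\al-(-1)^d u_d\otimes e_{-\al})$ is the same as the paper's $\ri^{l+k}u\cdot(e_\al+e_{-\al})$, $\ri^{l+k+1}u\cdot(e_\al-e_{-\al})$ up to relabelling and sign.

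One small slip: the sentence ``distinct degrees are orthogonal since they have distinct $L(0)+\Ld(0)$'' is not the right reason (e.g.\ $h(-1)^2e_0$ and $h(-2)e_0$ share a conformal weight but have $d=2,1$). The orthogonality in $d$ that you actually need---and that your Wick argument already establishes---comes from the mode grading: $[h(m),h'(n)]$ vanishes unless $m+n=0$, so a full contraction forces $d=d'$. This does not affect the validity of your argument.
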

\begin{proof}
We first show that the bilinear form is positive-definite on
\begin{align*}
\bbR (e_\al+e_{-\al}) \oplus \bbR \ri(e_\al-e_{-\al})
\end{align*}
for any $\al \in L$.
We have
\begin{align*}
&\langle e_\al+e_{-\al}, e_\al+e_{-\al}\rangle \\
&=\lim_{z\to 0} \langle Y(e_\al+e_{-\al},\uz)\vac, e_\al+e_{-\al}\rangle\\
&=\lim_{z\to 0} (-1)^{(p\al,p\al)_\lat^2/2+(\p\al,\p\al)_\lat^2/2}z^{-(p\al,p\al)_\lat^2}\z^{(\p\al,\p\al)_\lat^2} \langle \vac, Y(e_\al+e_{-\al},\uz^{-1})(e_\al+e_{-\al})\rangle\\
&= (-1)^{(\al,\al)_\lat/2} \langle \vac, l_{e_\al}e_{-\al}+l_{e_{-\al}}e_{\al})\rangle \\
&= (-1)^{(\al,\al)_\lat/2}(\ep(\al,-\al)+\ep(-\al,\al)) \langle \vac,\vac \rangle =2,
\end{align*}
where we used \eqref{eq_minus_ep} in the last line.
Similarly,
\begin{align*}
\langle \ri (e_\al-e_{-\al}), \ri(e_\al-e_{-\al})\rangle =2\quad
\text{ and }\quad
\langle (e_\al+e_{-\al}), \ri(e_\al-e_{-\al})\rangle=0.
\end{align*}
It is clear that $\tilde{F}$ is spanned by
\begin{align*}
V_\al^+ &= \mathrm{Span}_\bbR \{ \ri^{l+k} h_l^1(-n_1-1)\dots h_l^l(-n_l-1)h_r^1(-m_1-1)\dots h_r^k(-m_k-1)(e_\al+e_{-\al})\}\\
V_\al^- &= \mathrm{Span}_\bbR \{ \ri^{l+k+1} h_l^1(-n_1-1)\dots h_l^l(-n_l-1)h_r^1(-m_1-1)\dots h_r^k(-m_k-1)(e_\al-e_{-\al})\}.
\end{align*}
Then, $V_\al^{\eta_1}$ and $V_\be^{\eta_2}$ are orthogonal if 
$\eta_1 \neq \eta_2$ or $\al \neq \pm \be$.
Since $(-,-)_p$ is positive-definite for any $p \in P_>(H)$,
by the commutator relation of affine Heisenberg Lie algebra,
$V_\al^+$ and $V_\al^-$ are both positive-definite.
\end{proof}
There is a unique anti-linear involution $\phi:F_{L,p}^\bbC \rightarrow F_{L,p}^\bbC$ whose fixed point real subalgebra is $\tilde{F}$.
Hence, $\tilde{F}$ is a unitary full VOA.

The (chiral) polynomial energy bounds for intertwining operators among VOA modules are studied by \cite{Toledano-LaredoThesis, Gui19}.
The following proposition is clear from the definition of the polynomial energy bounds:
\begin{proposition}
\label{prop_chiral_full_growth}
Let $F$ be a locally $C_1$-cofinite unitary full VOA. Assume that
\begin{enumerate}
\item
$M_i$ and $\overline{M}_i$ in the definition of local $C_1$-cofiniteness are unitary modules of the VOA $V$ and $W$;
\item
All the VOA intertwining operators in \eqref{eq_int_C1_decomp} satisfy the chiral polynomial energy bounds.
\end{enumerate}
Then, $F$ satisfies the polynomial energy bounds.
\end{proposition}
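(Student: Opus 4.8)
The plan is to reduce the full polynomial energy bound to the chiral ones: use local $C_1$-cofiniteness to decompose the full vertex operator into chiral intertwining operators, and then estimate factor by factor using hypothesis (2).

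First I would reduce to homogeneous vectors. By bilinearity of $a(r,s)$ it suffices to bound $\|a(r,s)b\|$ when $a = a_l \otimes a_r \in M_i \otimes \overline{M}_i$ and $b = b_l \otimes b_r \in M_j \otimes \overline{M}_j$ with respect to the decomposition $F \cong \bigoplus_i M_i \otimes \overline{M}_i$, the general case following by writing arbitrary vectors as finite sums. By \eqref{eq_int_C1_decomp}, $Y(a_l \otimes a_r,\uz)$ maps $M_j \otimes \overline{M}_j$ into $\bigoplus_{k \in I(i,j)} M_k \otimes \overline{M}_k$ and is there a finite sum of operators of the form $\cY(a_l,z) \otimes \overline{\cY}(a_r,\z)$ with $\cY \in I\binom{M_k}{M_i M_j}$, $\overline{\cY} \in I\binom{\overline{M}_k}{\overline{M}_i \overline{M}_j}$ and $k \in I(i,j)$. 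Extracting the coefficient of $z^{-r-1}\z^{-s-1}$ then writes $a(r,s)(b_l \otimes b_r)$ as a corresponding finite sum of vectors $\cY(a_l)_r b_l \otimes \overline{\cY}(a_r)_s b_r \in M_k \otimes \overline{M}_k$.

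Next I would use that the unitary structure of $F$ is isometrically compatible with this $V \otimes W$-decomposition — the $M_i, \overline{M}_i$ being unitary by hypothesis (1), and the summands $M_k \otimes \overline{M}_k$ being mutually orthogonal (via Schur's lemma on each isotypic component) — so that $\|a(r,s)b\|^2$ equals the sum over $k \in I(i,j)$ of $\|\cY(a_l)_r b_l\|^2\,\|\overline{\cY}(a_r)_s b_r\|^2$. Applying the chiral polynomial energy bounds of hypothesis (2), each holomorphic factor obeys $\|\cY(a_l)_r b_l\| \le M(|r|+1)^p \|(1+L(0))^q b_l\|$ and each antiholomorphic one $\|\overline{\cY}(a_r)_s b_r\| \le \overline{M}(|s|+1)^{\overline{p}}\|(1+\Ld(0))^{\overline{q}} b_r\|$, with constants taken uniform over the finite set $k \in I(i,j)$. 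Since $L(0)$ and $\Ld(0)$ act on $M_j \otimes \overline{M}_j$ as the chiral conformal gradings tensored with the identity and are non-negative on a unitary full VOA (Proposition \ref{prop_unitary_spec}), the elementary inequalities $(1+L(0))^q (1+\Ld(0))^{\overline{q}} \le (1+L(0)+\Ld(0))^{2\max(q,\overline{q})}$ (functional calculus for commuting positive operators) and $(|r|+1)^p(|s|+1)^{\overline{p}} \le (|r|+|s|+1)^{p+\overline{p}}$, together with bounding the finite sum over $I(i,j)$ by its cardinality times the largest term, assemble these estimates into $\|a(r,s)b\| \le M_a(|r|+|s|+1)^{p_a}\|(L(0)+\Ld(0)+\bb1)^{q_a} b\|$, which is exactly \eqref{eq:PEB}.

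Apart from the bookkeeping of exponents, the one point requiring care is the uniformity of the chiral energy-bound constants: since \eqref{eq_int_C1_decomp} may involve infinitely many intertwining operators (a family for each pair $(i,j)$), hypothesis (2) must be read as asserting that for each fixed $a_l \in M_i$ a single triple $(M,p,q)$ controls all the $\cY$ that occur, and likewise on the antiholomorphic side. This uniformity is automatic whenever $F$ decomposes into finitely many summands, and in the examples of Section \ref{examples} it can be read off directly from the explicit lattice vertex operators $e_\alpha(\uz)$; securing it in general is the main thing one has to check before the rest of the argument goes through mechanically.
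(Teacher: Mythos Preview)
The paper does not supply a proof for this proposition beyond declaring it ``clear from the definition of the polynomial energy bounds.'' Your proposal is precisely the natural way to make that assertion concrete: decompose $Y$ via \eqref{eq_int_C1_decomp} into tensor products of chiral intertwiners and estimate each factor using hypothesis (2). That is the intended argument.

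Two small remarks. First, the equality you write for $\|a(r,s)b\|^2$ as a sum over $k$ of products of squared norms is not literally justified: orthogonality between distinct $M_k \otimes \overline{M}_k$ gives Pythagoras across $k$, but within a single $k$ the component of $Y$ may itself be a finite sum $\sum_\alpha \cY_\alpha \otimes \overline{\cY}_\alpha$ of simple tensors (this is what \eqref{eq_int_C1_decomp} says), so you need the triangle inequality there rather than Pythagoras. This costs only a finite multiplicative constant and does not affect the conclusion. Second, your final paragraph correctly isolates the only genuine content of the proposition: the constants $(M,p,q)$ in the chiral bounds must be uniform as $j$ (and hence the family of intertwiners) varies, not merely over the finite set $I(i,j)$ for a fixed $j$, since \eqref{eq:PEB} demands a single triple $(M_a,p_a,q_a)$ valid for all $b \in F$. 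The paper does not spell this out; as you observe, it is automatic when $I$ is finite and can be checked by hand for the Heisenberg examples of Section~\ref{examples}, which is the only place the proposition is invoked.
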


Then, we have:
\begin{theorem}\label{thm_unitarity_lattice}
For any $p \in P_>(H)$, $(F_{L,p}^\bbC,Y,\vac,\nu,\bar\nu,\phi)$ is a unitary full vertex operator algebra and satisfies
the local $C_1$-cofiniteness condition, the polynomial energy bounds and the polynomial spectral density.
\end{theorem}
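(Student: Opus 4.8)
The plan is to treat the four assertions in turn, noting that unitarity is essentially already in place from the discussion preceding the theorem: the $\bbR$-algebra involution \eqref{eq_grouplattice_theta} extends to an anti-linear involution $\phi$ of the complexification $F_{L,p}^\bbC$ whose fixed real subalgebra is $\tilde F$, and since $\tilde F_{0,0}=\bbC\vac$ --- the only $\al\in L$ with $(p\al,p\al)_p=(\p\al,\p\al)_p=0$ is $\al=0$, because $(\bullet,\bullet)_p$ is positive-definite when $p\in P_>(H)$ --- the full VOA $F_{L,p}^\bbC$ is simple by Proposition \ref{prop_unitary_vacuum} (granting positivity), carries a unique normalized invariant bilinear form by Proposition \ref{prop_inv_bilinear}, whose positive-definiteness on $\tilde F$ is Proposition \ref{prop_positive_definite}; hence $\<\phi(\bullet),\bullet\>$ is positive-definite and $(F_{L,p}^\bbC,\phi)$ is a unitary full VOA. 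What remains is to verify local $C_1$-cofiniteness, polynomial spectral density, and polynomial energy bounds.

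For local $C_1$-cofiniteness I would take $V=M_{H_l}$ and $W=M_{H_r}$, the rank-$n_L$ and rank-$m_L$ Heisenberg vertex operator algebras attached to $(H_l,(\bullet,\bullet)_p)$ and $(H_r,(\bullet,\bullet)_p)$, which sit inside $\ker\Ld(-1)$ and $\ker L(-1)$. From the construction of $F_{L,p}$ via $\hat H^p$ one reads off a $V\otimes W$-module decomposition $F_{L,p}^\bbC\cong\bigoplus_{\al\in L}M_{H_l}(p\al)\otimes M_{H_r}(\p\al)$ into tensor products of Fock modules, indexed by the countable set $I=L$. Each Heisenberg Fock module $M$ is $C_1$-cofinite with one-dimensional quotient --- any negative mode $h(-n)$, $n\ge1$, equals $\tfrac1{(n-1)!}$ times the degree $(-1)$ mode of the descendant $L(-1)^{n-1}h\in\bigoplus_{k\ge1}V_k$, hence maps $M$ into $C_1(M)$ --- and the full vertex operator $Y(e_\al,\uz)$ together with its Heisenberg descendants decomposes as tensor products of chiral intertwining operators realizing the pointwise fusion $M(\la)\times M(\mu)\to M(\la+\mu)$, so that the index sets $I(\al,\be)=\{\al+\be\}$ in \eqref{eq_int_C1_decomp} are finite. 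Note that the alternative criterion via $C_2$-cofiniteness of the chiral components is unavailable here, since $\ker L(-1)$ is in general (a vertex subalgebra of) a Heisenberg algebra, which is not $C_2$-cofinite.

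For polynomial spectral density I would use that $e_\al$ has bidegree $(h_\al,\h_\al)=\big(\tfrac12(p\al,p\al)_p,\tfrac12(\p\al,\p\al)_p\big)$ with $h_\al,\h_\al\ge0$ and $h_\al+\h_\al=\tfrac12(\al,\al)_p$, a positive-definite quadratic form on the rank-$(n_L+m_L)$ lattice $L$, while the Heisenberg descendants shift the bidegree by an arbitrary element of $\bbZ_{\ge0}^2$, every shift being realized. Hence a pair $(h,\h)$ with $N\le h+\h<N+1$ and $F_{h,\h}\ne0$ must come from some $\al$ with $(\al,\al)_p\le 2(N+1)$ --- of which there are $O\big(N^{(n_L+m_L)/2}\big)$ by the lattice point count for a positive-definite form --- together with a splitting of an integer $k$ lying within distance $1$ of $N-\tfrac12(\al,\al)_p$ into $k=k_l+k_r$, giving at most $N+1$ further choices; the total is $O\big(N^{(n_L+m_L)/2+1}\big)$, which is polynomial in $N$, so \eqref{eq:density} holds.

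The main obstacle is polynomial energy bounds. Here I would invoke Proposition \ref{prop_chiral_full_growth}, which reduces the task to checking that the $M_{H_l}(p\al),M_{H_r}(\p\al)$ are unitary modules --- immediate, since $(\bullet,\bullet)_p$ is positive-definite, so the canonical contravariant form on each Fock module is positive-definite --- and that every chiral intertwining operator occurring in \eqref{eq_int_C1_decomp} satisfies the chiral polynomial energy bounds. Those intertwiners are built from the Heisenberg currents $h(z)$, whose modes obey the elementary oscillator estimate $\|h(n)v\|\le c\,(|n|+1)^{1/2}\|(L(0)+1)^{1/2}v\|$ (a linear bound), and from the chiral components of the vertex operators $Y(e_\al,\uz)$, i.e.\ lattice-type exponential vertex operators between the Fock modules. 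For the latter I would fix an orthonormal basis of $H_l$ for $(\bullet,\bullet)_p$ to factor the chiral $Y(e_\al,z)$ into a product of rank-one exponential vertex operators in commuting oscillators (up to a cocycle), each of which satisfies polynomial --- indeed linear --- energy bounds by the argument used for the $\rU(1)$-current, cf.\ \cite{AGT23Pointed,CKLW18,CT23EnergyBounds}; a product estimate for energy bounds then gives chiral polynomial energy bounds for these exponential operators, and a standard closure argument as in \cite{CKLW18}, propagating the bound from $Y_l(v,z),Y_l(w,z)$ to $Y_l(v(n)w,z)$, extends it to all chiral intertwiners appearing in the decomposition; the right-moving side is identical. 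The delicate points --- which I expect to be the crux --- are controlling the constants $M_\al,p_\al,q_\al$ as the charge $\al$, hence the conformal weight $h_\al$, grows, and making the product and closure estimates uniform; once these are settled, Proposition \ref{prop_chiral_full_growth} yields \eqref{eq:PEB} and the proof is complete.
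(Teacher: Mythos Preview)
Your approach is correct and essentially the same as the paper's: reduce to the Heisenberg chiral components via Proposition~\ref{prop_chiral_full_growth}, observe the Fock decomposition gives local $C_1$-cofiniteness, and count lattice points in the positive-definite form $(\bullet,\bullet)_p$ for polynomial spectral density. The paper's proof is more terse on the energy bounds, simply citing \cite[Proposition 1.2.1 and Proposition 1.3.1]{Toledano-LaredoThesis} and \cite[Theorem A.2]{Gui19} for the chiral bounds on Heisenberg intertwining operators, rather than sketching the rank-one factorization and closure argument you outline. One unnecessary worry in your proposal: the concern about controlling $M_\al,p_\al,q_\al$ uniformly in the charge $\al$ is misplaced, since the definition of (PEB) in \eqref{eq:PEB} only requires, for each fixed $a\in F$, the existence of constants depending on $a$; no uniformity over the lattice is needed.
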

\begin{proof}
The canonical subvertex operator algebra $\ker \Ld(-1) \subset F$ contains the Heisenberg vertex algebra $M_{H_l}(0)$, which is generated by the weight one subspace $H_l$, and similarly $\overline{M}_{H_r}(0) \subset \ker L(-1)$.
For any $\al \in L$, $F^\al = \bigoplus_{n,m\in \bbZ_{\geq 0}}F_{n+\frac{1}{2}(p\al,p\al)_p,m+\frac{1}{2}(\p\al,\p\al)_p}^\al$ is the Verma module of $M_{H_l}(0)\otimes M_{H_r}(0)$ and is local $C_1$-cofinite.

The chiral polynomial energy bounds for Heisenberg modules are shown in \cite[Proposition 1.2.1 and Proposition 1.3.1]{Toledano-LaredoThesis} and \cite[Theorem A.2]{Gui19}. Hence, the polynomial energy bounds follows from Proposition \ref{prop_chiral_full_growth}.

It is clear that
\begin{align*}
\#\{(h,\h) &\in \bbR^2\mid n-1 \leq h \leq n, m-1 \leq \h \leq m, F_{h,\h}\neq 0 \}\\
&=\#\{\al \in L \mid (p\al,p\al)_p \leq n \text{ and } (\p\al,\p\al)_p \leq m\}
\end{align*}
for any $n,m\in \bbZ$.
Since $\ker p$ and $\ker \p$ are orthogonal,
\begin{align*}
\#\{\al \in L \mid (p\al,p\al)_p \leq n \text{ and } (\p\al,\p\al)_p \leq m\}
\leq \#\{\al \in L \mid (\al,\al)_p \leq n+m\}.
\end{align*}
Since $(\bullet,\bullet)_p$ is positive-definite, the right-hand-side can be estimated by the lattice points (the volume) of the $n_L+m_L=\mathrm{rank} H$ dimensional sphere.
Hence, there is a constant $C>0$ such that
\begin{align*}
\#\{\al \in L \mid (\al,\al)_p \leq n+m\} <C (n+m)^{\mathrm{rank} H}.
\end{align*}
From this it is straightforward that
$\#\{(h,\h) \in \bbR^2\mid N \leq h + \h \leq N+1, F_{h,\h}\neq 0 \}$
is bounded by a polynomial in $N$ as in \eqref{eq:density}.
\end{proof}

\begin{remark}
Note that $\ker \Ld(-1)$ generically coincides with the Heisenberg vertex operator algebra, however, at the rational points of 
\begin{align*}
\mathrm{Aut}L \backslash O(n_L,m_L;\bbR) / O(n_L;\bbR)\times O(m_L;\bbR),
\end{align*}
$\ker \Ld(-1)$ is a lattice vertex operator algebra. For example, for $L=\tw$, the unique even unimodular lattice of signature $(1,1)$, $\bbR_{> 0} \cong O(1,1;\bbR) / O(1;\bbR)\times O(1;\bbR)$
and $\mathrm{Aut}\tw \cong \bbZ_2\times \bbZ_2$ acts as $R \mapsto R^{-1}$ for $R\in \bbR_{>0}$,
which is the T-duality of string theory.
Then, we have
\begin{align*}
\ker \Ld(-1) =
\begin{cases}
 M_1(0) & \text{ if } R^2 \notin \bbQ\\
 V_{\sqrt{2pq}\bbZ} & \text{ if } R^2=p/q \text{ with }p,q\in \bbZ \text{ coprime},
\end{cases}
\end{align*}
where $M_1(0)$ is the rank one Heisenberg vertex operator algebra and $V_{\sqrt{2pq}\bbZ}$ is the lattice vertex operator algebra associated with the rank one lattice $\bbZ\al$, $(\al,\al)=2pq$ (for more detail see \cite[Section 6.3]{Moriwaki21}).
\end{remark}

\section{Outlook}\label{outlook}
We plan to investigate more examples of full VOAs such as the Ising model and framed algebras \cite{Moriwaki21Code},
and compare them with direct constructions of Wightman fields \cite{AGT23Pointed}.

We are partly motivated by the possibility to deform full CFT to massive models \cite{Zamolodchikov89-1}.
To do that, it would be a great help to have stronger tools that characterize the CFT such as functional integration measures \cite{GJ87}.
It would be interesting to see which unitary CFTs are associated with such measures.
In this regard, our proof shows the linear growth condition (OS\ref{ax:lineargrowth}) under (PEB) and (PSD),
yet we are unable to show another variant (E0$^{\prime\prime}$) of \cite{OS75} which asks
that the Schwinger functions $S_n$ extend to $\scS(\bbR^{2n})$, including the coinciding points.
To have an functional integration measure, it seems necessary that such extensions are possible,
cf.\! \cite[Proposition 5.1]{GH21APDEConstruction}, and considering fields with small scaling dimensions might help.

Related with this, such a deformation might be more conveniently performed on the Riemann sphere.
As a result, the deformed QFT should be defined on the de Sitter space \cite{BJM23, JT23Towards}.
For this purpose, it is desirable to have an analogue of the OS reconstruction for the sphere, cf.\! \cite{Schlingemann99Sphere}.
In addition, similar to the OS axiom framework, reflection positivity plays a role in the representation theory of symmetric Lie groups,
where by analytic continuation one aims to construct representations of the dual group by starting from those of the group,
by providing a new Hilbert structure to the original presentation Hilbert space.
In this way, it is interesting to study reflection positivity representations on their own,
cf.\! \cite{NO18ReflectionPositivity} for a general account on this topic and \cite{ANS22ReflectionPositivity, ANS24ReflectionPositivity}
for the prototypical examples $\bbZ, \bbR$, and the circle group $T_\beta$ for $\beta>0$.

\subsection*{Acknowledgements}
M.S.A. is a Humboldt Research Fellow supported by the Alexander von Humboldt Foundation. Early stages of this work have been carried on while M.S.A. was a JSPS International Research Fellow and she received support by the Grant-in-Aid Kakenhi n. 22F21312.
She is partially supported by INdAM--GNAMPA Project, codice CUP E53C23001670001 and GNAMPA--INdAM.
Y.M. is supported by Grant-in Aid for Early-Career Scientists (24K16911) and FY2023 Incentive Research Projects (Riken).
Y.T. is partially supported by the MUR Excellence Department Project MatMod@TOV awarded to the Department of Mathematics,
University of Rome ``Tor Vergata'' CUP E83C23000330006, the University of Rome ``Tor Vergata'' funding OAQM
CUP E83C22001800005, INdAM--GNAMPA Project codice CUP E53C23001670001 ``Operator algebras and infinite quantum systems''
and GNAMPA--INdAM.

{\small
\def\polhk#1{\setbox0=\hbox{#1}{\ooalign{\hidewidth
  \lower1.5ex\hbox{`}\hidewidth\crcr\unhbox0}}} \def\cprime{$'$}
}
\end{document}